\theoremstyle{plain}
\newtheorem{proposition}{Proposition}
\newtheorem{theorem}{Theorem}
\newtheorem{corollary}{Corollary}
\newtheorem*{main}{Theorem}
\newtheorem{remark}{Remark}
\def\bma{{\bm a}}
\def\bmb{{\bm b}}
\def\bmc{{\bm c}}
\def\bmd{{\bm d}}
\def\bme{{\bm e}}
\def\bmf{{\bm f}}
\def\bmg{{\bm g}}
\def\bmh{{\bm h}}
\def\bmi{{\bm i}}
\def\bmj{{\bm j}}
\def\bmk{{\bm k}}
\def\bml{{\bm l}}
\def\bmn{{\bm n}}
\def\bmu{{\bm u}}
\def\bmx{{\bm x}}
\def\bmzero{{\bm 0}}
\def\bmone{{\bm 1}}
\def\bmtwo{{\bm 2}}
\def\bmthree{{\bm 3}}
\def\bmK{{\bm K}}
\def\bmL{{\bm L}}
\def\bmS{{\bm S}}
\def\bmT{{\bm T}}
\def\bmX{{\bm X}}
\def\bmalpha{{\bm \alpha}}
\def\bmbeta{{\bm \beta}}
\def\bmgamma{{\bm \gamma}}
\def\bmomega{{\bm \omega}}
\def\bmphi{{\bm \phi}}
\def\bmtau{{\bm \tau}}
\def\bmupsilon{{\bm \upsilon}}
\def\bmGamma{{\bm \Gamma}}
\def\bmpartial{{\bm \partial}}
\def\bmnabla{{\bm \nabla}}
\newcommand{\raisemath}[1]{\mathpalette{\raisem@th{#1}}}
\newcommand{\raisem@th}[3]{\raisebox{#1}{$#2#3$}}
\NewDocumentCommand{\newrbar}{O{0pt} O{0pt}}{
  \ensuremath{\mathrlap{\raisemath{#2}{\hspace*{#1}{\mathchar'26\mkern-9mu}}}r}}
\newcounter{mnotecount}
\newcommand{\mnotex}[1]
{\protect{\stepcounter{mnotecount}}$^{\mbox{\footnotesize $\bullet$\themnotecount}}$ 
\marginpar{
\raggedright\tiny\em
$\!\!\!\!\!\!\,\bullet$\themnotecount: #1} }
\newcounter{mnote}
\begin{document}

\title{\textbf{A conformal approach to the stability of Einstein
    spaces with spatial sections of negative scalar curvature}}
 
\author[1]{ Marica Minucci  \footnote{E-mail
    address:{\tt m.minucci@qmul.ac.uk}}}
\author[1]{Juan A. Valiente Kroon \footnote{E-mail address:{\tt j.a.valiente-kroon@qmul.ac.uk}}}

\affil[1]{School of Mathematical Sciences, Queen Mary, University of London,
Mile End Road, London E1 4NS, United Kingdom.}

\maketitle

\begin{abstract}
In this article, it is shown how the extended conformal Einstein field
equations and a gauge based on the properties of conformal geodesics
can be used to analyse the non-linear stability of de Sitter-like spacetimes
with spatial sections of negative scalar curvature.  This class of
spacetimes admits a smooth conformal extension with a space-like conformal
boundary. Central to the analysis is the use of conformal Gaussian
systems to obtain a hyperbolic reduction of the conformal Einstein
field equations for which standard Cauchy stability results for
symmetric hyperbolic systems can be employed. The use of conformal methods
allows us to rephrase the question of the global existence of solutions to
the Einstein field equations into considerations of finite existence
time for the conformal evolution system.
\end{abstract}


\section{Introduction}
In the Mathematical Relativity literature, for a \emph{Cosmological
  spacetime} it is usually understood a spacetime with compact spatial
sections. Understanding the long-time evolution of generic examples of
these spacetimes in, say the \emph{vacuum case}, is one of the open challenges in the area. Although
generic initial data is expected to form singularities towards the future, it is
nevertheless essential to address the stability of those solutions
which are known to be geodesically complete. The fundamental example
of a geodesically complete Cosmological spacetime is given by the
\emph{de Sitter spacetime}. Its non-linear stability was analysed in the
seminal work by Friedrich \cite{Fri86b,Fri86c}. A central aspect of
this result is the use of conformal methods to transform the question of
the global existence of solutions to a finite existence problem. An
alternative approach to the study of the non-linear stability of
vacuum Cosmological solutions to the Einstein field equations by means
of so-called \emph{CMC foliations} has been used by Andersson 
and Moncrief \cite{AndMon03,AndMon04}  to prove the non-linear stability of 4-dimensional
Friedmann-Lema\^{i}tre-Robinson-Walker (FLRW) vacuum solutions. Using
similar methods, in \cite{FajKro20} Fajman and Kr\"oncke studied the non-linear
stability of large classes of Cosmological solutions to the vacuum Einstein field
equations with a positive Cosmological constant in arbitrary
dimensions. These solutions are characterised by having spatial
sections with constant scalar curvature which can be either positive
or negative. \emph{The purpose of this article is to show that, in four
dimensions, the stability results for spacetimes with spatial sections
of constant negative curvature given in \cite{FajKro20}
can be addressed via a generalisation of the conformal methods developed by Friedrich
\cite{Fri86b,Fri91,Fri95,Fri15} ---see also \cite{CFEBook}.} The
analysis of the case positive constant curvature is essentially
contained in the original results in \cite{Fri86b} ---see also
\cite{LueVal09}. The use of conformal methods in the stability problem
considered in this article provides alternative information and
insights into the evolution of Cosmological spacetimes. 

\subsubsection*{Conformal methods in the analysis of de Sitter-like spacetimes}

In what follows, for a \emph{de Sitter-like spacetime} it is
understood a vacuum Einstein spacetime with a positive value of the
Cosmological constant and compact spatial sections.  General results
on conformal geometry show that \emph{if} these spacetimes admit a conformal
compactification \emph{\`{a} la} Penrose then the conformal boundary
of the spacetime must be space-like ---see e.g. \cite{CFEBook},
Theorem 10.1. Following the standard usage, we refer to the conformal
extension of a de Sitter-like spacetime as the \emph{unphysical
  spacetime}. The usefulness of this conformal extension lies in the
fact that points representing the infinity of the physical spacetime
(e.g. the endpoints if time-like geodesics) are mapped to a finite
location in the unphysical spacetime. These points are characterised by the vanishing of the conformal
factor. 

In the particular case considered in the present article, we
consider de Sitter-like spacetimes which can be conformally embedded
into a portion of a cylinder whose sections have negative scalar
curvature. The conformal embedding is realised by means of a conformal
factor $\Theta$ which depends quadratically on the affine parameter
$\tau$ of special curves which are invariants of the conformal
structure. These curves are known as \emph{conformal geodesics}, and
the affine parameter is used as a time coordinate for the physical
metric. 

Key in the conformal approach is that the unphysical metric provides a
solution to the \emph{conformal Einstein field equations} ---i.e. a
conformal representation of the vacuum Einstein field equations which
provides equations which are regular up and beyond the conformal
boundary \cite{Fri84,CFEBook}. These equations allow to avoid the difficulties produced by
the fact that the direct application of conformal transformation laws
into the Einstein equations leads to equations with singular terms. In
the present article we make use of a more general version of these
equations, the \emph{extended conformal Einstein field equations}
expressed in terms of a \emph{Weyl connection} ---i.e. a non-metric
torsion free connection which preserves the causal structure. This
version of the conformal equations allows the use of \emph{conformal Gaussian coordinate
systems} in which coordinates are propagated along conformal geodesics
---rather than along standard geodesics as is done in the usual
Gaussian systems. 

As already mentioned, the appeal of conformal methods in the study of
solutions to the Einstein field equations lies in the observation that
local results for the unphysical spacetime can, in principle, be
translated into global results for the physical
spacetime. In original formulation of the conformal Einstein field
equations the conformal factor realising the conformal embedding of
the physical spacetime into a compact manifold is an unknown of the
problem. However, remarkably, the use of conformal Gaussian coordinate
systems provide a natural conformal factor which
singles out a representative in the conformal class of the
spacetime. Accordingly, the location of the conformal boundary is
known \emph{a priori}, thus simplifying further the analysis of the
evolution equations. The extended conformal Einstein field equations expressed
in terms of a conformal Gaussian system can be shown to imply a
conformal evolution system which takes the form of a symmetric
hyperbolic system ---i.e. a class of evolution systems for which there
exists a well-developed existence, uniqueness and stability theory
\cite{Kat75}. 

\subsubsection*{The main result}

In the following, let $(\mathcal{S},\mathring{\bmgamma})$ denote a
compact and complete 3-dimensional Riemannian
manifold with negative constant curvature. Then the Lorentzian metric
given 
\begin{equation}
\label{BackgroundMetricIntro}
\mathring{\tilde{\bmg}} = -\mathbf{d}t\otimes\mathbf{d}t + \sinh^2t \mathring{\bmgamma}
\end{equation}
is an Einstein space over $\mathbb{R}\times\mathcal{S}$ which is
geodesically complete and for which the Cosmological constant takes
the value $\lambda=3$.  Our main result can be formulated, informally, as follows:

\begin{main}
Given smooth initial data $(\bmh,\bmK)$ for the Einstein field
equations on $\mathcal{S}$ which is 
suitably close (as measured by a suitable Sobolev norm) to the data
implied by the metric \eqref{BackgroundMetricIntro}, there exists a smooth metric
$\tilde{\bmg}$ defined over $[0,\infty)\times\mathcal{S}$ which is
close to $\mathring{\tilde{\bmg}}$ (again, in the sense of Sobolev
norms) and solves the vacuum Einstein field equations with
Cosmological constant $\lambda=3$\footnote{The choice of $\lambda=3$
  has been made to ease the presentation as it is customary in the
  mathematical Relativity literature. Howewer, our results readily
  extend to any (positive) value of the Cosmological constant. Indeed,
by means of a suitable constant conformal rescaling of the metric one
can obtain a solution to the vacuum Einstein field equations for an 
arbitrary positive value of the Cosmological constant ---see Remark
\ref{Remark:ValueLambda} for more details. }. The spacetime $([0,\infty)\times\mathcal{S},
\mathring{\tilde{\bmg}})$ is future geodesically complete.
\end{main}

A precise formulation of the result is given in Theorem \ref{Theorem:Main} in
Section \ref{Section:MainTheorem}. The construction of the initial
data required in the above result has been analysed in
\cite{ValWil20}. 

\medskip
This article is part of the general programme in mathematical
Relativity to understand the endpoint of the evolution of
“Cosmological spacetimes” (i.e. spacetimes with compact sections)
under the Einstein field equations (the so-called Einstein flow) and
it should be understood in this context. In particular, it identifies
a class of of spacetimes for which it is possible to show non-linear
stability and the existence of a regular conformal
representation. These special properties are not shared by generic
Cosmological solutions. Thus, it is important to identify the
situations for which this is the case. Moreover, our analysis may
allow to identify classes of solutions for which it may possible to
show global existence without the requirement of smallness used in our
perturbative analysis. Given that current Cosmological observations
indicate that the spatial sections of our Universe are flat, we do not
expect any direct physical application to physical Cosmology. 

\subsection*{Outline of the article}
The present article is structured as follows: Section
\ref{Section:XCFE} provides the required background on the extended
conformal Einstein field equations required for the analysis in this
article ---this discussion is not only restricted to the equations but
also involves the associated constraint equations and the notion of
conformal geodesics which will be used to fix the gauge. Section
\ref{Section:BackgroundSolution} provides an analysis of the
background spacetimes (Einstein spaces with spatial sections of
constant negative curvature) in the light of the conformal Einstein
field equations. In particular, this section gives a conformal
extension of these spacetimes arising from a certain class congruences of
conformal geodesics. Section \ref{Section:EvolutionEqns} provides an
analysis of the conformal evolution system which will be used in the
main stability argument and its relation to the actual extended
conformal Einstein field equations, including the so-called
\emph{propagation of the constraints}. Section
\ref{Section:InitialData} contains a brief discussion of the initial
data for the conformal evolution equations and how it can be
constructed. Section \ref{Section:Existence} contains the main
existence and stability analysis of the conformal evolution
equations. Section \ref{Section:GeodesicCompleteness} provides a discussion of the future geodesic
completeness of the perturbed spacetimes. Finally, Section \ref{Section:MainTheorem}
contains a precise statement of the main Theorem of this article and
some concluding remarks. In addition, the article contains two
appendices: Appendix \ref{Appendix:KatoThm} provides a summary of the
main technical tool in this article ---Kato's existence and stability
result for symmetric hyperbolic systems. Appendix
\ref{Appendix:GeodesicCompleteness} provides a brief discussion of the
geodesic completeness of the background solutions.  

\subsection*{Notations and conventions}
Throughout we mostly follow the notations and conventions of \cite{CFEBook}
except from the fact that the sign of the Cosmological constant for de
Sitter-like spacetimes is taken to be positive. The signature of
Lorentzian metrics is taken to be $(-+++)$.  Throughout the Latin
letters $a,\,b,\,c,\,\ldots$ denote spacetime abstract indices
indicating the tensorial character of the various objects while the
letters $i,\,j,\,k,\ldots$ correspond to spatial abstract indices. The
boldface indices $\bma,\,\bmb,\,\bmc,\ldots$ will be used as spacetime
frame indices ranging $\bmzero,\, \bmone,\,\bmtwo, \,\bmthree$ while
$\bmi,\, \bmj,\, \bmk,\ldots$ range over $\bmone,\, \bmtwo,
\,\bmthree$. The Greek indices $\mu,\, \nu,\, \lambda,\ldots$ play the
role of spacetime coordinate indices and $\alpha, \, \beta,\,
\gamma,\ldots$ are spatial coordinate indices. In addition to the
index notation described above, when convenient, we also make use of
an index-free notation ---e.g. a metric tensor can be described,
alternatively, by $\bmg$ or $g_{ab}$. Associated to a given metric
$\bmg$ we also make use of the \emph{musical isomorphisms} ${}^\sharp$
and ${}^\flat$ to denote the raising and lowering of indices of
tensorial objects away from their natural position.

Our conventions for the
curvature are given by the equation
\[
\nabla_a\nabla_b v^c -\nabla_b\nabla_a v^c = R^c{}_{dab} v^d.
\]

\section{The extended conformal Einstein field equations}
\label{Section:XCFE}
The main technical tool of this article is given by the \emph{extended
  conformal Einstein field equations} ---see \cite{Fri95,Fri98a}; also
\cite{CFEBook}. This system of equations constitute a conformal
representation of the vacuum Einstein field equations written in terms
of \emph{Weyl connections}. A solution to the extended conformal
equations implies a solution to the vacuum Einstein field equations
away from the conformal boundary.  In this section we provide a brief discussion of 
this system geared towards the applications of this article. A derivation and further discussion of the
general properties of these equations can be found in \cite{CFEBook},
Chapter 8.

\medskip
Throughout this article let $(\tilde{\mathcal{M}},\tilde{\bmg})$ with
$\tilde{\mathcal{M}}$  a 4-dimensional manifold and $\tilde{\bmg}$ a
Lorentzian metric denote a vacuum spacetime satisfying the Einstein field equations with
Cosmological constant
\begin{equation}
\tilde{R}_{ab} =\lambda \tilde{g}_{ab}.
\label{EFE}
\end{equation}
Let $\bmg$ denote an unphysical Lorentzian metric conformally related
to $\tilde\bmg$ via the relation
\[
\bmg = \Xi^2 \tilde\bmg
\]
with $\Xi$ a suitable conformal factor. Let $\nabla_a$ and
$\tilde{\nabla}_a$ denote, respectively, the Levi-Civita connections
of the metrics $\bmg$ and $\tilde\bmg$. 

\subsection{Weyl connections}
A Weyl connection is a torsion-free connection $\hat{\nabla}_a$ such
that
\[
\hat{\nabla}_a g_{bc} =-2 f_a g_{bc}.
\]
It follows from the above that the connections $\nabla_a$ and
$\hat{\nabla}_a$ are related to each other by
\begin{equation}
\hat{\nabla}_av^b-\nabla_av^b = S_{ac}{}^{bd}f_dv^c, \qquad
S_{ac}{}^{bd}\equiv \delta_a{}^b\delta_c{}^d +
\delta_a{}^d\delta_c{}^b-g_{ac}g^{bd},
\label{WeylToUnphysical}
\end{equation}
where $f_a$ is a fixed smooth covector and $v^a$ is an arbitrary
vector.  Given that 
\[
\nabla_a v^b -\tilde{\nabla}_a v^b =  S_{ac}{}^{bd} (\Xi^{-1} \nabla_a\Xi)v^c,
\]
one has that 
\[
\hat{\nabla}_av^b -\tilde{\nabla}_av^b =  S_{ac}{}^{bd}\beta_d v^c,
\qquad \beta_d \equiv f_d +\Xi^{-1}\nabla_d\Xi.
\]
In the following, it will be convenient to define
\begin{equation}
d_a \equiv \Xi f_a + \nabla_a \Xi. 
\label{Definition:CovectorD}
\end{equation}

\medskip
In the following $\hat{R}^a{}_{bcd}$ and $\hat{L}_{ab}$ will denote,
respectively, the Riemann tensor and Schouten tensor of the Weyl
connection $\hat{\nabla}_a$. Observe that for a generic Weyl
connection one has that $\hat{L}_{ab}\neq \hat{L}_{ba}$. One has the
decomposition
\[
\hat{R}{}^c{}_{dab} = 2 S_{d[a}{}^{ce}\hat{L}_{b]e} + C^c{}_{dab},
\] 
where $C^c{}_{dab}$ denotes the conformally invariant \emph{Weyl
  tensor}. The (vanishing) torsion of $\hat{\nabla}_a$ is denoted by
$\hat{\Sigma}_\bma{}^\bmc{}_\bmb$. In the context of the conformal
Einstein field equations it is convenient to define the \emph{rescaled
  Weyl tensor} $d^c{}_{dab}$ via the relation
\[
d^c{}_{dab} \equiv \Xi^{-1} C^c{}_{dab}.
\]

\subsubsection{A frame formalism}
Let $\{ \bme_\bma
\}$, $\bma=\bmzero,\ldots,\bmthree$ denote a $\bmg$-orthogonal frame
with associated coframe $\{ \bmomega^\bma \}$. Thus, one has that
\[
\bmg(\bme_\bma,\bme_\bmb)=\eta_{\bma\bmb}, \qquad \langle
  \bmomega^\bma,\bme_\bmb\rangle =\delta_\bmb{}^\bma.
\]
Given a vector $v^a$, its components with respect to the frame $\{ \bme_\bma
\}$ are denoted by $v^\bma$. 

Let
  $\Gamma_\bma{}^\bmc{}_\bmb$ and $\hat{\Gamma}_\bma{}^\bmc{}_\bmb$
  denote, respectively, the connection coefficients of $\nabla_\bma$
  and $\hat{\nabla}_a$ with respect to the frame $\{ \bme_\bma \}$. It
  follows then from equation \eqref{WeylToUnphysical} that
\[
\hat{\Gamma}_\bma{}^\bmc{}_\bmb = \Gamma_\bma{}^\bmc{}_\bmd + S_{\bma\bmb}{}^{\bmc\bmd}f_\bmd.
\]
In particular, one has that 
\[
f_\bma = \frac{1}{4}\hat{\Gamma}_\bma{}^\bmb{}_\bmb.
\]
Denoting by $\partial_\bma\equiv \bme_\bma{}^\mu\partial_\mu$ the
directional partial derivative in the direction of $\bme_\bma$, it
follows then that
\begin{eqnarray*}
&& \nabla_\bma T^\bmb{}_\bmc \equiv  e_\bma{}^a \omega^\bmb{}_b\omega^\bmc{}_c(\nabla_a
   T^b{}_c),\\
&&\phantom{\nabla_\bma T^\bmb{}_\bmc}=\partial_\bma T^\bmb{}_\bmc +
   \Gamma_\bma{}^\bmb{}_\bmd T^\bmd{}_\bmc -\Gamma_\bma{}^\bmd{}_\bmc T^\bmb{}_\bmd,
\end{eqnarray*}
with the natural extensions for higher rank tensors and other
covariant derivatives.

\subsection{The frame version of the extended conformal Einstein
  field equations}
In this article we will make use of a frame version of the extended
conformal Einstein field equations. In order to formulate these equations
it is convenient to define the following \emph{zero-quantities}:
\begin{subequations}
\begin{eqnarray} 
&& \hat{\Sigma}{}_\bma{}^\bmc{}_\bmb \bme_\bmc\equiv [\bme_\bma, \bme_\bmb] - (\hat{\Gamma}{}_\bma{}^\bmc{}_\bmb- \hat{\Gamma}{}_\bmb{}^\bmc{}_\bma)e_\bmc, \label{ecfe1}\\
&&\hat{\Xi}{}^\bmc{}_{\bmd\bma\bmb}\equiv \hat{P}{}^\bmc{}_{\bmd\bma\bmb} - \hat{\rho}{}^\bmc{}_{\bmd\bma\bmb}, \label{ecfe2} \\
&&\hat{\Delta}{}_{\bmc\bmd\bmb} \equiv \nabla_\bmc \hat{L}{}_{\bmd\bmb} - \nabla_\bmd
   \hat{L}{}_{\bmc\bmb} - d_\bma \hat{d}{}^\bma{}_{\bmb\bmc\bmd}, \label{ecfe3} \\
&&\Lambda{}_{\bmb\bmc\bmd} \equiv \nabla_\bma
  d^\bma{}_{\bmb\bmc\bmd} -f_\bma d^\bma{}_{\bmb \bmc \bmd}, \label{ecfe4}
\end{eqnarray}
\end{subequations}
where the components of the \emph{geometric curvature} $\hat{P}{}^\bmc{}_{\bmd\bma\bmb}$ and the
\emph{algebraic curvature} $\hat{\rho}{}^\bmc{}_{\bmd\bma\bmb}$ are given, respectively, by
\begin{eqnarray*}
&& \hat{P}{}^\bmc{}_{\bmd\bma\bmb} \equiv  \partial_\bma (\hat{\Gamma}{}_\bmb{}^\bmc{}_\bmd)- \partial_\bmb (\hat{\Gamma}{}_\bma{}^\bmc{}_\bmd) + \hat{\Gamma}{}_\bmf{}^\bmc{}_\bmd(\hat{\Gamma}{}_\bmb{}^\bmf{}_\bma - \hat{\Gamma}{}_\bma{}^\bmf{}_\bmb) + \hat{\Gamma}{}_\bmb{}^\bmf{}_\bmd \hat{\Gamma}{}_\bma{}^\bmc{}_\bmf - \hat{\Gamma}{}_\bma{}^\bmf{}_\bmd \hat{\Gamma}{}_\bmb{}^\bmc{}_\bmf,\\
&& \hat{\rho}{}^\bmc{}_{\bmd\bma\bmb} \equiv \Xi \hat{d}{}^\bmc{}_{\bmd\bma\bmb} + 2 {S}{}_{\bmd
   [\bma}{}^{\bmc\bme}\hat{ L}{}_{\bmb]\bme}, 
\end{eqnarray*}
where $\hat{L}_{\bma\bmb}$ and $d^\bmc{}_{\bmd\bma\bmb}$ denote,
respectively, the components of the Schouten tensor of
$\hat{\nabla}_a$ and the rescaled Weyl tensor with respect to the
frame $\{ \bme_\bma \}$. In terms of the zero-quantities \eqref{ecfe1}-\eqref{ecfe4}, the
\emph{extended vacuum conformal Einstein field equations} are given by
the conditions
\begin{equation}
 \hat{\Sigma}{}_\bma{}^\bmc{}_\bmb\bme_\bmc=0, \qquad \hat{\Xi}{}^\bmc{}_{\bmd\bma\bmb}=0,
 \qquad  \hat{\Delta}{}_{\bmc\bmd\bmb}=0, \qquad
 \hat{\Lambda}{}_{\bmb\bmc\bmd}=0. \label{ecfe5}
\end{equation}
In the above equations the fields $\Xi$ and $d_\bma$ ---cfr.
\eqref{Definition:CovectorD}--- are regarded as \emph{conformal gauge
  fields} which are determined by supplementary conditions. In the
present article these gauge conditions will be determined through
conformal geodesics ---see Subsection
\ref{Subsection:ConformalGeodesics} below. In order to account for
this it is convenient to define
\begin{subequations}
\begin{eqnarray}
&& \delta_\bma  \equiv d_\bma -\Xi f_\bma -\hat{\nabla}_\bma\Xi, \label{Supplementary1}\\
&& \gamma_{\bma\bmb} \equiv \hat{L}_{\bma\bmb}
   -\hat{\nabla}_\bma(\Xi^{-1} d_\bmb) -\frac{1}{2}\Xi^{-2}
   S_{\bma\bmb}{}^{\bmc\bmd}d_\bmc d_\bmd +
   \frac{1}{6}\lambda\Xi^{-2}\eta_{\bma\bmb}, \label{Supplementary2}\\
&& \varsigma_{\bma\bmb} \equiv \hat{L}_{[\bma\bmb]}
   -\hat{\nabla}_{[\bma} f_{\bmb]}. \label{Supplementary3}
\end{eqnarray}
\end{subequations}
The conditions
\begin{equation}
\delta_\bma =0, \qquad \gamma_{\bma\bmb}=0, \qquad
\varsigma_{\bma\bmb}=0,
\label{XCFESupplementary}
\end{equation}
will be called the \emph{supplementary conditions}. They play a role
in relating the Einstein field equations to the extended conformal
Einstein field equations and also in the propagation of the constraints.

\medskip
The correspondence between the Einstein field equations and the
extended conformal Einstein field equations is given by the following
---see Proposition 8.3 in \cite{CFEBook}:

\begin{proposition}
Let  
\[
(\bme_\bma, \, \hat{\Gamma}_\bma{}^\bmb{}_\bmc,
\hat{L}_{\bma\bmb},\,d^\bma{}_{\bmb\bmc\bmd})
\]
 denote a solution to
the extended conformal Einstein field equations \eqref{ecfe5} for some
choice of the conformal gauge fields $(\Xi,\,d_\bma)$ satisfying the
supplementary conditions \eqref{XCFESupplementary}. Furthermore,
suppose that
\[
 \Xi\neq 0, \qquad \det
(\eta^{\bma\bmb}\bme_\bma\otimes\bme_\bmb)\neq0
\]
 on an open subset
$\mathcal{U}$. Then the metric
\[
\tilde{\bmg}= \Xi^{-2} \eta_{\bma\bmb} \bmomega^\bma\otimes\bmomega^\bmb
\]
is a solution to the Einstein field equations \eqref{EFE} on
$\mathcal{U}$. 
\end{proposition}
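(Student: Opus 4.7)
The plan is to unpack the vanishing of the four zero-quantities in \eqref{ecfe5} together with the three supplementary conditions \eqref{XCFESupplementary} and read off step by step the geometric content of the resulting connection and curvature tensors. On the open set $\mathcal{U}$ where $\Xi\neq 0$ and the frame is nondegenerate, define $\bmg \equiv \eta_{\bma\bmb}\bmomega^\bma\otimes\bmomega^\bmb$ and $\tilde{\bmg} \equiv \Xi^{-2}\bmg$; the goal is to show that $\tilde{\bmg}$ satisfies \eqref{EFE}.

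First I would extract the connection structure. From $\hat{\Sigma}_\bma{}^\bmc{}_\bmb\bme_\bmc=0$ in \eqref{ecfe1} the coefficients $\hat{\Gamma}_\bma{}^\bmc{}_\bmb$ are torsion-free, so they define a bona fide torsion-free connection $\hat{\nabla}_a$ on $\mathcal{U}$. The supplementary condition $\delta_\bma=0$ fixes $f_\bma=\Xi^{-1}(d_\bma-\hat{\nabla}_\bma\Xi)$, and then \eqref{WeylToUnphysical} recognises the trace-free shift from $\hat{\nabla}_a$ to the Levi-Civita connection $\nabla_a$ of $\bmg$, so that $\hat{\nabla}_a$ is a Weyl connection for $\bmg$ with covector $f_a$. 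At this point the frame and the connection on $\mathcal{U}$ have all the structure postulated in the derivation of the extended conformal field equations.

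Next I would interpret the vanishing of $\hat{\Xi}^\bmc{}_{\bmd\bma\bmb}$. The identity $\hat{P}^\bmc{}_{\bmd\bma\bmb}=\hat{\rho}^\bmc{}_{\bmd\bma\bmb}$ is precisely the standard decomposition of the Riemann tensor of $\hat{\nabla}_a$,
\[
\hat{R}^\bmc{}_{\bmd\bma\bmb}=2S_{\bmd[\bma}{}^{\bmc\bme}\hat{L}_{\bmb]\bme}+C^\bmc{}_{\bmd\bma\bmb},
\]
so it simultaneously certifies that the field $\hat{L}_{\bma\bmb}$ in the unknowns is the genuine Schouten tensor of $\hat{\nabla}_a$ and that $\Xi d^\bmc{}_{\bmd\bma\bmb}$ coincides with the Weyl tensor. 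The residual supplementary condition $\varsigma_{\bma\bmb}=0$ pins down the antisymmetric part of $\hat{L}_{\bma\bmb}$ in terms of $\hat{\nabla}_{[\bma}f_{\bmb]}$, which is the standard Weyl-connection identity and hence is automatic once $\hat{\nabla}_a$ has been identified as above.

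The final step is to convert the information into an equation for $\tilde{\bmg}$. I would combine the known conformal transformation law relating $\hat{L}_{\bma\bmb}$ to the Schouten tensor $\tilde{L}_{\bma\bmb}$ of $\tilde{\bmg}$ (obtained by composing the $\bmg\to\tilde{\bmg}$ conformal change with the Weyl shift by $f_\bma$, and using $\delta_\bma=0$ to rewrite everything in terms of $\Xi$ and $d_\bma$) with the symmetric supplementary condition $\gamma_{\bma\bmb}=0$. The resulting algebraic identity collapses precisely to
\[
\tilde{L}_{\bma\bmb}=\tfrac{1}{6}\lambda\, \tilde{g}_{\bma\bmb},
\]
which in four dimensions is equivalent to $\tilde{R}_{\bma\bmb}=\lambda \tilde{g}_{\bma\bmb}$. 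The remaining zero-quantities $\hat{\Delta}_{\bmc\bmd\bmb}=0$ and $\Lambda_{\bmb\bmc\bmd}=0$ encode the second Bianchi identity and the Cotton equation; they are not needed to conclude the vacuum Einstein equation in this direction but are of course the compatibility conditions that make the full system consistent and will be essential for the propagation of the constraints later on. The only subtle part of the argument is the bookkeeping of signs, traces and factors in the chain of conformal/Weyl transformation formulas for the Schouten tensors, together with the careful use of $\Xi\neq 0$ and the frame nondegeneracy on $\mathcal{U}$; modulo this, the proof is purely algebraic.
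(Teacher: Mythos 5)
Your sketch is correct and follows essentially the same route as the proof the paper relies on (the paper does not prove this proposition itself but cites Proposition 8.3 of \cite{CFEBook}): the no-torsion condition and the algebraic structure of $\hat{\Gamma}_\bma{}^\bmb{}_\bmc$ identify $\hat{\nabla}_a$ as a Weyl connection, $\hat{\Xi}{}^\bmc{}_{\bmd\bma\bmb}=0$ identifies $\hat{L}_{\bma\bmb}$ as its genuine Schouten tensor, and $\gamma_{\bma\bmb}=0$ read against the Weyl/conformal transformation law of the Schouten tensor yields $\tilde{L}_{\bma\bmb}=\tfrac{1}{6}\lambda\tilde{g}_{\bma\bmb}$, i.e.\ the Einstein equations, with $\hat{\Delta}_{\bmc\bmd\bmb}=0$ and $\Lambda_{\bmb\bmc\bmd}=0$ indeed not needed in this direction. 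The only caveat is that the identification of $\nabla_a$ as the Levi-Civita connection of $\bmg$ rests on the assumed algebraic form $\hat{\Gamma}_\bma{}^\bmc{}_\bmb=\Gamma_\bma{}^\bmc{}_\bmb+S_{\bma\bmb}{}^{\bmc\bmd}f_\bmd$ of the unknowns rather than on $\delta_\bma=0$, but this is built into the formalism and does not affect the argument.
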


\subsection{The conformal constraint equations}
\label{Subsection:ConformalConstraints}
The analysis in this article will make use of the \emph{conformal
  constraint Einstein equations}
---i.e. the intrinsic equations implied by the (standard) vacuum conformal
Einstein field equations on a spacelike hypersurface. A derivation of
these equations in its frame form can be found in \cite{CFEBook},
Section 11.4. 

\medskip
Let $\mathcal{S}$ denote a spacelike hypersurface in an unphysical
spacetime $(\mathcal{M},\bmg)$. In the following let $\{ \bme_\bma \}$ denote a $\bmg$-orthonormal
frame adapted to $\mathcal{S}$. That is, the vector $\bme_\bmzero$ is
chosen to coincide with the unit normal vector to the hypersurface and
while the spatial vectors $\{ \bme_\bmi \}$, $\bmi=\bmone, \,
\bmtwo,\, \bmthree$ are intrinsic to $\mathcal{S}$. In our signature
conventions we have that $\bmg(\bme_\bmzero,\bme_\bmzero)=-1$. The extrinsic
curvature is described by the components $\chi_{\bmi\bmj}$ of the
Weingarten tensor. One has that $\chi_{\bmi\bmj}=\chi_{\bmj\bmi}$ and,
moreover
\[
\chi_{\bmi\bmj} = -\Gamma_\bmi{}^\bmzero{}_\bmj.
\]
We denote by $\Omega$ the restriction of the spacetime conformal
factor $\Xi$ to $\mathcal{S}$ and by $\Sigma$ the normal component of
the gradient of $\Xi$. The field $l_{\bmi\bmj}$ denote the components of the
Schouten tensor of the induced metric $h_{ij}$ on $\mathcal{S}$. 

\smallskip
With the above conventions, the conformal constraint equations in the vacuum case are given by
---see \cite{CFEBook}:
\begin{subequations}
\begin{eqnarray}
&& \label{co1}  D_\bmi D_\bmj \Omega =  \Sigma \chi_{\bmi\bmj} - \Omega L_{\bmi\bmj} + s h_{\bmi\bmj}, \\
&& \label{co2} D_\bmi \Sigma= {\chi_\bmi}^\bmk D_\bmk \Omega - \Omega L_\bmi, \\
&& \label{co3} D_\bmi s=  L_\bmi \Sigma - L_{\bmi\bmk} D^\bmk \Omega, \\
&&\label{co4} D_\bmi L_{\bmj\bmk} - D_\bmj L_{\bmi\bmk}=  \Sigma d_{\bmk\bmi\bmj} + D^\bml \Omega d_{\bml\bmk\bmi\bmj}  (\chi_{\bmi\bmk} L_\bmj - \chi_{\bmj\bmk} L_\bmi), \\
&&\label{co5} D_\bmi L_\bmj - D_\bmj L_\bmi=  D^\bml \Omega d_{\bml\bmi\bmj} +{\chi_\bmi}^\bmk L_{\bmj\bmk} - {\chi_\bmj}^\bmk L_{\bmi\bmk}, \\
&&\label{co6} D^\bmk d_{\bmk\bmi\bmj}= - ({\chi^\bmk}_\bmi d_{\bmj\bmk} - {\chi^\bmk}_\bmj d_{\bmi\bmk}), \\
&&\label{co7} D^\bmi d_{\bmi\bmj}=\chi^{\bmi\bmk} d_{\bmi\bmj\bmk}, \\
&&\label{co8} \lambda= 6 \Omega s + 3 \Sigma^2 - 3 D_\bmk \Omega D^\bmk \Omega, \\
&&\label{co9} D_\bmj \chi_{\bmk\bmi} - D_\bmk \chi_{\bmj\bmi} =\Omega d_{\bmi\bmj\bmk} + h_{\bmi\bmj} L_\bmk - h_{\bmi\bmk} L_\bmj, \\
&&\label{co10}l_{\bmi\bmj}= \Omega d_{\bmi\bmj} + L_{\bmi\bmj} - \chi (
 \chi_{\bmi\bmj} - \frac{1}{4} \chi h_{\bmi\bmj} ) + \chi_{\bmk\bmi}{\chi_\bmj}^\bmk -
\frac{1}{4}\chi_{\bmk\bml}\chi^{\bmk\bml} h_{\bmi\bmj},
\end{eqnarray}
\end{subequations}
with the understanding that 
\[
h_{\bmi\bmj}\equiv g_{\bmi\bmj}=\delta_{\bmi\bmj}
\]
 and where we have defined
\[
L_\bmi\equiv L_{\bmzero\bmi}, \qquad d_{\bmi\bmj}\equiv
d_{\bmzero\bmi\bmzero\bmj}, \qquad d_{\bmi\bmj\bmk}\equiv d_{\bmi\bmzero\bmj\bmk}.
\]
The fields $d_{\bmi\bmj}$ and $d_{\bmi\bmj\bmk}$ correspond,
respectively, to the electric and magnetic parts of the rescaled Weyl
tensor. The scalar $s$ denotes the \emph{Friedrich scalar} defined as
\[
s \equiv \frac{1}{4}\nabla_a\nabla^a \Xi + \frac{1}{24}R \Xi,
\]
with $R$ the Ricci scalar of the metric $\bmg$. Finally,
$L_{\bmi\bmj}$ denote the spatial components of the Schouten tensor of
$\bmg$.

\section{Properties of the background solution}
\label{Section:BackgroundSolution}
In the following let $(\tilde{\mathcal{M}},\mathring{\tilde{\bmg}})$
denote the solution to the vacuum Einstein field equations with
positive Cosmological constant
\begin{equation}
\tilde{R}_{ab} =3 \tilde{g}_{ab}, 
\label{EFE}
\end{equation}
given by  $\tilde{\mathcal{M}}=\mathbb{R}\times\mathcal{S}$ and
\begin{equation}	
\mathring{\tilde{\bmg}} = -\mathbf{d} t \otimes \mathbf{d} t + \sinh^2 t \; \mathring{\bmgamma}
\label{BackgroundPhysicalMetric}
\end{equation}
where $\mathring{\bmgamma}$ is a (positive definite) Riemannian metric of constant negative curvature
over a compact manifold $\mathcal{S}$ such that
\[
r[\mathring{\bmgamma}]=-6.
\]
The spacetime $(\tilde{\mathcal{M}},\mathring{\tilde{\bmg}})$ is
future geodesically complete ---see Appendix \ref{Appendix:GeodesicCompleteness}.

\begin{remark}
\label{Remark:ValueLambda}
{\em The value $\lambda=3$ for the Cosmological constant is
  conventional and set for convenience. The analysis in this article
  can be carried out for any other (positive) value. Indeed, given
  $\lambda>0$ define the metric $\bar{g}_{ab}$ via the relation
\[
\bar{g}_{ab} = \frac{3}{\lambda} g_{ab}.
\] 
As this is a constant conformal rescaling, the Ricci tensor is
invariant ---i.e. $\bar{R}_{ab} =R_{ab}$; see e.g. equation  (5.6a) in
page 116 in
\cite{CFEBook}. It follows then that equation \eqref{EFE} implies
\[
\bar{R}_{ab} =\lambda \bar{g}_{ab}.
\]
}
\end{remark}

\begin{remark}
{\em The existence of compact 3-manifolds with constant negative
  scalar curvature has been analysed in the mathematical
  literature ---see \cite{Kap94}. These 3-manifolds are \emph{locally isometric} to quotients
of the hyperbolic space $\mathbb{H}^3$. The admissible topologies are
discussed in \cite{Bes08}. This class of manifolds is sometimes called
\emph{conformally rigid hyperbolic manifolds} as, despite being
conformally flat, they do not admit
globally defined conformal Killing vectors nor non-trivial tracefree
Codazzi-tensors. These properties play a crucial role in the
perturbative construction of initial data for the conformal evolution
system as discussed in Section \ref{Section:InitialData}.  }
\end{remark}

The Riemann curvature tensor $r^i{}_{jkl}[\mathring{\bmgamma}]$ of $\mathring{\bmgamma}$ is given by
\[
r_{ijkl}[\mathring{\gamma}] =
\mathring{\gamma}_{il}\mathring{\gamma}_{jk} - \mathring{\gamma}_{ik}\mathring{\gamma}_{jl}.
\]
From the above expressions it follows that
\[
\tilde{R}=12,
\]
so that 
\begin{equation}
\tilde{L}_{ab} = \frac{1}{2}\tilde{g}_{ab}
\label{PhysicalBackgroundSchouten}
\end{equation}

\medskip
In the following, a spacetime of the form given by 
$(\tilde{\mathcal{M}},\mathring{\tilde{\bmg}})$ will be known as a
\emph{background solution}. In the rest of this section we will
perform an analysis of this class of solutions to the Einstein field
equations from the point of view of conformal geometry. In particular,
we will make use of conformal geodesics to provide a \emph{canonical}
conformal extension ---see Proposition
\ref{Proposition:ConformalGeodesicsConformalFactor}.

\subsection{A class of conformal geodesics}
\label{Subsection:AClassOfCG}
In the following we will consider (metric) geodesics $x(s)$ on $(\tilde{\mathcal{M}},\mathring{\tilde{\bmg}})$ whose
tangent vector is proportional to $\bmpartial_t$ ---i.e. 
$\dot{\bmx} =\alpha \bmpartial_t$ for some proportionality function
$\alpha$ and where the overdot denotes
differentiation with respect to the affine parameter $s\in
\mathbb{R}$. The geodesic equation
\[
\tilde{\nabla}_{\dot{\bmx}} \dot{\bmx}=0
\]
implies that
\begin{eqnarray*}
&&\tilde{\nabla}_{\bmpartial_t}(\alpha\bmpartial_t)= \tilde{\nabla}_t\alpha + \tilde{\nabla}_{\bmpartial_t}\bmpartial_t\\
&&\phantom{\tilde{\nabla}_{\bmpartial_t}(\alpha\bmpartial_t)}=  \tilde{\nabla}_t\alpha + \Gamma_t{}^\mu{}_t \bmpartial_\mu.
\end{eqnarray*}
A direct calculation for the metric \eqref{BackgroundPhysicalMetric}
shows that $\Gamma_t{}^\mu{}_t=0$ so that one concludes that
$\partial_t \alpha=0$ ---that is, $\alpha$ is constant along the integral curves of
$\bmpartial_t$. Without loss of generality we then set
$\alpha=1$ so that $\bmg(\dot{\bmx},\dot{\bmx})=-1$. In summary, we
have that the curves
\[
x(t) = (t, \underline{x}_\star), \qquad \underline{x}_\star\in \mathcal{S},
\]
are non-intersecting timelike $\tilde{\bmg}$-geodesics over
$\tilde{\mathcal{M}}$. In a slight abuse of notation the coordinate
$t$ has been used as parameter of the curve.

\subsubsection{Conformal geodesics}
\label{Subsection:ConformalGeodesics}
The extended conformal Einstein field
equations are naturally suited to the use of a gauge based on conformal geodesics. 

\smallskip
A \emph{conformal geodesic} on a spacetime
$(\tilde{\mathcal{M}},\tilde{\bmg})$ is a pair
$(x(\tau),\tilde\bmbeta(\tau))$ consisting of a curve $x(\tau)$ on
$\tilde{\mathcal{M}}$, with parameter $\tau\in I\subset \mathbb{R}$,
tangent $\dot{\bmx}(\tau)$ and a covector $\tilde\bmbeta(\tau)$ along
$x(\tau)$ satisfying the equations (in index-free notation)
\begin{subequations}
\begin{eqnarray}
&& \tilde{\nabla}_{\dot{\bmx}} \dot{\bmx} = -2 \langle \tilde\bmbeta,
   \dot{\bmx}\rangle \dot{\bmx} +\tilde{\bmg}(\dot{\bmx},\dot{\bmx})
   \tilde\bmbeta^\sharp, \label{CGEqn1}\\
&& \tilde{\nabla}_{\dot{\bmx}}\tilde\bmbeta =\langle \tilde\bmbeta,\dot{\bmx}
   \rangle\tilde\bmbeta
   -\frac{1}{2}\tilde{\bmg}^\sharp(\tilde\bmbeta,\tilde\bmbeta)\dot{\bmx}^{\flat}
   +\tilde{\bmL}(\dot{\bmx},\cdot), \label{CGEqn2}
\end{eqnarray}
\end{subequations}
where $\tilde{\bmL}$ denotes the Schouten tensor of the Levi-Civita
connection $\tilde{\nabla}_a$. Associated to a conformal geodesic, it
is natural to consider a frame $\{ \bme_\bma \}$ which is \emph{Weyl
  propagated} along $x(\tau)$ according to the law
\begin{equation}
\tilde{\nabla}_{\dot{\bmx}} \bme_\bma = -\langle \tilde\bmbeta,
\bme_\bma\rangle\dot{\bmx} -\langle \tilde\bmbeta,\dot{\bmx} \rangle
\bme_\bma +\tilde{\bmg}(\bme_\bma,\dot{\bmx})\tilde\bmbeta^\sharp.
\label{CGEqn3}
\end{equation}

\subsubsection{Reparametrisation as conformal geodesic}
In the following, we will make use of the methods in the proof of Lemma 5.2 in
  \cite{CFEBook} to recast the family of geodesics discussed in
  Subsection \ref{Subsection:AClassOfCG} as conformal
  geodesics. Accordingly, we consider a reparametrisation of the form
\[
\tau \mapsto t(\tau),
\]
while we look for a 1-form $\tilde{\bmbeta}$ given by the Ansatz
\[
\tilde\bmbeta = \alpha(\tau) {\bmx}^{\prime\flat} = \alpha(t) \mathbf{d}t,
\]
where ${}^\prime$ denotes derivatives with respect to $s$. From the chain rule it follows that
\[
\dot{\bmx} = \frac{\mbox{d}t}{\mbox{d}\tau} \frac{\mbox{d}x}{\mbox{d}t} =
\dot{t} \bmx', \qquad \dot{t}\equiv \frac{\mbox{d}t}{\mbox{d}\tau}.
\]
In particular, one readily has that
\[
\tilde{\nabla}_{\dot{\bmx}} \dot{\bmx} = \dot{t}^{2}
\tilde{\nabla}_{{\bmx}'} {\bmx}' + \ddot{t} {\bmx}'.
\]
Substituting the previous expressions into equations
\eqref{CGEqn1} and
\eqref{CGEqn2}, taking into account expression
\eqref{PhysicalBackgroundSchouten} for the components of the Schouten
tensor one obtains the system of ordinary differential equations
\begin{subequations}
\begin{eqnarray}
&& \ddot{t} + \alpha \dot{t}^{2} =0, \label{CGRescaling1} \\
&& \dot{\alpha} = \frac{1}{2}\dot{t}(\alpha^2-1). \label{CGRescaling2}
\end{eqnarray}
\end{subequations}
The general solution to the above system can be found to be
\begin{eqnarray*}
&& \alpha(\tau) = c_1\tau + c_2, \\
&& t(\tau) = -2 \mbox{arctanh}(c_1 \tau +c_2) +c_3,
\end{eqnarray*}
with $c_1,\,c_2,\,c_3\in\mathbb{R}$ constants. For simplicity one can,
e.g. set $c_1=-1$, $c_2=c_3=0$ to get the simpler expressions
\begin{eqnarray*}
&& \alpha(\tau) = -\tau, \\
&&  t(\tau) = 2 \mbox{arctanh}\, \tau.
\end{eqnarray*}
Thus, observing that 
\[
\sinh\big( 2 \mbox{arctanh}\,\tau \big) = \frac{2\tau}{1-\tau^2},
\qquad \frac{\mbox{d}}{\mbox{d}\tau}\big( 2\mbox{arctanh}\,\tau \big)
=\frac{2}{1-\tau^2},
\]
it follows that the pair $(x(\tau),\tilde{\bmbeta}(\tau))$,
$\tau\in(-1,1)$ with
\[
x(\tau) = (2\,\mbox{arctanh}\,\tau, \underline{x}_\star), \qquad
\tilde{\bmbeta}(\tau) = -\frac{2\tau}{1-\tau^2}\mathbf{d}\tau,
\]
give rise to a congruence of non-intersecting conformal geodesics on
the background spacetime
$(\tilde{\mathcal{M}},\mathring{\tilde{\bmg}})$. Using the parameter $\tau$ as new coordinate in the metric
\eqref{BackgroundPhysicalMetric} one concludes that
\begin{equation}
\mathring{\tilde{\bmg}}  = \frac{4}{(1-\tau^2)^2}\bigg(
-\mathbf{d}\tau\otimes\mathbf{d}\tau +\tau^2 \mathring{\bmgamma}  \bigg).
\label{ReparametrisedMetric}
\end{equation}
Notice that the metric is singular at $\tau=\pm 1$. 

\subsubsection{The canonical factor associated to the congruence of
  conformal geodesics}
The line element
\eqref{ReparametrisedMetric} readily suggest the conformal factor
\[
\Theta \equiv \frac{1}{2}(1-\tau^2). 
\]

\begin{remark}
{\em Alternatively, we can make
use of the equation
\[
\dot{\Theta}= \langle \tilde\bmbeta, \dot{\bmx}\rangle \Theta, \qquad \langle \tilde\bmbeta, \dot{\bmx}\rangle = \alpha \dot{t} = - \frac{2\tau}{1-\tau^2}
\]
implied by the condition $\Theta^2\tilde{\bmg}(\dot{\bmx},\dot{\bmx})=-1$. 
Integrating one readily finds that 
\[
\frac{\Theta}{\Theta_\star} = \frac{1-\tau^2}{1-\tau^2_\star}
\]
where $\Theta_\star$ is the value of the conformal factor at a
fiduciary time $\tau_\star$. Observe, also, that 
\begin{eqnarray}
&&\tilde\bmbeta(\tau) = -\frac{2\tau}{1-\tau^2}\mathbf{d}\tau,
   \nonumber \\
&& \phantom{\tilde\bmbeta(\tau)} = \mathbf{d} \big( \ln \Theta(\tau) \big).\label{PhysicalBeta}
\end{eqnarray}}
\end{remark}

Following expression \eqref{ReparametrisedMetric} we introduce a new
unphysical metric $\mathring{\bmg}$ via the relation
\[
 \mathring{\bmg} = \Theta^2 \mathring{\tilde{\bmg}}, \qquad \Theta \equiv \frac{1}{2}{(1- \tau^2)}, 
\]
so as to ensure that $\Theta \geq 0$ for $|\tau|\leq 1$. It follows then that 
\begin{equation}
\mathring{\bmg}= -\mathbf{d} \tau \otimes \mathbf{d} \tau + \tau^2
\mathring{\bm{\gamma}} 
\label{BackgroundMetric}
\end{equation}
is well defined for $ \tau \in [ \tau_*, \infty)$ with
$\tau_\star>0$. For future use we define the \emph{spatial metric}
$\mathring\bmh$ 
\[
\mathring\bmh \equiv \tau^2\mathring\bmgamma, 
\]
with associated Levi-Civita connection to be denoted by $\mathring
D$. Also, denote by $\mathring{\mathfrak{D}}$ the Levi-Civita connection
 of the metric $\mathring\bmgamma$. A Penrose diagram of conformal
 representation of the background solution described by the metric
 \eqref{BackgroundMetric} is given in Figure \ref{Figure:PenroseDiagram}.

\begin{figure}[t]
\centering
\includegraphics[width=50mm]{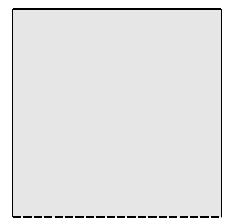}
\put(-76,140){$\mathscr{I}^+$}
\put(-76,-10){$\mathcal{S}_\star$}
\put(-150,70){$\Gamma_1$}
\put(4,70){$\Gamma_2$}
\caption{Penrose diagram of the background solution. The conformal
  representation discussed in the main text has compact sections of
  negative scalar curvature. The vertical lines $\Gamma_1$ and
  $\Gamma_2$ correspond to axes of symmetry. The solution has a singularity in the
  past and a spacelike future conformal boundary. Hence, in our discussion we only consider future evolution of
  the initial hypersurface $\mathcal{S}_\star$.}
\label{Figure:PenroseDiagram}
\end{figure}

\begin{remark}
{\em Observe that as the metrics
 $\mathring\bmgamma$ and $\mathring\bmh$ are conformally related via a
conformal factor (i.e. $\tau$) independent of the spatial coordinates, it
follows then that expressed in terms of local (spatial) coordinates
one has that
\[
\mathring D_\alpha = \mathring{\mathfrak{D}}_\alpha.
\]}
\end{remark}

\begin{remark}
\label{Remark:Geodesics}
{\em A computation readily shows that the integral curves of the
  vector field $\bmpartial_\tau$ are geodesics of the metric
  $\mathring\bmg$ given by equation \eqref{BackgroundMetric} ---that
  is, one has that
\[
\nabla_{\bmpartial_\tau}\bmpartial_\tau =0.
\]
}
\end{remark}

\begin{remark}
{\em Taking into account the expression \eqref{PhysicalBeta}, the conformal
  transformation law for conformal geodesics gives that
\[
\bmbeta = \tilde\bmbeta - \mathbf{d} \big( \ln \Theta(\tau) \big) =0.
\]
To any (non-singular) congruence of conformal geodesics one can
associate a Weyl connection $\hat\bmnabla$ via the rule
\[
\hat\bmnabla -\tilde\bmnabla =\bmS( \tilde\bmbeta).
\]
In the present case, $\tilde\bmbeta$ is a closed 1-form and, thus, the
Weyl connection is, in fact, a Levi-Civita connection which coincides
with $\bmnabla$. 
}
\end{remark}

\subsection{The background spacetime as a solution to the conformal
  Einstein field equations}
In this subsection we show how to recast the \emph{unphysical
  spacetime} $(\mathcal{M},\mathring{\bmg})$ with $\mathcal{M}=
[\tau_\star,\infty)\times \mathcal{S}$ as a solution to the conformal
Einstein field equations. This construction is conveniently done using
an adapted frame formalism. 

\subsubsection{The frame}
Let $\{\mathring{\bmc}_\bmi\}$, $\bmi=1,\, 2,\,3$, denote a
$\mathring{\bmgamma}$-orthonormal frame over $\mathcal{S}$ with associated cobasis $\{ \mathring{\bmalpha}^\bmi \}$. Accordingly, one has that 
\[
\mathring{\bm{\gamma}}(\mathring\bmc_i, \mathring\bmc_j)= \delta_{\bmi \bmj}, \qquad \langle \mathring\bmalpha^\bmj, \mathring\bmc_\bmi \rangle= \delta_\bmi{}^\bmj,
\]
so that
\[
\mathring{\bmgamma} = \delta_{\bmi\bmj} \mathring\bmalpha^\bmi \otimes \mathring\bmalpha^\bmj.
\] 
The above frame is used to introduce a $\mathring{\bmg}$-orthonormal
frame $\{ \mathring{\bme}_\bma \}$ with associated cobasis $\{
\mathring{\bmomega}^\bmb\}$ so that $\langle \mathring{\bmomega}^\bmb,
\mathring{\bme}_\bma\rangle =\delta_\bma{}^\bmb$. We do this by
setting
\begin{eqnarray*}
\mathring{\bme}_0 \equiv\bmpartial_\tau, &&
                                       \mathring{\bme}_\bmi\equiv \frac{1}{\tau}\mathring{\bmc}_\bmi,\\
\mathring{\bmomega}^0 \equiv \mathbf{d}\tau, &&
                                                \mathring{\bmomega}^\bmi
                                                = \tau \mathring{\bmalpha}^\bmi, 
\end{eqnarray*}
so that 
\[
\mathring{\bmg} = \eta_{\bma\bmb} \mathring{\bmomega}^\bma \otimes \mathring{\bmomega}^\bmb.
\]

\begin{remark}
{\em It follows that all the coefficients of the frame are smooth
  ($C^\infty$) over $[\tau_\star,\infty)\times\mathcal{S}$, $\tau_\star>0$.}
\end{remark}

\subsubsection{The connection coefficients}
The connection coefficients $\mathring\gamma_\bmi{}^\bmk{}_\bmj$ of
the Levi-Civita connection $\mathring{D}$ with respect to the frame
$\{\mathring\bmc_\bmi \}$ are defined through the relations
\[
\mathring D_\bmi \mathring\bmc_\bmj =\mathring\gamma_\bmi{}^\bmk{}_\bmj
\mathring\bmc_\bmk, \qquad \gamma_\bmi{}^\bmk{}_\bmj \equiv \langle
\mathring\bmalpha^\bmk, \mathring D_\bmi \mathring\bmc_\bmj \rangle.
\]
Similarly, for the connection coefficients
$\mathring\Gamma_\bmi{}^\bmk{}_\bmj$ of the Levi-Civita connection
$\mathring\nabla$ with respect to the frame $\{ \mathring\bme_\bma \}$
one has that 
\[
\mathring \nabla_\bma \mathring\bme_\bmb =
\mathring\Gamma_\bma{}^\bmc{}_\bmb \mathring\bme_\bmc, \qquad
\mathring\Gamma_\bma{}^\bmc{}_\bmb \equiv \langle \mathring\bmomega^\bmc, \mathring \nabla_\bma \mathring\bme_\bmb\rangle.
\]
We now proceed to compute the various connection coefficients.

\medskip
\noindent
\textbf{The coefficients $\mathring\Gamma_\bmi{}^\bmk{}_\bmj$.}
Recalling the definition the connection coefficients and of the basis
fields $\{\mathring{\bme}_i\}$ and $\{ \mathring\bmomega^j \}$ one has
that 
\begin{eqnarray*}
&& \mathring\Gamma_\bmi{}^\bmk{}_\bmj = \langle
  \bmomega^k,\mathring\nabla_\bmi \mathring\bme_\bmj \rangle =\langle
    \bmomega^\bmk, \mathring e_\bmi{}^\alpha \mathring\nabla_\alpha
  \mathring\bme_\bmj\rangle
 \\
&& \phantom{\mathring\Gamma_\bmi{}^\bmk{}_\bmj} = \frac{1}{\tau}
   \langle \mathring\bmalpha^\bmk, \mathring c_\bmi{}^\alpha \mathring\nabla_\alpha
   \mathring \bmc_\bmj\rangle  = \frac{1}{\tau}
   \langle \mathring\bmalpha^\bmk, \mathring c_\bmi{}^\alpha \mathring{D}_\alpha
   \mathring \bmc_\bmj\rangle = \frac{1}{\tau}\langle
   \mathring\bmalpha^\bmk, \mathring{D}_\bmi \mathring\bmc_\bmj
   \rangle\\
&& \phantom{\mathring\Gamma_\bmi{}^\bmk{}_\bmj} =\frac{1}{\tau}\mathring\gamma_\bmi{}^\bmk{}_\bmj.
\end{eqnarray*}

\medskip
\noindent
\textbf{The coefficients $\mathring{\Gamma}_\bmzero{}^\bma{}_\bmzero$.} Recall
that $\mathring\bme_\bmzero =\bmpartial_\tau$ is tangent to geodesics ---see Remark
\ref{Remark:Geodesics}. Thus,
\[
\mathring\nabla_\bmzero \mathring\bme_\bmzero = \mathring\Gamma_\bmzero{}^\bmc{}_\bmzero \mathring\bme_\bmc,
\]
from where it follows that 
\[
\mathring\Gamma_\bmzero{}^\bma{}_\bmzero =0.
\]

\medskip
\noindent
\textbf{The coefficients $\mathring{\Gamma}_\bmi{}^\bmj{}_\bmzero$ and $\mathring{\Gamma}_\bmi{}^\bmzero{}_\bmj$.} In this
case we have that 
\[
\mathring\Gamma_\bmi{}^\bmj{}_\bmzero =\langle \mathring\bmomega^\bmj,
\mathring\nabla_\bmi \mathring\bme_\bmzero \rangle =\mathring\chi_\bmi{}^\bmj,
\]
where $\chi_\bmi{}^\bmj$ denote the components of the \emph{Weingarten
  tensor}. Defining $\mathring\chi_{\bmi\bmj} \equiv \eta_{\bmj\bmk}
\mathring\chi_\bmi{}^\bmk$, one has that $\mathring\chi_{\bmi\bmj} =\mathring\chi_{(\bmi\bmj)}$
as the congruence defined by $\bmpartial_\tau$ can readily be verified
to be hypersurface orthogonal. Thus, in this case $\mathring\chi_{\bmi\bmj}$
coincides with the components of the extrinsic curvature of the
hypersurfaces of constant $\tau$. To compute $\mathring{\chi}_{\bmi\bmj}$ recall
that
\[
\chi_{ab} = -\frac{1}{2} \mathcal{L} _{\bmpartial_\tau} h_{ab},
\]
where $ \mathcal{L} _{\bmpartial_\tau}$ denotes the Lie derivative
along the direction of $\bmpartial_t$. As
\[
\mathcal{L} _{\bmpartial_\tau} \mathring\bmh = \frac{1}{2} \mathcal{L}
_{\bmpartial_\tau} \big(\tau^2 \mathring \bmgamma\big) = 2\tau
\mathring\bmgamma = \frac{2}{\tau} \mathring\bmh,
\]
one concludes that 
\[
\mathring\chi_{\bmi\bmj} =-\frac{1}{\tau} \delta_{\bmi\bmj}. 
\]
Exploiting the metricity of the connection $\mathring\nabla$ one finds
that, moreover,
\[
\mathring{\Gamma}_\bmi{}^\bmzero{}_\bmj = - \mathring{\chi}_{\bmi\bmj} =
\frac{1}{\tau} \delta_{\bmi\bmj}. 
\]

\medskip
\noindent
\textbf{The coefficients $\mathring\Gamma_\bmzero{}^\bmj{}_\bmi$.} In this
case one readily finds that 
\begin{eqnarray*}
&& \mathring\Gamma_\bmzero{}^\bmj{}_\bmi = \langle
   \mathring\bmomega^\bmj,\mathring\nabla_\bmzero \mathring\bme_\bmi\rangle=
   \langle \mathring\bmomega^\bmj ,\mathring\nabla_\bmzero \bigg(
   \frac{1}{\tau}\mathring\bmc_\bmi\bigg) \rangle\\
&& \phantom{\mathring\Gamma_\bmzero{}^\bmj{}_\bmi} = -\frac{1}{\tau}\langle
   \mathring\bmalpha^\bmj,\mathring\bmc_\bmi \rangle=
   -\frac{1}{\tau^2}\langle \tau \mathring\bmalpha^\bmj, \mathring\bmc_\bmi\rangle\\
&& \phantom{\mathring\Gamma_\bmzero{}^\bmj{}_\bmi} = -\frac{1}{\tau}\delta_\bmi{}^\bmj.
\end{eqnarray*}

\medskip
\noindent
\textbf{The coefficients $\mathring\Gamma_\bmzero{}^\bmzero{}_\bmi$.} In this
case, one readily finds that 
\[
\mathring\Gamma_\bmzero{}^\bmzero{}_\bmi =\langle
\mathring\bmomega^\bmzero,\mathring\nabla_\bmzero \mathring\bme_\bmi \rangle =
\langle \mathring\bmomega^\bmzero,\mathring\nabla_\bmzero \bigg(
\frac{1}{\tau}\mathring\bmc_\bmi \bigg) \rangle= -\frac{1}{\tau^2}\langle \mathbf{d}\tau,\mathring\bmc_\bmi \rangle=0. 
\]

\medskip
\noindent
\textbf{The coefficients $\mathring\Gamma_\bmi{}^\bmzero{}_\bmzero$.} Observing
that $[\mathring\bme_i,\mathring\bme_\bmzero]=0$ and recalling that in the absence of torsion one has that
\[
[\mathring\bme_\bmi,\mathring\bme_\bmzero] =\bigg( \mathring\Gamma_\bmi{}^\bmc{}_\bmzero-\mathring\Gamma_\bmzero{}^\bmc{}_\bmi \bigg)\bme_\bmc,
\]
it follows from the previous results that
\[
\mathring\Gamma_\bmi{}^\bmzero{}_\bmzero=0.
\]

\begin{remark}
{\em It follows that all the coefficients of the connection are smooth
  ($C^\infty$) over $[\tau_\star,\infty)\times\mathcal{S}$.}
\end{remark}

\begin{remark}
{\em For latter use it is observed that the extrinsic
  curvature (Weingarten tensor) can be written in abstract index
  notation as
\begin{equation}
\mathring{\chi}_{ij} = -\frac{1}{\tau}\mathring{h}_{ij}.
\label{WeingartenBackground}
\end{equation}
}
\end{remark}

\subsubsection{Conformal fields}
The next step is the computation of the components of the conformal
fields appearing in the extended conformal Einstein field
equations. To this end, we make use of the conformal Einstein
constraints discussed in Section \ref{Subsection:ConformalConstraints}.

\medskip
We make use of an adapted frame with
$\bme_0=\bmpartial_\tau$ and make the identification $\Omega\mapsto
\Theta$ in equations \eqref{co1}-\eqref{co10}. Observe that one has that 
\[
\mathring{D}_i \Omega =0.
\]

\medskip
\noindent
\textbf{The scalars $\Sigma$ and $s$.} By definition one has that 
\[
\mathring{\Sigma} \equiv \bmn (\Theta) = -\partial_\tau \Theta = \tau.
\]
The minus sign arises from the fact that in our conventions
$(\mathbf{d}\tau)^\sharp =-\bmpartial_\tau$.  Using the later in the
conformal equation \eqref{co8} with $\lambda=3$ one readily concludes
\[
\mathring{s}=1.
\]

\medskip
\noindent
\textbf{Components of the Schouten tensor.} The constraint \eqref{co2}
readily yields for $\Theta\geq 0$ that
\[
\mathring{L}_i=0.
\]
The spatial components, $\mathring{L}_{ij}$, are computed using the constraint
\eqref{co10}. Observing that in our case $\mathring{D}_i\mathring{D}_j \Theta=0$ one readily
concludes that
\[
\mathring{L}_{ij}=0.
\]
Thus, all the components of the Schouten tensor, except for its trace,
vanish. This trace is proportional to the Ricci scalar of the metric
\eqref{BackgroundMetric}. 

\medskip
\noindent
\textbf{Components of the rescaled Weyl tensor.} The constraint \eqref{co9}
offers an easy way of computing the magnetic part of the rescaled Weyl
tensor. As $\mathring{D}_j \mathring{\chi}_{ki} = 0$ and we already know that $\mathring{L}_i=0$, it follows then that $\mathring{d}_{ijk} =0$ so
that, in fact, 
\[
 \mathring{d}{}^*{}_{ij}=0.
\]
To compute the electric part of the rescaled Weyl tensor we make use
of the constraint equation \eqref{co10}. This equation requires
knowing the value of the Schouten tensor $\mathring{l}_{ij}$ of the metric
$\mathring\bmh$. From the definition of the 3-dimensional Schouten
tensor one readily finds that if $r[\mathring\bmgamma]=-6$, then 
\[
\mbox{\textbf{\em Schouten}}[\mathring\bmgamma] = -\frac{1}{2}\mathring\bmgamma.
\]
Now, we have that $\mathring\bmh =\tau^2 \mathring\bmgamma$ so that
$\mathring\bmh$ and $\mathring\bmgamma$ are conformally
related. However, the conformal factor does not depend on the spatial
coordinates. It follows then, from the conformal transformation rule
of the Schouten tensor that
\[
\mbox{\textbf{\em Schouten}}[\mathring\bmgamma] = \mbox{\textbf{\em Schouten}}[\mathring\bmh].
\]
Hence, one has that 
\[
\mathring{l}_{ij} = -\frac{1}{2}\mathring\gamma_{ij} = -\frac{1}{\tau^2}\mathring
h_{ij}. 
\]
Now, a calculation using equation \eqref{WeingartenBackground} reveals
that
\[
\mathring{l}_{ij} = - \mathring{\chi} \big(
  \mathring{\chi}_{ij} - \frac{1}{4} \mathring{\chi} \mathring{h}_{ij} \big) + \mathring{\chi}_{ki}\mathring{\chi}{}_j{}^k -
  \frac{1}{4}\mathring{\chi}_{kl}\mathring{\chi}^{kl} \mathring{h}_{ij}
\] 
so that 
\[
\mathring{d}_{ij}=0.
\]

\begin{remark}
{\em In summary, one has that the metric \eqref{BackgroundMetric} is
conformally flat.}
\end{remark}

\medskip
\noindent
\textbf{Ricci scalar.} Finally, although it does not appear as an
unknown in the extended conformal Einstein equations, it is of
interest to compute the Ricci scalar of the metric. To do this we
observe that from the definition of the Friedrich scalar one has that
\[
\mathring{R}\Theta = 24 \big( s-\frac{1}{4}\mathring{\nabla}_c\mathring{\nabla}^c\Theta\big).
\] 
A computation readily yields $\mathring{\nabla}_c\mathring{\nabla}^c\Theta=-2$ so that one
concludes that
\[
\mathring{R} = \frac{72}{1-\tau^2}.
\]
That is, the Ricci scalar is singular at $\tau=1$.

\begin{remark}
{\em Although the Ricci scalar of the background solution solution is
  singular, this will not pose any difficulty in our subsequent
  analysis as the Ricci scalar does not appear as an unknown in the
  extended conformal Einstein field equations.}
\end{remark}

\section{Evolution equations}
\label{Section:EvolutionEqns}
In this section we discuss the evolution system associated to the
extended conformal Einstein equations \eqref{ecfe5} written in terms
of a conformal Gaussian system. This evolution system
is central in the discussion of the stability of the background
spacetime. In addition, we also
discuss the subsidiary evolution system satisfied by the
zero-quantities associated to the field equations, \eqref{ecfe1}-\eqref{ecfe4},
and the supplementary zero-quantities \eqref{Supplementary1}-\eqref{Supplementary3}. The
subsidiary system is key in the analysis of the so-called \emph{propagation of the
constraints} which allows to establish the relation between a solution
to the extended conformal Einstein equations \eqref{ecfe5} and the Einstein
field equations \eqref{EFE}. 

\subsection{The conformal Gaussian gauge}
In order to obtain suitable evolution equations for the conformal
fields, we make use of a \emph{conformal Gaussian gauge}. More
precisely, we assume that we are working on a region
$\mathcal{U}\subset\mathcal{M}$ which can be covered by a congruence
of non-intersecting conformal geodesics. Choosing 
\[
\Theta_\star =\frac{1}{2}, \qquad \dot{\Theta}_\star =0, \qquad
\ddot{\Theta}_\star =-\frac{1}{2},
\]
for $\tau=\tau_\star$, $\tau_\star \in(0,1)$, the following
Proposition gives a conformal factor associated to the curves of the
congruence ---see e.g. \cite{CFEBook}, Proposition 5.1, page 133:

\begin{proposition}
\label{Proposition:ConformalGeodesicsConformalFactor}
Let $(\tilde{\mathcal{M}},\tilde{\bmg})$ denote an Einstein
spacetime. Suppose that $(x(\tau),\bmbeta(\tau))$ is a solution to the
conformal geodesic equations \eqref{CGEqn1}-\eqref{CGEqn2} and that
$\{ \bme_\bma \}$ is a $\bmg$-orthogonal frame propagated along the
curve according to \eqref{CGEqn3}. If 
\[
\bmg = \Theta^2 \tilde{\bmg}, \qquad \mbox{such that}, \quad \bmg(\dot{\bmx},\dot{\bmx})=-1,
\]
then the conformal factor $\Theta$ satisfies
\[
\Theta(\tau) =\Theta_\star +\dot{\Theta}_\star(\tau-\tau_\star)
+\frac{1}{2}\ddot{\Theta}_\star (\tau-\tau_\star)^2,
\]
where the coefficients $\Theta_\star$, $\dot{\Theta}_\star$,
$\ddot{\Theta}_\star$ are constant along the conformal geodesic and
are subject to the constraints
\[
\dot{\Theta}_\star = \langle \bmbeta_\star,\dot{x}_\star
\rangle\Theta_\star, \qquad \Theta_\star\ddot{\Theta}_\star =
\frac{1}{2}\tilde{\bmg}^\sharp(\bmbeta_\star,\bmbeta_\star) + \frac{1}{6}\lambda.
\]
Furthermore, along each conformal geodesic one has
\[
\Theta\bmbeta_\bmzero =\dot{\Theta}_\star, \qquad \Theta \beta_\bmi =\Theta_\star\bmbeta_{\bmi\star}.
\]
\end{proposition}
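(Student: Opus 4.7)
The plan is to treat $\Theta$ as a scalar function of the parameter $\tau$ along a single conformal geodesic and, using the conformal geodesic equations \eqref{CGEqn1}-\eqref{CGEqn2} together with the normalisation $\bmg(\dot{\bmx},\dot{\bmx})=-1$, derive a third-order ODE of the form $\dddot\Theta=0$ from which both the quadratic polynomial for $\Theta$ and the stated constraints on the initial data follow. The first derivative is obtained by differentiating $\Theta^{2}\tilde{\bmg}(\dot{\bmx},\dot{\bmx})=-1$ along the curve and substituting \eqref{CGEqn1}; since $\tilde{\nabla}$ is metric and $\tilde{\bmg}(\tilde{\bmbeta}^\sharp,\dot{\bmx})=\langle\tilde{\bmbeta},\dot{\bmx}\rangle$, the identity collapses to $\dot{\Theta}=\Theta\langle\tilde{\bmbeta},\dot{\bmx}\rangle$, which evaluated at $\tau=\tau_\star$ yields the first constraint.

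Differentiating a second time brings in $\tfrac{d}{d\tau}\langle\tilde{\bmbeta},\dot{\bmx}\rangle$; substituting \eqref{CGEqn2} for $\tilde{\nabla}_{\dot{\bmx}}\tilde{\bmbeta}$ and \eqref{CGEqn1} for $\tilde{\nabla}_{\dot{\bmx}}\dot{\bmx}$ produces a quadratic term $\langle\tilde{\bmbeta},\dot{\bmx}\rangle^{2}$, which cancels against the contribution from $\dot{\Theta}\langle\tilde{\bmbeta},\dot{\bmx}\rangle$. Crucially, the Einstein condition \eqref{EFE} forces $\tilde{L}_{ab}=(\lambda/6)\tilde{g}_{ab}$, so $\tilde{\bmL}(\dot{\bmx},\cdot)\cdot\dot{\bmx}$ reduces to a multiple of $\tilde{\bmg}(\dot{\bmx},\dot{\bmx})=-\Theta^{-2}$, and the surviving relation takes the schematic form $\Theta\ddot{\Theta}=\alpha\,\tilde{\bmg}^\sharp(\tilde{\bmbeta},\tilde{\bmbeta})+\beta\lambda$ for fixed constants $\alpha,\beta$; its evaluation at $\tau_\star$ gives the second stated constraint. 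For the third derivative one computes $\tfrac{d}{d\tau}\tilde{\bmg}^\sharp(\tilde{\bmbeta},\tilde{\bmbeta})$ from \eqref{CGEqn2} and, using the Einstein condition once more, obtains a clean expression proportional to $\langle\tilde{\bmbeta},\dot{\bmx}\rangle\bigl[\tilde{\bmg}^\sharp(\tilde{\bmbeta},\tilde{\bmbeta})+\lambda/3\bigr]$, which is precisely what is needed for the right-hand side of the differentiated second-step relation to cancel $\dot{\Theta}\ddot{\Theta}$. The residual is $\Theta\dddot{\Theta}=0$, and since $\Theta_\star\neq 0$ one concludes $\dddot{\Theta}=0$ on a neighbourhood of $\tau_\star$, whence Taylor's theorem gives the quadratic representation.

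It remains to verify the frame-component identities. A short calculation using \eqref{CGEqn3} shows that $\tfrac{d}{d\tau}\tilde{\bmg}(\bme_\bma,\bme_\bmb)=-2\langle\tilde{\bmbeta},\dot{\bmx}\rangle\tilde{\bmg}(\bme_\bma,\bme_\bmb)$: the cross terms produced by the Weyl propagation law cancel identically. Combined with $\dot{\Theta}/\Theta=\langle\tilde{\bmbeta},\dot{\bmx}\rangle$, this ensures the Weyl-propagated frame remains $\bmg$-orthonormal along the curve, so in particular $\tilde{\bmg}(\bme_\bma,\dot{\bmx})=\Theta^{-2}\eta_{\bma\bmzero}$. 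A direct computation of $\tfrac{d}{d\tau}(\Theta\tilde{\bmbeta}_\bma)$ using \eqref{CGEqn2} and \eqref{CGEqn3} then yields, for $\bma=\bmi$, a right-hand side proportional to $\tilde{\bmg}(\bme_\bmi,\dot{\bmx})=0$, producing the constancy $\Theta\tilde{\bmbeta}_\bmi=\Theta_\star\tilde{\bmbeta}_{\bmi\star}$; for $\bma=\bmzero$ the relation $\Theta\tilde{\bmbeta}_\bmzero=\dot{\Theta}$ follows already from the first step, and the stated data-parametrised form then follows from the polynomial representation of $\Theta$ established above. The principal technical obstacle throughout is the verification in the third-derivative step that the Einstein condition produces \emph{exactly} the cancellation needed for $\Theta\dddot{\Theta}=0$: for a generic (non-Einstein) $\tilde{\bmg}$ the traceless part of the Schouten tensor would survive as an extra contribution that cannot be absorbed into $\Theta\ddot{\Theta}$, and the polynomial representation of $\Theta$ would fail. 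The quadratic structure of $\Theta$ is therefore the analytic manifestation of the Einstein property of $\tilde{\bmg}$, which is precisely why conformal geodesics form the natural gauge for the stability problem at hand.
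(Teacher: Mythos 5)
Your proof is correct and follows the standard route --- the paper itself offers no proof of this Proposition, simply citing Proposition 5.1 of \cite{CFEBook}, and your derivation (differentiating the normalisation $\Theta^{2}\tilde{\bmg}(\dot{\bmx},\dot{\bmx})=-1$ along the curve, using \eqref{CGEqn1}--\eqref{CGEqn2} together with the Einstein condition $\tilde{L}_{ab}=\tfrac{\lambda}{6}\tilde{g}_{ab}$ to reach $\dddot{\Theta}=0$, and then treating the Weyl-propagated frame components) is essentially that reference's argument, with the key cancellations you describe checking out. The only loose end is the final identity: what your computation actually yields is $\Theta\tilde{\beta}_{\bmzero}=\dot{\Theta}=\dot{\Theta}_\star+\ddot{\Theta}_\star(\tau-\tau_\star)$ rather than the constant $\dot{\Theta}_\star$, but that discrepancy lies in the statement as transcribed, not in your argument.
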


\begin{remark}
{\em Thus, if one has a congruence of non-intersecting conformal
  geodesics in a region $\mathcal{U}$ of spacetime, then the above
  proposition provides a \emph{canonical way} of obtaining a conformal
  extension. This strategy naturally leads to a so-called
  \emph{conformal Gaussian gauge}. }
\end{remark}

The Proposition \ref{Proposition:ConformalGeodesicsConformalFactor} gives the
conformal factor
\begin{equation}
\Theta(\tau) =\frac{1}{2}\big( 1-(\tau-\tau_\star)^2\big)
\label{UniversalCF}
\end{equation}
along the curves of the congruences. The choice of initial data for the
conformal factor is associated to a congruence that leaves
orthogonally a fiduciary initial hypersurface $\mathcal{S}_\star$ with
$\tau=\tau_\star$ ---notice, however, that the congruence of conformal
geodesics is, in general, not hypersurface orthogonal. 

\begin{remark} 
{\em Since the conformal factor $\Theta$ given by equation
  \eqref{UniversalCF}  does not depend on the initial data for the
  evolution equations it can be regarded as universal ---i.e. valid
  not only for the background solution but also for perturbations
  thereof. Similarly, a consequence of Proposition 2, it follows that
  the components ${d}_\bma$ of the the covector $\bmd$ are, in the
  same sense, universal.}
\end{remark}

\smallskip
Along the congruence of conformal geodesics one considers a
$\bmg$-orthogonal frame $\{\bme_\bmzero\}$ which is Weyl-propagated
and such that $\bmtau=\bme_\bmzero$. The Weyl connection $\hat{\nabla}_a$ associated to
the congruence then satisfies
\[
\hat{\nabla}_\bmtau \bme_\bma=0, \qquad \hat{\bmL}(\tau,\cdot)=0,
\]
which is equivalent to
\[
\hat{\Gamma}_\bmzero{}^\bmb{}_\bmc=0, \qquad f_\bmzero=0, \qquad \hat{L}_{\bmzero\bma}=0,
\]
---see e.g. \cite{CFEBook}, Section 13.4, page 366. By choosing the parameter, $\tau$, of the conformal geodesics as time
coordinate one gets the additional gauge condition 
\[
\bme_\bmzero = \bmpartial_\tau, \qquad e_\bmzero{}^\mu =\delta_0{}^\mu.
\]
On $\mathcal{S}_\star$ we choose some local coordinates
$\underline{x}=(x^\alpha)$. Assuming that each curve of the congruence
intersects $\mathcal{S}_\star$ only once, one can extend the
coordinates off the initial hypersurface by requiring them to be
constant along the conformal geodesic which intersects
$\mathcal{S}_\star$ at the point with coordinates $\underline{x}$. The
coordinates $(\tau,\underline{x})$ thus obtained are known as
\emph{conformal Gaussian coordinates}. 

\subsection{The main evolution system}
One of the main advantages of writing the conformal field equations in
terms of zero-quantities and using a frame formalism is that the
various evolution equations can be readily identified as certain
components of the zero-quantities. 

\medskip
The required evolution equations for the frame components, connection coefficients and components of the Schouten tensor are obtained from the conditions
\begin{equation}
\hat{\Sigma}{}_{\bmzero}{}^\bmc{}_\bmb \bme_\bmc=0, \qquad
 \hat{\Xi}{}^\bmc{}_{\bmd\bmzero\bmb}=0, \qquad  \hat{\Delta}{}_{\bmzero\bmb\bmc}=0.  \label{ecfe6}
\end{equation}
In particular, the evolution equation for components of the covector
$f_a$ defining the Weyl connection is given by
\[
\hat{\Xi}{}^\bmc{}_{\bmc\bmzero\bmb}=0. 
\]
In the following we analyse each of these equations in more detail.

\subsubsection{Evolution equations for the components of the frame}
Now, starting from equation \eqref{ecfe1}
\[
 \hat{\Sigma}{}_\bma{}^\bmc{}_\bmb \bme_\bmc \equiv [\bme_\bma, \bme_\bmb] -
 (\hat{\Gamma}{}_\bma{}^\bmc{}_\bmb-\hat{\Gamma}{}_\bmb{}^\bmc{}_\bma)\bme_\bmc 
\]
and writing $\bme_\bma = e_\bma{}^\mu\bmpartial_\mu$, it follows that
the condition $\hat{\Sigma}{}_\bma{}^\bmc{}_\bmb\bme_\bmc=0$ implies
\[
(\partial_\bma e_\bmb{}^\nu - \partial_\bmb e_\bma{}^\nu)
=(\hat{\Gamma}{}_\bma{}^\bmc{}_\bmb-\hat{\Gamma}{}_\bmb{}^\bmc{}_\bma)e_\bmc{}^\nu,
\qquad \partial_\bma \equiv e_\bma{}^\mu\partial_\mu.
\]
Setting $\bma=\bmzero$ it follows that the evolution equation for the
components of the frame takes the form
\begin{equation}
\partial_\bmzero e_\bmb{}^\nu = - \hat{\Gamma}{}_\bmb{}^\bmc{}_\bmzero e_\bmc{}^\nu. 
\label{ev1}
\end{equation}

\subsubsection{Evolution equations for the components of the connection}
In order to obtain the evolution equation for the components of the
frame not determined by the gauge conditions one considers the
condition $\hat{\Xi}{}^\bmc{}_{\bmd\bmzero\bmb}=0$.

\smallskip
Now, since 
\[
\hat{P}{}^\bmc{}_{\bmd\bmzero\bmb}=\partial_\bmzero(\hat{\Gamma}{}_\bmb{}^\bmc{}_\bmd)-\partial_\bmb(\hat{\Gamma}{}_\bmzero{}^\bmc{}_\bmd)+(\hat{\Gamma}{}_\bmb{}^\bmf{}_\bmd
\hat{\Gamma}{}_\bmzero{}^\bmc{}_\bmf- \hat{\Gamma}{}_\bmzero{}^\bmf{}_\bmd
\hat{\Gamma}{}_\bmb{}^\bmc{}_\bmf) +\hat{\Gamma}{}_\bmf{}^\bmc{}_\bmd(
\hat{\Gamma}{}_\bmb{}^\bmf{}_\bmzero - \hat{\Gamma}{}_\bmzero{}^\bmf{}_\bmb), 
\]
then using the gauge condition
$\hat{\Gamma}{}_\bmzero{}^\bmc{}_\bmd=0$ one has that 
\[
\hat{P}{}^\bmc_{\bmd\bmzero\bmb}=\bme_\bmzero(\hat{\Gamma}{}_\bmb{}^\bmc{}_\bmd)
+\hat{\Gamma}{}_\bmf{}^\bmc{}_\bmd \hat{\Gamma}{}_\bmb{}^\bmf{}_\bmzero.
\]
In addition, observing that 
\[
S_{\bmd[\bmzero}{}^{\bmc\bme}\hat{L}{}_{\bmb]\bme}=\delta_\bmd{}^\bmc \hat{L}{}_{\bmb\bmzero}+\delta_\bmzero{}^c\hat{L}{}_{\bmb\bmd}- g_{\bmd\bmzero}g^{\bmc\bme}\hat{L}{}_{\bmb\bme}-\delta_\bmd{}^\bmc \hat{L}{}_{\bmzero\bmb}-
\delta_\bmb{}^\bmc \hat{L}{}_{\bmb\bmzero}+ g_{\bmd\bmb}g^{\bmc\bme}\hat{L}{}_{\bmzero\bme},
\]
together with the gauge condition $\hat{L}_{\bmzero\bma}=0$, it
follows that
\[
\hat{\rho}{}^\bmc{}_{\bmd\bmzero\bmb}=\Theta \hat{d}{}^\bmc{}_{\bmd\bmzero\bmb}+2\delta_\bmd{}^\bmc\hat{L}{}_{\bmb\bmzero}+2\delta_\bmzero{}^\bmc \hat{L}{}_{\bmb\bmd}-2
\eta_{\bmd\bmzero}\eta^{\bmc\bme}\hat{L}{}_{\bmb\bme},
\]
where it has been used that
$g_{\bmd\bmzero}g^{\bmc\bme}=\eta_{\bmd\bmzero}\eta^{\bmc\bme}$. It
follows that the evolution equation for the coefficients of the
connection not determined by the gauge is given by
\[
\partial_\bmzero(\hat{\Gamma}{}_\bmb{}^\bmc{}_\bmd)
  +\hat{\Gamma}{}_\bmf{}^\bmc{}_\bmd \hat{\Gamma}{}_\bmb{}^\bmf{}_\bmzero=2
  \eta_{\bmd\bmzero}\eta^{\bmc\bme}\hat{L}{}_{\bmb\bme} -2\delta_\bmd{}^\bmc \hat{L}{}_{\bmb\bmzero}-2\delta_\bmzero{}^\bmc\hat{L}{}_{\bmb\bmd}-\Theta
  \hat{d}{}^\bmc{}_{\bmd\bmzero\bmb}. 
\]
The above expression can be written in terms of the Levi-Civita
connection coefficients $\Gamma_\bma{}^\bmb{}_\bmc$ and the 1-form
$f_\bma$ through the relation
\[
\hat{\Gamma}{}_\bma{}^\bmb{}_\bmc= \Gamma_\bma{}^\bmb{}_\bmc+ S_{\bma\bmb}{}^{\bmc\bmd}f_\bmd. 
\]
In particular, since
\[ 
f_\bma=
  \frac{1}{4}\hat{\Gamma}{}_\bma{}^\bmb{}_\bmb, 
\]
it follows from the gauge condition $f_\bmzero=0$ and $\hat{\Xi}{}^\bmc{}_{\bmc\bmzero\bmb}=0$ that
\[
\partial_\bmzero f_\bmi + f_\bmj \hat{\Gamma}{}_\bmi{}^\bmj{}_\bmzero=
\hat{L}{}_{\bmi\bmzero}. 
\]

\subsubsection{Evolution equations for the components of the Schouten tensor}
The evolution equations for the components of the Schouten tensor not determined by the gauge
 are obtained from the condition $\hat{\Delta}{}_{\bmzero \bmd\bmb}=0$. It
 follows then that 
\[
\nabla_\bmzero \hat{L}{}_{\bmd\bmb} - \nabla_\bmd
\hat{L}{}_{\bm0\bmb} - d_\bma d^\bma{}_{\bmb\bmzero\bmd}=0.
\]
However, in the conformal Gaussian gauge one has that
$\hat{L}{}_{\bmzero\bmb}=0$ so that the evolution equation for the components
of the Schouten tensor can be simplified to
\[
\partial_\bmzero \hat{L}{}_{\bmd\bmb}=\hat{\Gamma}{}_\bmzero{}^\bmc{}_\bmd
  \hat{ L}{}_{\bmc\bmb}+ \hat{\Gamma}{}_\bmzero{}^\bmc{}_\bmb \hat{L}{}_{\bmd\bmc} + d_\bma
  d{}^\bma{}_{\bmb\bmzero\bmd}=0, 
\]
as $\hat{\Gamma}{}_\bmzero{}^\bmc{}_\bmd =0$.

\subsubsection{Evolution equations for the components of the rescaled
  Weyl tensor}
The evolution equations for the components of the Weyl tensor are
extracted from the decomposition of the zero-quantity
$\Lambda_{\bmb\bmc\bmd}$. As this zero-quantity contains a contracted
derivative, the decomposition is more involved than for the other
zero-quantities. As in the case of the conformal constraint
equations, this analysis is best done using the decomposition of the
rescaled Weyl tensor in its electric and magnetic parts with respect
to the tangent to the congruence of conformal geodesics on which our
gauge is based. 

\medskip
In the following, let $h_a{}^b$ denote the projector to the
hyperplanes orthogonal to the tangent vector field $\tau^a$ to the
congruence of conformal geodesics. One has that 
\[
h_a{}^b=\delta_a{}^b - \tau_a \tau^b,
\]
so that 
\[
\begin{split} 
\Lambda_{\bmb\bmc\bmd}&=\nabla^\bma (\delta_\bma{}^\bmf d_{\bmf\bmb\bmc\bmd})=\delta_\bma{}^\bmf \nabla^\bma  d_{\bmf\bmb\bmc\bmd}\\
&= \tau^\bmf \tau_\bma \nabla^\bma  d_{\bmf\bmb\bmc\bmd}+h_\bma{}^\bmf\nabla^\bma  d_{\bmf\bmb\bmc\bmd}\\
& = \mathcal{D}  d_{\bmf\bmb\bmc\bmd} +\mathcal{D}^\bmf
d_{\bmf\bmb\bmc\bmd} \tau^\bmf,
\end{split}
\]
where $\mathcal{D}_a\equiv h_a{}^b\nabla_b$ and $\mathcal{D}\equiv
\tau^a\nabla_a$ denote, respectively, the Sen and Fermi covariant
derivatives associated to the congruence.  Now, observing that the acceleration and Weingarten
tensor of the congruence are given, respectively by 
\begin{eqnarray*}
&& a_a \equiv \tau^b \nabla_b \tau_a= \mathcal{D} \tau_a, \\
&& \chi_{ab}  \equiv h_a{}^c \nabla_c \tau_b =\mathcal{D}_a \tau_b,  
\end{eqnarray*}
it follows that
\[
\begin{split}
\Lambda_{\bmb\bmc\bmd}\tau^\bmc &=\Lambda_{\bmb\bmzero\bmd} =\tau^\bmc
\mathcal{D}(\tau^\bmf d_{\bmf\bmb\bmc\bmd})+ \tau^\bmc
\mathcal{D}^\bmf d_{\bmf\bmb\bmc\bmd}-a^\bmf \tau^\bmc d_{\bmf\bmb\bmc\bmd}\\
&=\mathcal{D}(\tau^\bmf \tau^\bmc d_{\bmf\bmb\bmc\bmd})+\mathcal{D}^\bmf d_{\bmf\bmb\bmzero\bmd} - a^\bmf d_{\bmf\bmb\bmzero\bmd}-
a^\bmc d_{\bmzero\bmb\bmc\bmd}- \chi^{\bmf\bmc} d_{\bmf\bmb\bmc\bmd}, 
\end{split}
\]
so that
\[
 \Lambda_{\bmb \bmzero \bmd}=\mathcal{D} d_{\bmzero \bmb \bmzero \bmd} + \mathcal{D}^\bmf d_{\bmf\bmb\bmzero\bmd} -
 a^\bmf d_{\bmf\bmb \bmzero \bmd}- a^\bmc d_{\bmzero \bmb\bmc\bmd} - \chi^{\bmf\bmc}d_{\bmf\bmb\bmc\bmd}. 
\]
To further simplify we make use of the decomposition
\[
d_{\bma\bmb\bmc\bmd}=2(l_{\bmb[\bmc} d_{\bmd]\bma}-l_{\bma[\bmc}
d_{\bmd]\bmb}) - 2(\tau_{[\bmc}{d^*}_{\bmd]\bme}
{\epsilon^\bme}_{\bma\bmb}+
\tau_{[\bma}{d^*}_{\bmb]\bme}{\epsilon^\bme}_{\bmc\bmd}), 
\]
of the rescaled Weyl tensor in terms of its electric part
$d_{\bma\bmb}$ and magnetic part $d^*_{\bma\bmb}$ with respect to the
vector field $\tau^a$ where  $l_{ab}=h_{ab}-\tau_a \tau_b$ to obtain
\[
\begin{split}
\Lambda_{\bmb \bmzero \bmd}=& \mathcal{D} d_{\bmb \bmd} +
\mathcal{D}^\bmf d_{\bmf\bmb \bmd} - a^\bmf d_{\bmf\bmb \bmd}- a^\bmc d_{ \bmb\bmc\bmd} - 2\chi^{\bmf\bmc}(l_{\bmb[\bmc} d_{\bmd]\bmf}-l_{\bmf[\bmc} d_{\bmd]\bmb}) \\
&+ 2\chi^{\bmf\bmc}(\tau_{[\bmc}{d^*}_{\bmd]\bme} {\epsilon^\bme}_{\bmf\bmb}+
\tau_{[\bmf}{d^*}_{\bmb]\bme}{\epsilon^\bme}_{\bmc\bmd}). 
\end{split}
\]
To finally extract the required evolution equation we consider
$\Lambda_{(\bmb|\bmzero|\bmd)}$. Observing that all the involved
tensors are spatial one obtains, after some simplification, that 
\begin{equation}
\Lambda_{(\bmi| \bmzero| \bmj)}=   \partial_\bmzero
  d_{\bmi\bmj}+{\epsilon^{\bmk\bml}}_{(\bmi}D_{|\bml|} d^*{}_{\bmj)\bmk}-
  2a_\bml{\epsilon^{\bmk\bml}}_{(\bmi}d^*{}_{\bmj)\bmk}+ \chi d_{\bmi\bmj} - 2
  {\chi^\bmk}_{(\bmi}d_{\bmj)\bmk}=0. \label{EvolutionElectricWeyl}
\end{equation}

\smallskip
To complete the system of evolution equations for the components of
the Weyl tensor one carries out a completely analogous
calculation with the zero-quantity 
\[
\Lambda^*_{\bmb\bmc\bmd}\equiv \nabla^\bma {d^*}_{\bma\bmb\bmc\bmd} 
\]
and the decomposition
\[
d^*_{\bma\bmb\bmc\bmd}= 2 (l_{\bmb[\bmc}{d^*}_{\bmd]\bma}- l_{\bmf[\bmc}{d^*}_{\bmd]\bmb}) + 2(
\tau_{[\bmc}{d}_{\bmd]\bme} {\epsilon^\bme}_{\bma\bmb}+ \tau_{[\bma}{d}_{\bmb]\bme}
\epsilon^\bme{}_{\bmc\bmd}),
\] 
where the Hodge dual of the rescaled Weyl tensor is defined as
\[
d^*_{abcd} \equiv \frac{1}{2}\epsilon_{ab}^{ef}d_{cdef}.
\]

More precisely, the decomposition 
\[
 \Lambda^*_{\bmb\bmc\bmd}=\tau^\bma \mathcal{D} d^*_{\bma\bmb\bmc\bmd}+ \mathcal{D}^\bma {d^*}_{\bma\bmb\bmc\bmd} , 
\]
leads, after a lengthy computation, to the evolution equation
\begin{equation}
\Lambda^*{}_{(\bmi| \bmzero| \bmj)}=    \partial_\bmzero d^*_{\bmi\bmj} -
  {\epsilon^\bmk}_{\bml(\bmi}D^\bml d_{\bmj)\bmk} - 2 a^\bml {\epsilon_{\bml(\bmi}}^\bmk d_{\bmj)\bmk} +
  \chi d^*_{\bmi\bmj} - 2 {\chi^\bmk}_{(\bmi}d^*_{\bmj)\bmk}=0, \label{EvolutionMagneticWeyl}
\end{equation}
in which all the fields are spatial.

\begin{remark}
{\em The zero-quantities $\Lambda_{\bmb\bmc\bmd}$ and
  $\Lambda^*_{\bmb\bmc\bmd}$ are not independent. In fact,
  $\Lambda_{\bmb\bmc\bmd}=0$ if and only if $\Lambda^*_{\bmb\bmc\bmd}=0$.}
\end{remark}

\begin{remark}
{\em Equations \eqref{EvolutionElectricWeyl} and
  \eqref{EvolutionMagneticWeyl} imply a symmetric hyperbolic evolution
system for the (ten) independent components of the fields
$E_{\bma\bmb}$ and $B_{\bma\bmb}$ ---see e.g. \cite{AlhMenVal17} for
explicit expressions of the associated matrices.}
\end{remark}

\subsection{The subsidiary evolution system}
\label{Subsection:SubsidiarySystem}
The analysis of the relation between the solutions to the evolution
equations and actual solutions to the full conformal Einstein field
equations, the so-called \emph{propagation of the constraints},
requires the construction of a system of \emph{subsidiary evolution
equations for the zero-quantities} associated to the conformal
equations, \eqref{ecfe1}-\eqref{ecfe4}, and the gauge conditions
\eqref{Supplementary1}-\eqref{Supplementary3}. For the standard argument of the propagation of the
constraints to follow through, the subsidiary system is required to be
homogeneous in the zero-quantities. If this is the case, then it
follows from the uniqueness of solutions to symmetric hyperbolic
systems that if the zero-quantities vanish initially, then they will
vanish for all later times as the vanishing (zero) solution is always
a solution of a homogeneous evolution equation. 

\subsubsection{General remarks}
The basic assumption in the construction of the subsidiary evolution
system is that the evolution equations associated to the extended
conformal field equations are satisfied. Hence, we assume that
\[ 
\hat{\Sigma}{}_\bmzero{}^\bmc{}_\bmb=0, \qquad \hat{\Xi}{}^\bmc{}_{\bmd\bmzero\bmb}=0,
\qquad \hat{\Delta}{}_{\bmzero\bmb\bmc}=0,
\]
together with
\[
\Lambda{}_{(\bmi|\bmzero|\bmj)}=0, \qquad \Lambda{}^*{}_{(\bmi|\bmzero|\bmj)}=0.
\]
These evolution equations have been constructed using the gauge conditions
\[
f_\bmzero=0, \qquad \hat{\Gamma}{}_\bmzero{}^\bmb{}_\bmc=0, \qquad
\hat{L}{}_{\bmzero\bmb}=0. 
\]
These gauge conditions will also be used in the construction of the
subsidiary evolution system. Accordingly, the construction requires
the evolution equations for the additional zero-quantities
$\delta{}_a$, $\gamma{}_{ab}$ and $\varsigma{}_{ab}$ which are
associated to the gauge. In our gauge $d_0=0$ so that
\[
\delta_\bmzero=0.
\]
Since $\hat{L}{}_{\bmzero\bmb}=0$, by virtue of the definition of
$S{}_{ab}{}^{cd}$ and the evolution equation for the covector
$\beta_a$, namely, 
\[
\hat{\nabla}_\bmzero \beta_\bma + \beta_\bmzero \beta_\bma - \frac{1}{2}
\eta_{\bmzero\bma}(\beta_\bme \beta^\bme - 2 \lambda \Theta^{-2})=0, 
\]
it follows that
\[
\gamma{}_{\bmzero\bmb}= \hat{L}_{\bmzero\bmb} - \hat{\nabla}_\bmzero \beta_\bmb - \frac{1}{2}
S{}_{\bmzero\bmb}{}^{\bme\bmf} \beta_\bme \beta_\bmf + \lambda \Theta^{-2} \eta{}_{\bmzero\bmb}=0. 
\]
As a result of the $\Theta^{-2}$ in the last term of this equation, it
can only be used away from the conformal boundary ---this is, however,
not a problem in our analysis as the propagation of the constraints
only need to be considered in the regions where $\Theta\neq 0$. 
Moreover, by virtue of the gauge conditions and the evolution equation for the covector $f_a$, we have
\[
\varsigma{}_{\bmzero\bmb} = -\hat{L}{}_{\bmb\bmzero} - \hat{\nabla}{}_{\bmzero} f_\bmb +
\hat{\Gamma}{}_\bmb{}^\bme{}_\bmzero f_\bme =0. 
\]

\subsubsection{The subsidiary equation for the torsion}
To obtain the subsidiary equation for the no-torsion condition we
consider the totally antisymmetric covariant derivative $
\hat{\nabla}_{[\bma} \hat{\Sigma}{}_\bmb{}^\bmd{}_{\bmc]}$ and observe
that
\begin{equation}
3 \hat{\nabla}_{[\bmzero} \hat{\Sigma}{}_{\bmb}{}^\bmd{}_{\bmc]}=
\hat{\nabla}_\bmzero \hat{\Sigma}{}_\bmb{}^\bmd{}_\bmc -
\hat{\Gamma}{}_\bmb{}^\bme{}_\bmzero\hat{\Sigma}{}_\bmc{}^\bmd{}_\bme-
\hat{\Gamma}{}_\bmc{}^\bme{}_\bmzero\hat{\Sigma}{}_\bme{}^\bmd{}_\bmb. 
\label{ses1}
\end{equation}
On the other hand, from the first Bianchi identity
\[
\hat{R}{}^\bmd{}_{[\bmc\bma\bmb]} + \hat{\nabla}_{[\bma}\hat{\Sigma}{}_\bmb{}^\bmd{}_\bmc +
\hat{\Sigma}{}_{[\bma}{}^\bme{}_\bmb \hat{\Sigma}{}_{\bmc]}{}^\bmd{}_\bme=0, 
\]
and the definition of $\hat{\Xi}{}^\bmd{}_{\bmc\bma\bmb}$ one obtains
\begin{equation}
\hat{\nabla}_{[\bma}
  \hat{\Sigma}{}_\bmb{}^\bmd{}_{\bmc]}=-\hat{\Xi}{}^\bmd{}_{[\bmc\bma\bmb]}-
  \hat{\Sigma}{}_{[\bma}{}^\bme{}_\bmb \hat{\Sigma}{}_{\bmc]}{}^\bmd{}_\bme, 
\label{ses2} 
\end{equation}
where it has been used that, by construction,
$\hat{\rho}{}^\bmd{}_{[\bmc\bma\bmb]}=0$. The desired evolution
equation is obtained combining equations \eqref{ses1} and
\eqref{ses2} to yield
\begin{equation} 
 \hat{\nabla}_{\bmzero} \hat{\Sigma}{}_{\bmb}{}^\bmd{}_{\bmc}=-
\frac{1}{3}\hat{\Gamma}{}_\bmc{}^\bme{}_\bmzero \hat{\Sigma}{}_\bme{}^\bmd{}_\bmb -
\frac{1}{3}\hat{\Gamma}{}_\bmc{}^\bme{}_\bmzero\hat{\Sigma}{}_\bme{}^\bmd{}_\bmb -
\hat{\Xi}{}^\bmd{}_{\bmzero\bmb\bmc}. 
\label{ses3}
\end{equation}
This evolution equation is homogeneous in the various
zero-quantities. 

\subsubsection{The subsidiary equation for the Ricci identity}
To obtain a subsidiary equation for the Ricci identity, we consider
the totally symmetrised covariant derivative
$\hat{\nabla}{}_{[\bma}\hat{\Xi}{}^\bmd{}_{|\bme|\bmb\bmc]}$ and observe that
\begin{equation} 
3\hat{\nabla}{}_{[\bmzero}\hat{\Xi}{}^\bmd{}_{|\bme|\bmb\bmc]}=\hat{\nabla}_\bmzero\hat{\Xi}{}^\bmd{}_{\bme\bmb\bmc}-\hat{\Gamma}{}_\bmb{}^\bmf{}_\bmzero\hat{\Xi}{}^\bmd{}_{\bme\bmc\bmf}-\hat{\Gamma}{}_\bmc{}^\bmf{}_\bmzero\hat{\Xi}{}^\bmd{}_{\bme\bmf\bmb}.
\label{ses4}
\end{equation}
Using the second Bianchi identity
\[
\hat{\nabla}_{[\bma}\hat{R}{}^\bmd{}_{|\bme|\bmb\bmc]} + \hat{\Sigma}{}_{[\bma}{}^\bmf{}_\bmb \hat{R}{}^\bmd{}_{|\bme|\bmc]\bmf}=0 
\]
and the definition of $\hat{\Xi}{}^\bmd{}_{\bme\bmb\bmc}$ it follows that
\begin{equation} 
\hat{\nabla}{}_{[\bma}\hat{\Xi}{}^\bmd{}_{|\bme|\bmb\bmc]}= -
\hat{\Sigma}{}_{[\bma}{}^\bmf{}_\bmb \hat{R}{}^\bmd{}_{|\bme|\bmc]\bmf}-
\hat{\nabla}_{[\bma}\hat{\rho}{}^\bmd{}_{|\bme|\bmb\bmc]}. 
\label{ses5}
\end{equation}
The first term on the right-hand side is already of the required
form. The second one needs to be analysed in more detail. For this, it is recalled that
\[
\hat{\rho}{}^\bmd{}_{\bme\bmb\bmc} \equiv C{}^\bmd{}_{\bme\bmb\bmc}+ 2
S{}_{\bme[\bmb}{}^{\bmd\bmf}\hat{L}_{\bmc]\bmf}. 
\]
Thus,
\[
\hat{\nabla}{}_{[\bma}\hat{\rho}{}^\bmd{}_{|\bme|\bmb\bmc]}=\hat{\nabla}_{[\bma}C{}^\bmd_{|\bme|\bmb\bmc]}+
2S{}_{\bme[\bmb}{}^{\bmd\bmf}\hat{\nabla}{}_\bma \hat{L}{}_{\bmc]\bmf}. 
\]
To further expand this expression we consider the combination $\epsilon{}_\bmf{}^{\bma\bmb\bmc}\hat{\nabla}_\bma\hat{\rho}{}^\bmd{}_{\bme\bmb\bmc}$. A direct computation shows that
\[
\hat{\nabla}{}_{[\bma}C{}^\bmd{}_{|\bme|\bmb\bmc]}=\nabla{}_{[\bma}C{}^\bmd{}_{|\bme|\bmb\bmc]}+
\delta{}_{[\bma}{}^\bmd f{}_{|\bmf}C{}^\bmf{}_{\bme|\bmb\bmc]}+\eta{}_{\bme[\bma}f^\bmf
C{}^\bmd{}_{|\bmf|\bmb\bmc]}. 
\]
Moreover, one has
\[
\epsilon{}_\bmf{}^{\bma\bmb\bmc}\nabla_\bma C{}^\bmd{}_{\bme\bmb\bmc}= -
\epsilon{}_\bme{}^{\bmd\bmg\bmh}\nabla_\bma C{}^\bma{}_{\bmf\bmg\bmh}. 
\]
Thus, using that $C{}^\bmc{}_{\bmd\bma\bmb}= \Theta d{}^\bmc{}_{\bmd\bma\bmb}$ and the definition of the zero quantity $\Lambda_{\bma\bmb\bmc}$ it follows that
\[
\epsilon{}_\bmf{}^{\bma\bmb\bmc}\hat{\nabla}{}_\bma C{}^\bmd{}_{\bme\bmb\bmc}=\Theta
\epsilon{}_\bme{}^{\bmd\bmg\bmh}\Lambda{}_{\bmf\bmg\bmh} + 2 \nabla^\bmg \Theta
d{}^{*\bmd}{}_{\bme\bmf\bmg}+ 2 \Theta f^\bmg d{}^*{}_{\bmg\bme\bmf}{}^\bmd + 2 \Theta f^\bmg
d^{*\bmd}{}_{\bmg\bmf\bme}. 
\]
A similar computation using the definition of $\hat{\Delta}{}_{\bma\bmb\bmc}$ yields
\[
 2 \epsilon{}_\bmf{}^{\bma\bmb\bmc}S{}_{\bme\bmb}{}^{\bmd\bmg}\hat{\Delta}_{\bma\bmc\bmg}= 2 \Theta
 \beta_\bmg d^{*\bmg}{}_{\bme\bmf}{}^\bmd - 2 \Theta \beta_{\bmg} d^{*\bmg\bmd}{}_{\bmf\bme}. 
\]
Thus, using the symmetries of $d^{*}{}_{\bmc\bmd\bma\bmb}$ and the
definition of $\delta_\bma$ one concludes that
\[
\epsilon{}_\bmf{}^{\bma\bmb\bmc}\hat{\nabla}_\bma \hat{\rho}{}^\bmd{}_{\bme\bmb\bmc}=\Theta
\epsilon{}_\bme{}^{\bmd\bmg\bmh}\Lambda_{\bmf\bmg\bmh} - 2 \Theta \delta^\bmg
d{}^{*\bmd}{}_{\bme\bmf\bmg}+ \epsilon{}_\bmf{}^{\bma\bmb\bmc}
S{}_{\bme\bmb}{}^{\bmd\bmg}\hat{\Delta}_{\bma\bmc\bmg}. 
\]
Alternatively, using the properties of the generalised Hodge duals we can write
\[
\hat{\nabla}_{[\bma} \hat{\rho}{}^\bmd{}_{|\bme|\bmb\bmc]}= \frac{1}{6} \Theta
\epsilon{}^\bmf{}_{\bma\bmb\bmc} \epsilon{}_\bme{}^{\bmd\bmg\bmh}\Lambda_{\bmf\bmg\bmh} - \frac{1}{3}
\Theta \epsilon{}^\bmf{}_{\bma\bmb\bmc} \delta^\bmg d{}^{*\bmd}{}_{\bme\bmf\bmg}-
S{}_{\bme[\bmb}{}^{\bmd\bmg}\hat{\Delta}_{\bma\bmc]\bmg}. 
\]
Combining the expressions, we obtain the following evolution equation
\begin{equation} 
\begin{split} 
\hat{\nabla}_\bmzero \hat{\Xi}{}^\bmd{}_{\bme\bmb\bmc}=&\hat{\Gamma}{}_\bmb{}^\bmf{}_\bmzero \hat{\Xi}{}^\bmd{}_{\bme\bmc\bmf} + \hat{\Gamma}{}_\bmc{}^\bmf{}_\bmzero \hat{\Xi}{}^\bmd{}_{\bme\bmf\bmb}- \hat{\Sigma}{}_\bmb{}^\bmf{}_\bmc \hat{R}{}^\bmd{}_{\bme\bmzero\bmf} - \frac{1}{2} \Theta \epsilon{}^\bmf{}_{\bmzero\bmb\bmc} \epsilon{}_\bme{}^{\bmd\bmg\bmh}\Lambda{}_{\bmf\bmg\bmh} \\
&+ \epsilon{}^\bmf{}_{\bmzero\bmb\bmc} \delta^\bmg d{}^{*\bmd}{}_{\bme\bmf\bmg} + 3S{}_{\bme\bmzero}{}^{\bmd\bmg}
\hat{\Delta}{}_{\bmc\bmb\bmg}, \label{ses6}  
\end{split}
\end{equation}
which is homogeneous in the zero-quantities.

\subsubsection{Subsidiary equation for the Cotton equation}
Now, to compute the subsidiary equation for the Cotton equation we
consider $\hat{\nabla}{}_{[\bma} \hat{\Delta}{}_{\bmb\bmc]\bmd}$. On
the one hand, a direct computation yields
\[
3 \hat{\nabla}{}_{[\bmzero} \hat{\Delta}{}_{\bmb\bmc]\bmd}= \hat{\nabla}{}_{\bmzero}
\hat{\Delta}{}_{\bmb\bmc\bmd}- \hat{\Gamma}{}_\bmb{}^\bme{}_\bmzero \hat{\Delta}_{\bmc\bme\bmd} -
\hat{\Gamma}{}_\bmc{}^\bme{}_\bmzero \hat{\Delta}_{\bme\bmb\bmd}. 
\]
On the other hand, using the definition of
$\hat{\Xi}{}^\bme{}_{\bmc\bma\bmb}$ and the symmetries of
$\hat{\rho}^\bme{}_{\bmc\bma\bmb}$ one obtains
\[ 
 \hat{\nabla}{}_{[\bma} \hat{\Delta}{}_{\bmb\bmc]\bmd}=-
 \hat{\Xi}{}^\bme{}_{[\bmc\bma\bmb]}\hat{L}{}_{\bme\bmd} -
 \hat{\Xi}{}^\bme{}_{\bmd[\bma\bmb}\hat{L}_{\bmc]\bme} - \hat{\rho}{}^\bme{}_{\bmd[\bma\bmb}
 \hat{L}{}_{\bmc]\bme} + \hat{\Sigma}{}_{[\bma}{}^\bme{}_\bmb \hat{\nabla}{}_{|\bme|}
 \hat{L}_{\bmc]\bmd} - \hat{\nabla}{}_{[\bma} d_{|\bme} d{}^\bme{}_{\bmd|\bmb\bmc]}- d_\bme
 \hat{\nabla}{}_{[\bma} d{}^\bme{}_{|\bmd|\bmb\bmc]}. 
\]
Using the definition of $\delta_a$ and $\gamma{}_{ab}$ one finds that
\[
\hat{\nabla}_{[\bma}d_{|\bme} d{}^\bme{}_{\bmd|\bmb\bmc]}= - \Theta
\delta_{[\bma}\beta_{|\bme}d{}^\bme{}_{\bmd|\bmb\bmc]} - \Theta
\gamma_{[\bma|\bme}d{}^\bme{}_{\bmd|\bmb\bmc]} - \Theta f_{[\bma}
\beta_{|\bme}d{}^\bme{}_{\bmd|\bmb\bmc]}+ \Theta \hat{L}_{[\bma|\bme} d{}^\bme{}_{\bmd|\bmb\bmc]}. 
\]
Finally, a calculation shows that $\epsilon{}_{\bmf}{}^{\bma\bmb\bmc} \nabla_\bma d{}^\bme{}_{\bmd\bmb\bmc} = \epsilon{}_{\bmd}{}^{\bme\bmg\bmh} \nabla_\bma d{}^\bme{}_{\bmf\bmg\bmh}$, 
so that using 
\[
\hat{\nabla}_{[\bma} C{}^\bmd{}_{|\bme|\bmb\bmc]}= \nabla_{[\bma} C{}^\bmd{}_{|\bme|\bmb\bmc]} +
\delta{}_{[\bma}{}^\bmd f_{|\bmf} C{}^\bmf{}_{\bme|\bmb\bmc]} + \eta_{\bme[\bma} f^\bmf
C{}^\bmd{}_{|\bmf|\bmb\bmc]}, 
\]
and the properties of the generalised duals we find that
\[
\hat{\nabla}_{[\bma} d{}^\bme{}_{|\bmd|\bmb\bmc]} = \frac{1}{6} \epsilon{}_{\bma\bmb\bmc}{}^\bmf
\epsilon{}_\bmd{}^{\bme\bmg\bmh}\Lambda{}_{\bmf\bmg\bmh} + \delta{}_{[\bma}{}^\bme f_{|\bmf}
d{}^\bmf{}_{\bmd|\bmb\bmc]}+ \eta{}_{\bmd[\bma} f^\bmf d{}^\bme{}_{|\bmf|\bmb\bmc]}. 
\]
Combining the above expressions and using the properties of the decomposition of $\hat{\rho}{}^\bmc{}_{\bmd\bma\bmb}$ we obtain the expression
\[
\hat{\nabla}_{[\bma} \hat{\Delta}{}_{\bmb\bmc]\bmd}= -
\hat{\Xi}{}^\bme{}_{[\bmc\bma\bmb]}\hat{L}_{\bme\bmd}- \hat{\Xi}{}^\bme{}_{\bmd[\bma\bmb}
\hat{L}{}_{\bmc]\bme} + \hat{\Sigma}{}_{[\bma}{}^{\bme}{}_\bmb \hat{\nabla}_{|\bme|}
\hat{L}{}_{\bmc]\bmd} + \Theta \delta_{[\bma} \beta_{|\bme} d{}^\bme{}_{\bmd|\bmb\bmc]} +
\Theta \gamma{}_{[\bma|\bme} d{}^\bme{}_{\bmd|\bmb\bmc]} - \frac{1}{6}
\epsilon{}_{\bma\bmb\bmc}{}^\bmf \epsilon{}_\bmd{}^{\bme\bmg\bmh}\Lambda{}_{\bmf\bmg\bmh} \beta_\bme 
\]
and, eventually, the evolution equation
\[
\begin{split}
\hat{\nabla}_\bmzero \hat{\Delta}_{\bmb\bmc\bmd}=& \hat{\Gamma}{}_\bmb{}^\bme{}_\bmzero
\hat{\Delta}{}_{\bmc\bme\bmd}+ \hat{\Gamma}{}_\bmc{}^\bme{}_\bmzero \hat{\Delta}{}_{\bme\bmb\bmd}-
\hat{\Xi}{}^\bme{}_{\bmzero\bmb\bmc} \hat{L}{}_{\bme\bmd} + \delta_\bmb d_\bme d{}^\bme{}_{\bmd\bmc\bmzero} +
\delta_\bmc d_\bme d{}^\bme{}_{\bmd\bmzero\bmb} \\
&+ \Theta \gamma{}_{\bmb\bme} d{}^\bme{}_{\bmd\bmc\bmzero} +
\Theta \gamma{}_{\bmc\bme} d{}^\bme{}_{\bmd\bmzero\bmb} - \frac{1}{2}\epsilon{}_{\bmzero\bmb\bmc}{}^\bmf
\epsilon{}_\bmd{}^{\bme\bmg\bmh} \Lambda{}_{\bmf\bmg\bmh} \beta_\bme, 
\end{split}
\]
which is homogeneous in zero-quantities as required.

\subsubsection{Subsidiary equations for the Bianchi identity}
Finally, we are left to show the propagation of the physical Bianchi
identity. In view of the contracted derivative appearing in this
equation, the construction of suitable subsidiary equations is more
involved. 

\medskip
Since $h{}_{a}{}^b= \delta{}_a{}^b + \tau_a \tau^b$, it follows then
that 
\begin{equation} 
\Lambda{}_{abc}= \delta{}_a{}^d
\Lambda{}_{dbc}=(h{}_{a}{}^d- \tau{}_a
\tau{}^d)\Lambda{}_{dbc}=h{}_{a}{}^d\Lambda{}_{dbc}- \tau{}_a
\tau{}^d\Lambda{}_{dbc}. 
\label{ses7}
\end{equation}
Now, let 
\[
\Omega{}_{abc}\equiv h{}_{a}{}^d\Lambda{}_{dbc}, \qquad
\Omega{}_{bc}\equiv \tau{}^d\Lambda{}_{dbc}.
\]
By construction, the tensor $\Omega{}_{bc}$ is antisymmetric, hence it
admits a decomposition in \emph{electric} and \emph{magnetic
  parts}. That is, one can write 
\[
\Omega{}_{bc}= \Omega{}_{[bc]}= \Omega^*_e \epsilon{}^e{}_{bc}- 2
\Omega{}_{[b}\tau{}_{c]}, 
\]
where
\[
\Omega_a \equiv \Omega{}_{cb}\tau^b h{}_a{}^c, \qquad \Omega^*_a \equiv \Omega^*_{cb}\tau^b h{}_a{}^c.
\]
Furthermore, one also has that 
\[
\Omega{}_{dbc}= \Omega{}_{d[bc]}= H^*_{de} \epsilon{}^e{}_{dc}-
2H{}_{d[b} \tau{}_{c]},
\]
where
\[
H{}_{da}\equiv \Omega{}_{dcb}\tau^b h{}_{a}{}^c,\qquad H{}^*{}_{da}\equiv \Omega^*_{dcb}\tau^b h{}_{a}{}^c.
\]
Substituting the above expressions for $\Omega{}_{bc}$ and
$\Omega{}_{dbc}$ into equation \eqref{ses7} it follows then that
\begin{equation}
\Lambda{}_{abc}=h{}_a{}^d(H^*_{de} \epsilon{}^e{}_{dc}- 2H{}_{d[b}
n{}_{c]})- n{}_a( \Omega{}^*_e \epsilon{}^e{}_{bc}- 2
\Omega{}_{[b}n{}_{c]}). 
\label{ses8}
\end{equation}
Crucially, it can be verified that if the evolution equations \eqref{EvolutionElectricWeyl}
and \eqref{EvolutionMagneticWeyl} for the electric and magnetic part sof the rescaled Weyl
tensor are satisfied then 
\[
H{}_{da}=0, \qquad H{}^*{}_{da}=0.
\]
If the above holds, then equation \eqref{ses8} reduces to 
\[
\Lambda{}_{abc}= n{}_a(2 \Omega{}_{[b}n{}_{c]}- \Omega{}^*{}_e
\epsilon{}^e{}_{bc})= n{}_a \Omega{}_{bc}. 
\]

\begin{remark}
{\em The tensors $\Omega_a$ and $\Omega^*_a$ encode, respectively, the \emph{Gauss
  constraints} for the electric and magnetic parts of the Weyl tensor
---that is, the equations
\[
\mathcal{D}^a d_{ab}=0, \qquad \mathcal{D}^ad_{ab}^*=0.
\]}
\end{remark}

To conclude the computation, it remains to compute $\nabla{}^a
\Lambda{}_{abc}$.  A direct calculation gives
\begin{equation}
\nabla{}^a \Lambda{}_{abc}=  \nabla{}^a \tau{}_a
\Omega{}_{bc}+\tau{}_a \nabla{}^a\Omega{}_{bc}=\nabla{}^a \tau{}_a
\Omega{}_{bc}+\partial{}_\tau\Omega{}_{bc}. 
\label{ses9}
\end{equation}
An alternative computation of $\nabla{}^a \Lambda{}_{abc}$ using the commutator of the covariant derivative $\nabla$ gives
\[
2 \nabla^b \Lambda{}_{bcd}= 2 \nabla{}^{[b} \nabla{}^{a]}d{}_{abcd} =
2 R{}^e{}_{[c}{}^{ba}d{}_{d]eab} - 2 R{}^e{}_a{}^{ba}d{}_{ebcd} +
\Sigma{}_b{}^e{}_a \nabla_e d{}^{ab}{}_{cd}. 
\]
Observing that $\hat{\Sigma}{}_a{}^c{}_b = \Sigma{}_a{}^c{}_b$ as
$\hat{\nabla} - \nabla= S(f)$, it follows that the equation 
\[
\hat{R}{}^a{}_{bcd} - {R}{}^a{}_{bcd}= 2(\delta{}^a{}_{[c}
\hat{\nabla}{}_{d]}\hat{f}_b + \hat{\nabla}{}_{[c} \hat{f}{}^a
\hat{g}{}_{d]b} - \delta{}^a{}_b \hat{\nabla}{}_{[c} \hat{f}{}_{d]} -
\delta{}^a{}_{[c} \hat{f}{}_{d]} \hat{f}_b +
\hat{g}{}_{b[c}\hat{f}{}_{d]}\hat{f}^a + \delta{}^a{}_{[c}
\hat{g}{}_{d]b} \hat{f}_e \hat{f}^e ) 
\]
together with the definitions of the zero quantities
$\hat{\Xi}{}^\bmc{}_{\bmd\bma\bmb}$ and $\varsigma{}_{\bma\bmb}$ and
the symmetries of $d{}_{\bma\bmb\bmc\bmd}$ so that after projecting
the equations with respect to the frame one obtains
\begin{equation}
\nabla{}^\bmb \Lambda{}_{\bmb\bmc\bmd} = \hat{\Xi}{}^\bme{}_{[\bmc}{}^{\bmb\bma} d{}_{\bmd]\bme\bma\bmb} -
\hat{\Xi}{}^\bme{}_\bma{}^{\bmb\bma} d{}_{\bme\bmb\bmc\bmd} + \frac{1}{2}
\hat{\Sigma}{}_\bmb{}^\bme{}_\bma \nabla_\bme d{}^{\bma\bmb}{}_{\bmc\bmd} +
\varsigma{}^{\bma\bmb}d{}_{\bma\bmb\bmc\bmd}, \label{ses10} 
\end{equation}
which is homogeneous in zero-quantities. Hence, combining equations
\eqref{ses9} and \eqref{ses10}, we obtain the following equation for
the components of $\Omega_{ab}$:
\[
\partial{}_\bmzero\Omega{}_{\bmb\bmc} = \hat{\Xi}{}^\bme{}_{[\bmb}{}^{\bma\bmf}
d{}_{\bmc]\bme\bmf\bma} - \hat{\Xi}{}^\bme{}_\bmf{}^{\bma\bmf} d{}_{\bme\bma\bmb\bmc} + \frac{1}{2}
\hat{\Sigma}{}_\bma{}^\bme{}_\bmf \nabla_\bme d{}^{\bmf\bma}{}_{\bmb\bmc} +
\varsigma{}^{\bmf\bma}d{}_{\bmf\bma\bmb\bmc}-\chi \Omega{}_{\bmb\bmc}.
\]

\subsubsection{Subsidiary equations for the gauge conditions}
To conclude our discussion of the subsidiary equations, we are left
with the task of providing evolution equations for the zero-quantities
associated to the gauge. In order to do so we expand
$\hat{\nabla}{}_{[\bmzero} \delta{}_{\bmb]}$,
$\hat{\nabla}{}_{[\bmzero} \gamma{}_{\bmb]\bmc}$ and
$\hat{\nabla}{}_{[\bmzero} \varsigma{}_{\bmb\bmc]}$ to get
\begin{eqnarray*}
&& 2\hat{\nabla}{}_{[\bmzero} \delta{}_{\bmb]}= \hat{\nabla}{}_\bmzero \delta_\bmb + \hat{\Gamma}{}_\bmb{}^\bme{}_\bme \delta_\bme, \\
&& 2\hat{\nabla}{}_{[\bmzero} \gamma{}_{\bmb]\bmc}= \hat{\nabla}{}_\bmzero \gamma_{\bmb\bmc} + \hat{\Gamma}{}_\bmb{}^\bme{}_\bmzero \gamma_{\bme\bmc}, \\
&& 3\hat{\nabla}{}_{[\bmzero} \varsigma{}_{\bmb\bmc]}= \hat{\nabla}{}_\bmzero \varsigma_{\bmb\bmc} - \hat{\Gamma}{}_\bmb{}^\bme{}_\bmzero \varsigma_{\bmc\bme} - \hat{\Gamma}{}_\bmc{}^\bme{}_\bmzero \varsigma_{\bme\bmb}.
\end{eqnarray*}
We then compute $\hat{\nabla}{}_{[a}\delta{}_{b]}$,
$\hat{\nabla}{}_{[a}\gamma{}_{b]c}$ and
$\hat{\nabla}{}_{[a}\varsigma{}_{bc]}$ explicitly making use of the
definitions of the zero-quantities and re-expressing the result in
terms of zero-quantities so as to obtain
\begin{eqnarray*}
&& 2 \hat{\nabla}{}_{[\bma}\delta{}_{\bmb]}= - \gamma{}_{[\bma\bmb]}+
   \varsigma_{\bma\bmb} - \frac{1}{2} \Theta^{-1} \Sigma{}_\bma{}^\bme{}_\bmb
   \hat{\nabla}_\bme \Theta, \\
&& 2 \hat{\nabla}_{[\bma} \gamma{}_{\bmb]\bmc}= \hat{\Delta}{}_{\bma\bmb\bmc} + \beta_\bme \hat{\Xi}{}^\bme{}_{\bmc\bma\bmb} - \hat{\Sigma}{}_\bma{}^\bme{}_\bmb \hat{\nabla}_\bme \beta_\bmc + 2 \beta_\bmc \gamma_{[\bma\bmb]} - 2 \beta_{[\bma} \gamma{}_{\bmb]\bmc}\\
&& \hspace{2cm}+ \eta{}_{\bmc[\bma} \beta^\bme \gamma_{\bmb]\bme} + 2 \lambda
   \Theta^{-2} \delta_{[\bma} \eta{}_{\bmb]\bmc} + \beta_{[\bma}\eta_{\bmb]\bmc} \beta_\bme
   \beta^\bme - 2 \lambda \Theta^{-2} \eta_{\bmc[\bma} \beta_{\bmb]}, \\
&& \hat{\nabla}_{[\bma}\varsigma{}_{\bmb\bmc]}= \frac{1}{2} \hat{\Delta}_{[\bma\bmb\bmc]} + \frac{1}{2} \hat{\Xi}{}^\bme{}_{[\bmc\bma\bmb]}f_\bme - \frac{1}{2} \hat{\Sigma}{}_{[\bma}{}^\bme{}_\bmb \hat{\nabla}_{|\bme|} f_{\bmc]}.
\end{eqnarray*}
From the above expressions it follows that the evolution equations for
$\delta_a$, $\gamma_{ab}$ and $\varsigma_{ab}$ are given by 
\begin{subequations}
\begin{eqnarray}
 &&\hat{\nabla}_\bmzero \delta_\bmi= \gamma_{\bmi\bmzero}- \hat{\Gamma}{}_\bmi{}^\bme{}_\bmzero \delta_\bme , \label{SubsidiaryGauge1}\\
&& \hat{\nabla}_\bmzero \gamma_{\bmi\bmc}= - \gamma_{\bmj\bmc} \hat{\Gamma}{}_\bmi{}^\bmj{}_\bmzero
   - \beta_\bmzero \gamma_{\bmi\bmc} - \beta_\bmc \gamma_{\bmi\bmzero} + \eta_{\bmzero\bmc}(\beta^\bme
   \gamma_{\bmi\bme} - 2 \lambda \Theta^{-2} \delta_\bmi), \label{SubsidiaryGauge2}\\
&& \hat{\nabla}_\bmzero \varsigma_{\bmj\bmk} =
   \hat{\Gamma}{}_\bmj{}^\bme{}_\bmzero \varsigma{}_{\bmk\bme} +
   \Gamma{}_\bmk{}^\bme{}_\bmzero \varsigma{}_{\bme\bmj} + \frac{1}{2}
   \hat{\Delta}_{\bmj\bmk\bmzero} + \frac{1}{2}
   \hat{\Xi}{}^\bme{}_{\bmzero\bmj\bmk} f_\bme + \frac{1}{2}
   \hat{\Sigma}{}_\bmj{}^\bme{}_\bmk
   \hat{\Gamma}{}_\bme{}^\bmf{}_\bmzero f_\bmf, \label{SubsidiaryGauge3}
\end{eqnarray}
\end{subequations}
where, in particular, the evolution equation for the covector $\beta_a$,
\[
 \hat{\nabla}_\bmzero \beta_\bma + \beta_\bmzero \beta_\bma -
  \frac{1}{2} \eta_{\bmzero\bma}(\beta_\bme \beta^\bme - 2 \lambda
  \Theta^{-2})=0,
\]
has been used in the derivation of
equation \eqref{SubsidiaryGauge2}. Again, as required, the equations
\eqref{SubsidiaryGauge1}-\eqref{SubsidiaryGauge3} are homogeneous in
various zero-quantities. 

\begin{remark}
{\em Observe that equation \eqref{SubsidiaryGauge2} contains the
  potentially singular term  $\lambda \Theta^{-2} \delta_\bmi$. As
  such, this equation can only be used away from the conformal
  boundary where $\Theta\neq 0$. This is a consequence of the use of a
conformal Gaussian gauge hinged on a standard Cauchy
hypersurface. This singular behaviour is of no consequence in our
analysis as one is only interested on solutions to the subsidiary
equations away from the conformal boundary.}
\end{remark}

\subsection{Summary: structural properties of the evolution and subsidiary equations}
As conclusion of the long computations in this section, we now provide
a summary of the conformal evolution equations, the associated
subsidiary system and the structural properties of these systems which
will be required in the reminder of our analysis.

\medskip
The computations discussed in the previous subsections show that in a
conformal Gaussian gauge the various fields associated to the extended
vacuum conformal Einstein field equations satisfy the evolution equations
\begin{subequations}
\begin{eqnarray}
&& \partial_\tau e_\bmb{}^\nu = - \hat{\Gamma}{}_\bmb{}^\bmc{}_\bmzero e_\bmc{}^\nu, \label{XCFEEvolution1}\\
&&\partial_\tau \hat{L}{}_{\bmd\bmb}=\hat{\Gamma}{}_\bmzero{}^\bmc{}_\bmd \hat{L}{}_{\bmc\bmb}+ \hat{\Gamma}{}_\bmzero{}^\bmc{}_\bmb \hat{L}{}_{\bmd\bmc} + d_\bma \hat{d}{}^\bma{}_{\bmb\bmzero\bmd}, \label{XCFEEvolution2}\\
&&  \partial_\tau f_\bmi = - f_\bmj \hat{\Gamma}{}_\bmi{}^\bmj{}_\bmzero + \hat{L}{}_{\bmi\bmzero}, \label{XCFEEvolution3}\\
&& \partial_\tau (\hat{\Gamma}{}_\bmb{}^\bmc{}_\bmd) = -\hat{\Gamma}{}_\bmf{}^\bmc{}_\bmd \hat{\Gamma}{}_\bmb{}^\bmf{}_\bmzero-\Xi \hat{d}{}^\bmc{}_{\bmd\bmzero\bmb}-2{\delta_\bmd}^\bmc \hat{L}_{\bmb\bmzero}-2{\delta_\bmzero}^c \hat{L}_{\bmb\bmd}+2 g_{\bmd\bmzero}g^{\bmc\bme}\hat{L}_{\bmb\bme},\label{XCFEEvolution4}\\
&& \partial_\tau d_{\bmb\bmd}+\epsilon^{\bme\bmf}{}_{(\bmd}D_\bmf d^*_{\bmb)\bme}= 2a_\bmf\epsilon^{\bme\bmf}_{(\bmd} d^*_{\bmb)\bme}- \chi d_{\bmb\bmd} +2 \chi^\bmf{}_{(\bmb} d_{\bmd)\bmf}, \label{XCFEEvolution5}\\
&&  \partial_\tau d^*_{\bmb\bmd} -\epsilon^\bme{}_{\bmf(\bmd}D^\bmf
   d_{\bmb)\bme} = 2 a^\bmf {\epsilon_{\bmf(\bmd}}^\bme d_{\bmb)\bme} - \chi
   d^*_{\bmb\bmd} + 2 \chi^\bmf_{(\bmb} d^*_{\bmd)\bmf}. \label{XCFEEvolution6}
\end{eqnarray}
\end{subequations}
Letting $\bme$, $\bmGamma$, $\hat{\bmL}$ and $\bmphi$ denote,
respectively, the independent components of the coefficients of the
frame, the connection coefficients, the Schouten tensor of the Weyl
connection and the rescaled Weyl tensor and setting, for convenience,
$\hat{\mathbf{u}}\equiv ( \hat{\bmupsilon},\hat{\bmphi})$,
$\hat{\bmupsilon}\equiv  (\bme, \hat{\bmGamma}, \hat{\bmL})$, one has
that equations \eqref{XCFEEvolution1}-\eqref{XCFEEvolution6} can be written, schematically, in the form
\begin{subequations}
\begin{eqnarray}
&& \partial_\tau \hat{\bmupsilon}= \mathbf{K} \hat{\bmupsilon} + \mathbf{Q}(\hat{\bmupsilon}, \hat{\bmupsilon}) + \mathbf{L}(\bar{x})
   \hat{\bmphi}, \label{he1}  \\
&& (\mathbf{I} + \mathbf{A}^0(\bme)) \partial_\tau \hat{\bmphi} + \mathbf{A}^\alpha (\bme) \partial_\alpha \hat{\bmphi}
   = \mathbf{B}(\hat{\bmGamma}) \hat{\bmphi}, \label{he2}
\end{eqnarray}
\end{subequations}
where $\mathbf{K}$ and $\mathbf{Q}$ denote, respectively, a matrix and
a quadratic form, both with constant coefficients while $\mathbf{L}$
is a matrix with coefficients depending smoothly on the
coordinates. Moreover, $\mathbf{A}^\mu(\bme)$ denote, for
$\mu=0,\ldots,3$ Hermitian matrix-valued functions depending smoothly on $\bme$. In
particular  $\mathbf{I} + \mathbf{A}^0(\bme)$ is positive definite for
 $\bme$ suitably close to the background solution ---with closeness
 understood in the sense of Sobolev norms. Finally, $\mathbf{B}(\hat{\bmGamma})$ denotes a
smooth matrix-value function of the component of the connection. 

\begin{remark}
{\em Altogether, the conformal evolution system described by equations
\eqref{he1}-\eqref{he2} constitutes a quasilinear symmetric hyperbolic
system for which a well-posedness theory is available ---see
\cite{Kat75}, also \cite{CFEBook} for an abridged version. This theory
will be used in the remaining sections of this article to establish
the stability of the solution to the Einstein field equations given by
the metric \eqref{BackgroundPhysicalMetric}.}
\end{remark}

\begin{remark}
{\em A remarkable structural property of the conformal evolution
  system \eqref{he1}-\eqref{he2} is that the equations in \eqref{he1}
  are, in fact, mere transport equations along conformal
  geodesics. The true hyperbolic content of the system is contained in
the \emph{Bianchi subsystem} \eqref{he2}. This property does not play
any particular role in our analysis, but it may prove key in, for example, the
analysis of formation of singularities.}
\end{remark}

Regarding the subsidiary evolution system, the key conclusion from
the system
\begin{subequations}
\begin{eqnarray}
&& \hat{\nabla}_{\bmzero} \hat{\Sigma}{}_{\bmb}{}^\bmd{}_{\bmc}=-
\frac{1}{3}\hat{\Gamma}{}_\bmc{}^\bme{}_\bmzero \hat{\Sigma}{}_\bme{}^\bmd{}_\bmb -
\frac{1}{3}\hat{\Gamma}{}_\bmc{}^\bme{}_\bmzero\hat{\Sigma}{}_\bme{}^\bmd{}_\bmb -
\hat{\Xi}{}^\bmd{}_{\bmzero\bmb\bmc}, \label{SubsidaryEqnFirst}\\
&& \hat{\nabla}_\bmzero \hat{\Xi}{}^\bmd{}_{\bme\bmb\bmc}=\hat{\Gamma}{}_\bmb{}^\bmf{}_\bmzero \hat{\Xi}{}^\bmd{}_{\bme\bmc\bmf} + \hat{\Gamma}{}_\bmc{}^\bmf{}_\bmzero \hat{\Xi}{}^\bmd{}_{\bme\bmf\bmb}- \hat{\Sigma}{}_\bmb{}^\bmf{}_\bmc \hat{R}{}^\bmd{}_{\bme\bmzero\bmf} - \frac{1}{2} \Theta \epsilon{}^\bmf{}_{\bmzero\bmb\bmc} \epsilon{}_\bme{}^{\bmd\bmg\bmh}\Lambda{}_{\bmf\bmg\bmh} \\
&& \hspace{2cm} + \epsilon{}^\bmf{}_{\bmzero\bmb\bmc} \delta^\bmg d{}^{*\bmd}{}_{\bme\bmf\bmg} + 3S{}_{\bme\bmzero}{}^{\bmd\bmg}
\hat{\Delta}{}_{\bmc\bmb\bmg}, \\
&& \hat{\nabla}_\bmzero \hat{\Delta}_{\bmb\bmc\bmd}= \hat{\Gamma}{}_\bmb{}^\bme{}_\bmzero
\hat{\Delta}{}_{\bmc\bme\bmd}+ \hat{\Gamma}{}_\bmc{}^\bme{}_\bmzero \hat{\Delta}{}_{\bme\bmb\bmd}-
\hat{\Xi}{}^\bme{}_{\bmzero\bmb\bmc} \hat{L}{}_{\bme\bmd} + \delta_\bmb d_\bme d{}^\bme{}_{\bmd\bmc\bmzero} +
\delta_\bmc d_\bme d{}^\bme{}_{\bmd\bmzero\bmb} \\
&&  \hspace{2cm}+ \Theta \gamma{}_{\bmb\bme} d{}^\bme{}_{\bmd\bmc\bmzero} +
\Theta \gamma{}_{\bmc\bme} d{}^\bme{}_{\bmd\bmzero\bmb} - \frac{1}{2}\epsilon{}_{\bmzero\bmb\bmc}{}^\bmf
\epsilon{}_\bmd{}^{\bme\bmg\bmh} \Lambda{}_{\bmf\bmg\bmh} \beta_\bme, \\
&& \hat{\nabla}{}_\bmzero \hat{\Omega}{}_{\bmb\bmc} = \hat{\Xi}{}^\bme{}_{[\bmb}{}^{\bma\bmf}
d{}_{\bmc]\bme\bmf\bma} - \hat{\Xi}{}^\bme{}_\bmf{}^{\bma\bmf} d{}_{\bme\bma\bmb\bmc} + \frac{1}{2}
\hat{\Sigma}{}_\bma{}^\bme{}_\bmf \nabla_\bme d{}^{\bmf\bma}{}_{\bmb\bmc} +
\varsigma{}^{\bmf\bma}d{}_{\bmf\bma\bmb\bmc}-\chi \Omega{}_{\bmb\bmc},\\
&&\hat{\nabla}_\bmzero \delta_\bmi= \gamma_{\bmi\bmzero}- \hat{\Gamma}{}_\bmi{}^\bme{}_\bmzero \delta_\bme; \\
&& \hat{\nabla}_\bmzero \gamma_{\bmi\bmc}= - \gamma_{\bmj\bmc} \hat{\Gamma}{}_\bmi{}^\bmj{}_\bmzero
   - \beta_\bmzero \gamma_{\bmi\bmc} - \beta_\bmc \gamma_{\bmi\bmzero} + \eta_{\bmzero\bmc}(\beta^\bme
   \gamma_{\bmi\bme} - 2 \lambda \Theta^{-2} \delta_\bmi), \\
&& \hat{\nabla}_\bmzero \varsigma_{\bmj\bmk} =
   \hat{\Gamma}{}_\bmj{}^\bme{}_\bmzero \varsigma{}_{\bmk\bme} +
   \Gamma{}_\bmk{}^\bme{}_\bmzero \varsigma{}_{\bme\bmj} + \frac{1}{2}
   \hat{\Delta}_{\bmj\bmk\bmzero} + \frac{1}{2}
   \hat{\Xi}{}^\bme{}_{\bmzero\bmj\bmk} f_\bme + \frac{1}{2}
   \hat{\Sigma}{}_\bmj{}^\bme{}_\bmk
   \hat{\Gamma}{}_\bme{}^\bmf{}_\bmzero f_\bmf, \label{SubsidaryEqnLast}
\end{eqnarray}
\end{subequations}
is that the zero-quantities
$\hat{\Sigma}_\bma{}^\bmc{}_\bmb$, $\hat{\Xi}^\bma{}_{\bmb\bmc\bmd}$,
$\hat{\Delta}_{\bma\bmb\bmc}$, $\hat{\Lambda}_{\bma\bmb\bmc}$,
$\delta_{\bma\bmb}$, $\gamma_{\bma\bmb}$ and $\varsigma_{\bma\bmb}$
satisfy, if the conformal evolution equations
\eqref{XCFEEvolution1}-\eqref{XCFEEvolution5} hold, a symmetric
hyperbolic system which is homogeneous in the zero-quantities
---accordingly, the particular situation in which all the
zero-quantities vanish identically gives rise to the subsidiary
evolution system.  The subsidiary system is regular away from the
conformal boundary ---i.e. the sets for which the conformal factor
vanishes. 

\section{Initial data for the evolution equations}
\label{Section:InitialData}

Given a solution $(S,\tilde{\bmh},\tilde{\bmK})$ to the Einstein
constraint equations (i.e the Hamiltonian and the momentum
constraints), there exists an algebraic procedure to compute initial
data for the conformal evolution equations ---see e.g. \cite{CFEBook},
Lemma 11.1, page 265. Now, a suitable perturbative existence theorem
which covers perturbations of the initial data implied 
by the metric \eqref{BackgroundPhysicalMetric} on the hypersurfaces of
constant $t$ has been provided in \cite{ValWil20} ---see Theorem
1. From this result one can deduce the following assertion:

\begin{proposition}
\label{Proposition:ExistenceID}
Let $(\mathcal{S},\mathring{\bmh},\mathring{\bmK})$ with
$\mathcal{S}$ compact,  $\mathring{\tilde{\bmh}}$ a smooth Riemannian metric
of constant negative curvature and $\mathring{\bmK}=\varkappa
\mathring{\bmh}$ with $\varkappa$ a constant, denote a initial data set for the vacuum Einstein field
equations with positive Cosmological constant. Then for each pair of
sufficiently small (in the sense of suitable Sobolev norms) tensors
$T_{ij}$ and $\bar{T}_{ij}$ over $\mathcal{S}$, transverse-tracefree
with respect to $\mathring{\bmh}$, and each sufficiently small scalar
field $\phi$ over $\mathcal{S}$, there exists a solution of the
Einstein constraint equations $(\mathcal{S},\bmh,\bmK)$ with positive
Cosmological constant which is suitably close to
$(\mathcal{S},\mathring{\bmh},\mathring{\bmK})$ and such that 
$\mbox{tr}_{\mathring{\bmh}} (\bmK-\mathring{\bmK})=\phi$ and for which
the electric and magnetic parts of the Weyl tensor (restricted to
$\mathcal{S}$) of the resulting spacetime development take the form 
\begin{eqnarray*}
&& d_{ij} = \mathring{L}(\bmX)_{ij} +T_{ij}
   -\frac{1}{3}\mbox{tr}_{\bmh}(\mathring{L}(\bmX) +\bmT) h_{ij}, \\
&& d^*_{ij} = \mathring{L}(\bar\bmX)_{ij} +\bar T_{ij}
   -\frac{1}{3}\mbox{tr}_{\bmh}(\mathring{L}(\bar\bmX) +\bar\bmT) h_{ij},
\end{eqnarray*}
for some covectors $\bmX$, $\bar\bmX$ over $\mathcal{S}$ and where
$\mathring{L}$ denotes the conformal Killing operator with respect to
$\mathring{\bmh}$. 
\end{proposition}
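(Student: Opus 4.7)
The plan is to obtain the Proposition as a consequence of Theorem 1 of \cite{ValWil20} combined with a careful tracking of how the conformal (York) decomposition of the initial data propagates into the electric and magnetic parts of the rescaled Weyl tensor on $\mathcal{S}$. Concretely, I would first invoke \cite{ValWil20} Theorem 1 essentially verbatim: given that $(\mathcal{S},\mathring{\bmh},\mathring{\bmK})$ is a CMC-type background (with $\mathring{\bmK}$ pure trace) and that $\mathring{\bmh}$ has constant negative curvature, this result guarantees that for any sufficiently small transverse-tracefree perturbations $T_{ij}$, $\bar T_{ij}$ and sufficiently small scalar $\phi$, there exists a solution $(\mathcal{S},\bmh,\bmK)$ of the vacuum Hamiltonian and momentum constraints with cosmological constant $\lambda=3$ which is close, in the prescribed Sobolev sense, to the background data and which satisfies $\mathrm{tr}_{\mathring{\bmh}}(\bmK-\mathring{\bmK})=\phi$.

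Next I would feed this solution into the algebraic prescription of Lemma 11.1 in \cite{CFEBook}, which computes all the conformal fields ---in particular $d_{ij}$ and $d^*_{ij}$--- from $(\bmh,\bmK)$ through the conformal constraint equations \eqref{co1}--\eqref{co10}. For the hypersurface $\mathcal{S}$ equipped with conformal factor $\Omega=1$ initially (or any convenient choice consistent with the conformal Gaussian gauge of Section \ref{Section:EvolutionEqns}), equations \eqref{co9} and \eqref{co10} yield
\[
\Omega\, d_{ij} = l_{ij}-L_{ij}+\chi(\chi_{ij}-\tfrac14\chi h_{ij})-\chi_{ki}\chi_j{}^k+\tfrac14\chi_{kl}\chi^{kl}h_{ij},\qquad \Omega\, d_{ijk} = D_j\chi_{ki}-D_k\chi_{ji}-h_{ij}L_k+h_{ik}L_j,
\]
together with the relation between $d_{ijk}$ and $d^*_{ij}$ via its Hodge dual. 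Since the background data gives $\mathring d_{ij}=\mathring d^*_{ij}=0$ (as computed in Section \ref{Section:BackgroundSolution}), both $d_{ij}$ and $d^*_{ij}$ are linear to leading order in the perturbations $\bmh-\mathring{\bmh}$, $\bmK-\mathring{\bmK}$.

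The third step is to rewrite these linearisations using the York decomposition used in \cite{ValWil20}. The conformal method splits the traceless part of $\bmK$ (after conformal rescaling) as $(\mathring{L}\bmW)_{ij}+Q_{ij}$ with $Q_{ij}$ transverse-tracefree with respect to $\mathring{\bmh}$; the covector $\bmW$ is determined by solving the momentum constraint, while $Q_{ij}$ is free data. Substituting this decomposition into the expressions above for $d_{ij}$ and $d^*_{ij}$ and identifying $Q_{ij}$ with $T_{ij}$ (for the electric sector) and its magnetic counterpart with $\bar T_{ij}$, one obtains directly the stated structure, with the trace correction ensuring $h^{ij}d_{ij}=h^{ij}d^*_{ij}=0$. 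The covectors $\bmX$ and $\bar\bmX$ are, up to a lower-order perturbative correction, the vector fields obtained from solving the momentum-type elliptic equations associated respectively to the electric and magnetic sectors.

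The main obstacle is the third step: while the electric part naturally inherits the $(\mathring{L}\bmW)_{ij}+Q_{ij}$ structure from the momentum constraint, the magnetic part $d^*_{ij}$ arises from a spatial curl of $\chi_{ij}$ rather than a divergence-type equation, so its reorganisation in the claimed form requires using that, on a compact hyperbolic 3-manifold, $\mathring{\bmh}$ admits no nontrivial conformal Killing vectors, so that $\mathring{L}$ has trivial kernel and $\mathring{L}^*\mathring{L}$ is invertible on covectors. This invertibility allows the projection of the curl-type expression onto a $\mathring{L}(\bar\bmX)+\bar T$ form modulo the smooth algebraic lower-order terms inherited from the perturbation of $\bmh$, which in turn are absorbed into the covectors via the implicit function theorem applied to the joint Hamiltonian-momentum system. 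Smallness of the free data then controls all error terms in the Sobolev norm used in \cite{ValWil20}, closing the argument.
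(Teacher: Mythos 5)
The paper does not actually prove this proposition: it is stated as a direct corollary of Theorem 1 in \cite{ValWil20}, and in that reference the displayed expressions for $d_{ij}$ and $d^*_{ij}$ are not something to be derived a posteriori --- they are the Ansatz of the construction itself. There the perturbative scheme is run directly on the conformal (extended) constraint equations, with the transverse-tracefree tensors $T_{ij}$, $\bar T_{ij}$ entering as the freely specifiable seed data for the electric and magnetic parts of the rescaled Weyl tensor and the covectors $\bmX$, $\bar\bmX$ solved for by an implicit-function-theorem argument. Your proposal shares the same foundation (everything rests on \cite{ValWil20}) but then takes a genuinely different and more roundabout route for the second half: you solve the ADM constraints first, reconstruct $d_{ij}$ and $d_{ijk}$ from the Gauss--Codazzi-type conformal constraints \eqref{co9}--\eqref{co10}, and only afterwards try to recognise the York structure. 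What this buys is independence from the internal details of \cite{ValWil20}; what it costs is that the identification of the free data becomes delicate, as noted below.

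The soft spot is precisely in your third step. Performing a York split of the a-posteriori computed $d_{ij}$ and $d^*_{ij}$ on a compact hyperbolic manifold (where $\mathring{L}$ has trivial kernel) certainly yields \emph{some} decomposition $\mathring{L}(\bmX)+T'+\mbox{trace}$, but the TT part $T'$ obtained this way is determined by the solution, whereas the proposition asserts that it can be \emph{prescribed} to equal the chosen $T_{ij}$, $\bar T_{ij}$. To close that loop you must know how the solution family in \cite{ValWil20} is parametrised by its free data, i.e.\ you are forced back into the structure of that theorem; your appeal to ``identifying $Q_{ij}$ with $T_{ij}$'' and to absorbing the curl-type terms of the magnetic sector via invertibility of $\mathring{L}^*\mathring{L}$ is where this circularity is hidden and would need to be made precise. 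Since the proposition is in substance a restatement of the cited theorem, this does not invalidate the claim, but as written your argument does not independently establish that the prescribed $T_{ij}$ and $\bar T_{ij}$ reappear as the TT parts of the resulting Weyl data.
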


\begin{remark}
{\em Thus, choosing the free data $T_{ij}$, $\bar{T}_{ij}$ and $\phi$
  suitably small one can ensure that the \emph{perturbed data}
  $(\mathcal{S},\bmh,\bmK)$ is as close to
  $(\mathcal{S},\mathring{\bmh},\mathring{\bmK})$. Accordingly, the
  associated initial data for the conformal evolution equations will
  be close to initial data for the background solution.}
\end{remark}

\begin{remark}
{\em Theorem 1 in \cite{ValWil20} applies to the broader class of
  \emph{conformally rigid hyperbolic} compact manifolds ---that is,
  Einstein manifolds with negative Ricci scalar which do not admit a
  non-trivial Codazzi tensor; see the discussion in Section 3.4.3 of
  this reference. The precise statement of the result also excludes
  values of $\varkappa$ which are related in a specific manner to the
  eigenvalues of the Laplacian of $\mathring{\bmh}$ ---however, we
  do not require this level of detail in the subsequent discussion.} 
\end{remark}

\section{Analysis of the existence and stability of solutions} 
\label{Section:Existence}
In this section we develop the theory of the existence, uniqueness and
stability of solutions to the Einstein field equations which can be
regarded as perturbations of the background solution. The argument
proceeds in several steps: first, the Cauchy stability of solutions to
symmetric hyperbolic systems is used to conclude the existence of
solutions to the conformal evolution system
\eqref{XCFEEvolution1}-\eqref{XCFEEvolution6}; in a second step the
uniqueness of solutions to the subsidiary system
\eqref{SubsidaryEqnFirst}-\eqref{SubsidaryEqnLast} to argue the
propagation of constraints; finally general theory of the conformal
Einstein field equations is invoked to establish the connection
between solutions to the conformal equations and actual solutions to
the Einstein field equations.

\subsection{A symmetric hyperbolic evolution system}
In the following we look for solutions to the system
\eqref{he1}-\eqref{he2} of the form 
\[
\hat{\mathbf{u}} = \mathring{\mathbf{u}} + \breve{\mathbf{u}} 
\]
where $\mathring{\mathbf{u}}$ is the solution to the conformal
evolution equations \eqref{XCFEEvolution1}-\eqref{XCFEEvolution6}
implied by a background solution, while $\breve{u}$ denotes a (small)
perturbation. Accordingly, making use of the schematic notation of
equations \eqref{he1}-\eqref{he2} one can set
\begin{subequations}
\begin{eqnarray} 
&& \label{per1} \hat{\bmupsilon} = \mathring{\bmupsilon} + \breve{\bmupsilon}, \hspace{1cm} \hat{\bmphi} = \breve{\bmphi}, \\
&&  \label{per2} \hat{\bme} = \mathring{\bme} + \breve{\bme}, \hspace{1cm}  \hat{\bmGamma} = \mathring{\bmGamma} + \check{\bmGamma}.
\end{eqnarray}
\end{subequations}  
Now, we have found that on the initial surface $\mathcal{S}_\star$
described by the condition $\tau=\tau_\star$ one can write
$\mathring{\mathbf{u}}_\star =( \mathring{\bmupsilon}_\star,
\mathring{\bmphi}_\star) = ( \mathring{\bmupsilon}_\star, 0)$. As the
conformal factor $\Theta$ and the covector $\bmd$ are
universal, it follows that
\[
 \partial_\tau \mathring{\bmupsilon} = \mathbf{K} \mathring{\bmupsilon} + \mathbf{Q}( \mathring{\bmupsilon}, \mathring{\bmupsilon}). 
\]
Substituting \eqref{per1} and \eqref{per2} into equations \eqref{he1}
and \eqref{he2} yields evolution equations for
$\breve{\mathbf{u}}=(\breve{\bmupsilon}, \breve{\bmphi})$ which,
schematically, take the form 
\begin{subequations}
\begin{eqnarray}
&& \label{he3}\partial_\tau \breve{\bmupsilon}= \mathbf{K} \breve{\bmupsilon} + \mathbf{Q}(\mathring{\bmGamma}+ \breve{\bmGamma}) \breve{\bmupsilon}+ \mathbf{Q}(\breve{\bmGamma})\mathring{\bmupsilon} + \mathbf{L}(\bar{x}) \breve{\bmphi},  \\
&&\label{he4} (\mathbf{I} + \mathbf{A}^0(\mathring{\bme} + \breve{\bme})) \partial_\tau \breve{\bmphi} + \mathbf{A}^\alpha (\mathring{\bme} + \breve{\bme}) \partial_\alpha \breve{\bmphi} =\mathbf{B}(\mathring{\bmGamma}+ \breve{\bmGamma}) \breve{\bmphi}.
\end{eqnarray}
\end{subequations}
Now, in the following it is convenient to define
\[
\bar{\mathbf{A}}^0(\tau, \underline{x}, \breve{\mathbf{u}}) \equiv \begin{pmatrix*}
      {\rm I} &  0 \\
      0 & {\rm I} + \mathbf{A}^0(\mathring{\bme}+ \breve{\bme})
     \end{pmatrix*},
\qquad 
\bar{\mathbf{A}}^\alpha(\tau, \underline{x}, \breve{\mathbf{u}}) \equiv \begin{pmatrix*}
  0 &   0 \\
      0 &   \mathbf{A}^\alpha(\mathring{\bme}+ \breve{\bme})
     \end{pmatrix*}
\]
and
\[
\bar{\mathbf{B}}(\tau, \underline{x}, \breve{\mathbf{u}})\equiv \breve{\mathbf{u}}\bar{\mathbf{Q}}\breve{\mathbf{u}}+ \bar{\mathbf{L}}(\bar{x})\breve{\mathbf{u}}+ \bar{\mathbf{K}}\breve{\mathbf{u}},
\]
where 
\[
\breve{\mathbf{u}}\bar{\mathbf{Q}}\breve{\mathbf{u}} \equiv \begin{pmatrix*}
    \breve{\bmupsilon}\mathbf{Q}\breve{\bmupsilon} &     0 \\
     0 &
     \mathbf{B}(\breve{\bmGamma})\breve{\bmphi}
     \end{pmatrix*},
\qquad 
\bar{\mathbf{L}}(\bar{x})\breve{\mathbf{u}} \equiv \begin{pmatrix*}
  \mathring{\bmupsilon}\mathbf{Q}\breve{\bmupsilon}+ \mathbf{Q}(\breve{\bmGamma})\mathring{\bmupsilon}&
    \mathbf{L}(\bar{x})\breve{\bmphi} \\
      0 &  0
     \end{pmatrix*},
\qquad 
\bar{\mathbf{K}} \breve{\mathbf{u}}\equiv \begin{pmatrix*}
 \mathbf{K} \breve{\bmupsilon} &  0  \\
      0 & \mathbf{B}(\mathring{\Gamma})\breve{\bmphi}
     \end{pmatrix*},
\]
denote, respectively, quadratic, linear and constant terms in the unknowns. In terms of
the latter it is possible to rewrite the system \eqref{he3} and
\eqref{he4} in the form
\begin{equation}
 \bar{\mathbf{A}}^0(\tau, \underline{x}, \breve{\mathbf{u}})\partial_\tau \breve{\mathbf{u}}+ \bar{\mathbf{A}}^\alpha(\tau, \underline{x}, \breve{\mathbf{u}})\partial_\alpha \breve{\mathbf{u}}=\bar{\mathbf{B}}(\tau, \underline{x}, \breve{\mathbf{u}}). \label{he5}
\end{equation}
From the discussion in the previous sections, it follows that the
system described by \eqref{he5} is a symmetric hyperbolic system for
which the theory of \cite{Kat75} can be applied. The natural domain
of the solutions to this system is of the form
\[
\mathcal{M} = [\tau_\star, \tau_\bullet)\times \mathcal{S}, \qquad
\tau_\star\in(0,1), \qquad \tau_\bullet\geq 1.
\]

\subsection{The existence, uniqueness and Cauchy stability of the solution}
The existence of de Sitter-like solutions to the conformal evolution
system \eqref{he5} is given by the following proposition:

\begin{proposition} [\textbf{\em existence and uniqueness of the
    solutions to the perturbed de Sitter-like evolution equations}]
\label{Proposition:ExistenceConformalEvolution}
 Given $\mathbf{u}_\star=\mathring{\mathbf{u}}_\star
 +\breve{\mathbf{u}}_\star $ 
 and  $m \geq 4$, one has that:
\begin{itemize}
\item[(i)] There exists $\varepsilon >0$ such that if 
\begin{equation}
||\breve{\mathbf{u}}_\star||_{\mathcal{S},m} < \varepsilon, \label{SizeData}
\end{equation}
then there exists a unique solution $\breve{\mathbf{u}}\in C^{m-2}(\big{[}  \tau_\star, \frac{3}{2} \big{)}  \times \mathcal{S}, \mathbb{R}^N)$ to the Cauchy problem for the
conformal evolution equations \eqref{he5} with initial data
$\mathbf{u}(0,\underline{x})=\breve{\mathbf{u}}_\star$, $\tau_\star>0$ and with
$N$ denoting the dimension of the vector $\mathbf{u}$.

\item[(ii)] Given a sequence of initial data $
  \breve{\mathbf{u}}{}^{(n)}_\star$ such
  that 
\[
||\breve{\mathbf{u}}{}^{(n)}_\star||_{\mathcal{S},m} < \varepsilon,
\qquad \mbox{and} \qquad
||\breve{\mathbf{u}}{}^{(n)}{}_\star||_{\mathcal{S},m}
\xrightarrow{n\rightarrow\infty} 0, 
\]
then for the corresponding solutions $\breve{\mathbf{u}}{}^{(n)} \in C^{m-2} \big{(}\big{[}  \tau_\star, \frac{3}{2} \big{)} \times \mathcal{S}, \mathbb{R}^N \big{)}$, one has $|| \breve{\mathbf{u}}{}^{(n)}||_{\mathcal{S},m} \rightarrow 0$ uniformly in $\tau \in \big{[}  \tau_\star, \frac{3}{2} \big{)}$ as $n \rightarrow \infty$.
\end{itemize}

\end{proposition}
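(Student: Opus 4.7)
The plan is to apply Kato's existence and stability theorem for quasilinear symmetric hyperbolic systems (recalled in Appendix \ref{Appendix:KatoThm}) to the perturbed system \eqref{he5}. First, I would verify the hypotheses: the matrices $\bar{\mathbf{A}}^\mu(\tau,\underline{x},\breve{\mathbf{u}})$ are Hermitian and depend smoothly on their arguments; the matrix $\bar{\mathbf{A}}^0$ is positive definite provided $\breve{\mathbf{u}}$ lies in a suitable neighbourhood of zero, which follows from the fact that $\mathbf{I}+\mathbf{A}^0(\mathring{\bme}+\breve{\bme})$ is positive definite for small $\breve{\bme}$ (as recorded at the end of Section \ref{Section:EvolutionEqns}); and the source $\bar{\mathbf{B}}$ is a smooth function of $(\tau,\underline{x},\breve{\mathbf{u}})$ made up of a constant term, a linear term with smooth coefficients on spacetime, and a term quadratic in $\breve{\mathbf{u}}$.

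Second, I would observe that the background analysis of Section \ref{Section:BackgroundSolution} shows that $\breve{\mathbf{u}}\equiv 0$ is a solution of \eqref{he5} defined and smooth on the whole of $[\tau_\star,\infty)\times\mathcal{S}$. In particular, restricted to the compact slab $[\tau_\star,3/2]\times\mathcal{S}$, the reference solution $\mathring{\mathbf{u}}$ is bounded together with all its derivatives up to any desired order, and the coefficient matrices $\bar{\mathbf{A}}^\mu$, $\bar{\mathbf{B}}$ evaluated along the reference orbit have uniformly bounded Sobolev norms of every order.

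Third, I would invoke the Cauchy stability part of Kato's result: for the smooth reference solution $\breve{\mathbf{u}}\equiv 0$ on $[\tau_\star,T]\times\mathcal{S}$ with $T$ chosen slightly larger than $3/2$, there exists $\varepsilon>0$ such that every initial datum with $||\breve{\mathbf{u}}_\star||_{\mathcal{S},m}<\varepsilon$ (here $m\geq 4$ ensures $H^m\hookrightarrow C^2$ on the three-dimensional $\mathcal{S}$ and hence provides the regularity needed to run Kato's argument) yields a unique solution $\breve{\mathbf{u}}\in C^{m-2}([\tau_\star,T]\times\mathcal{S},\mathbb{R}^N)$ of \eqref{he5}, and the solution map is continuous in the $H^m$-topology uniformly in $\tau$. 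Restriction to $[\tau_\star,3/2)$ gives assertion (i), while assertion (ii) is the said continuous dependence specialised to a sequence of data tending to zero, so that the corresponding solutions tend to the trivial one uniformly in $\tau\in[\tau_\star,3/2)$.

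The only non-routine point is to ensure that the life-span of the perturbation is bounded below by an interval \emph{independent} of the (sufficiently small) perturbation, rather than the infinitesimal interval provided by bare local existence. This is exactly where the smoothness of $\mathring{\mathbf{u}}$ on the compact slab $[\tau_\star,3/2]\times\mathcal{S}$ together with the compactness of $\mathcal{S}$ enter, since they furnish the uniform bounds on the coefficients along the reference orbit that Kato's stability theorem requires to produce an $\varepsilon$ depending only on the reference solution. No other obstacle is expected; the universality of $\Theta$ and $d_\bma$ noted after Proposition \ref{Proposition:ConformalGeodesicsConformalFactor} guarantees that the gauge fields appearing in \eqref{he5} do not themselves depend on $\breve{\mathbf{u}}_\star$, so that the problem genuinely reduces to a perturbation of the trivial solution of a symmetric hyperbolic system with smooth coefficients on a compact spatial manifold.
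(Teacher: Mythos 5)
Your proposal is correct and follows essentially the same route as the paper: both arguments reduce the statement to Kato's existence, uniqueness and Cauchy-stability theory for quasilinear symmetric hyperbolic systems, applied to the perturbation of the trivial solution $\breve{\mathbf{u}}\equiv 0$ on the compact slab $[\tau_\star,\tfrac{3}{2}]\times\mathcal{S}$, with the uniform existence time coming from the smoothness of the background data and the compactness of $\mathcal{S}$. The one step you elide is that Kato's theorem (as recalled in Appendix \ref{Appendix:KatoThm}) is formulated over $\mathbb{R}^3$, not over a compact manifold; the paper bridges this by covering $\mathcal{S}$ with finitely many coordinate patches, extending the localised data to $\mathbb{R}^3$, using the domain-of-dependence property $D^+(\mathcal{R}_i)\cap I^+(\mathcal{S}\setminus\mathcal{R}_i)=\emptyset$ to show the local solutions are independent of the extension, and patching them together by uniqueness --- a routine but necessary supplement to your argument.
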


\begin{remark}
{\em In the above proposition
$||\breve{\mathbf{u}}_\star||_{\mathcal{S},m}$ denotes the standard
$L^2$-Sobolev norm over $\mathcal{S}$ of order $m\geq 4$ of the
independent components of the vector $\breve{\mathbf{u}}_\star$. }
\end{remark}

\begin{proof}
The proof is an application of the existence and stability results for
symmetric hyperbolic systems with compact spatial sections ---see
e.g. \cite{CFEBook}, Section 12.3 which, in turn, follows from Kato's
theory for symmetric hyperbolic systems over $\mathbb{R}^n$
\cite{Kat75}. More precisely, since the 3-dimensional manifold
$\mathcal{S}$ is compact, there exists a finite cover consisting of
open sets $\mathcal{R}_1, \dots, \mathcal{R}_M \subset \mathcal{S}$
such that $ \cup{}_{i=1}{}^M \mathcal{R}{}_i = \mathcal{S}$. On each
of the open sets $\mathcal{R}{}_i$ it is possible to introduce
coordinates $\underline{x}{}_i \equiv (x{}^\alpha{}_i)$ which allow
one to identify $\mathcal{R}{}_i$ with open subsets $\mathcal{B}_i
\subset \mathbb{R}^3$. As $\mathcal{S}$ is assumed to be a smooth
manifold, the coordinate patches can be chosen so that the change of
coordinates on intersecting sets is smooth. The initial data
$\breve{\mathbf{u}}{}_\star: \mathcal{S} \rightarrow \mathbb{R}^N$ is a smooth
function on $\mathcal{S}$ and can be restricted to a particular open
set $\mathcal{R}{}_i$. The restriction $\breve{\mathbf{u}}{}_{i\star}$, in
local coordinates $x_i$, can be regarded as a function
$\breve{\mathbf{u}}_{i\star}: \mathcal{B}_i \rightarrow \mathbb{R}^N$. Now,
assuming that $\mathcal{R} \subset \mathbb{R}^3$ is bounded with
smooth boundary $\partial \mathcal{R}$, it is possible to extend $\breve{\mathbf{u}}_{i\star}$
to a function $\mathcal{E} \breve{\mathbf{u}}_{i\star}: \mathbb{R}^3
\rightarrow \mathbb{R}^N$ ---see e.g. Proposition
12.2 in \cite{CFEBook}. Using these extensions it is possible to define the Sobolev norm 
\[
|| \breve{\mathbf{u}}_\star||_{\mathcal{S}, m} \equiv \sum\limits_{i=1}^M
||\breve{\mathbf{u}}{}_{i\star}||_{\mathbb{R}^3, m}. 
\]

Now, for each of the  $\mathcal{E} \breve{\mathbf{u}}_{i\star}$ one can
formulate an initial value problem of the form
\begin{eqnarray*}
&& \bar{\mathbf{A}}^0(\tau, \underline{x}, \breve{\mathbf{u}}) \partial_\tau \breve{\mathbf{u}} + \bar{\mathbf{A}}^\alpha (\tau, \underline{x}, \breve{\mathbf{u}}) \partial_\alpha \breve{\mathbf{u}} = \mathcal{B}(\tau, \underline{x}, \breve{\mathbf{u}}), \\
&& \breve{\mathbf{u}}(0, \underline{x})=\mathcal{E} \breve{\mathbf{u}}_{i\star}( \underline{x}) \in H{}^m (\mathcal{S}, \mathbb{R}{}^N) \hspace{0.5cm} {\rm for} \; m \geq 4.
\end{eqnarray*}
For this initial value problem it is observed that:

\begin{itemize}
\item[(a)] The matrices $\bar{\mathbf{A}}^\mu (\tau, \underline{x},
  \mathcal{E}\breve{\mathbf{u}}_{i\star})$ are positive definite and depend
  linearly on the solution $\breve{\mathbf{u}}_i$ with coefficients which
  are constant.

\item[(b)] The dependence of $\mathcal{B}$ on $\breve{\mathbf{u}}_i$ is at most
  quadratic: there are linear and quadratic terms for the connection
  coefficients; linear terms for the components of the Schouten
  tensor. The explicit dependence on $(\tau,\underline{x})$ comes from
  the conformal factor and the covector $d_a$ ---this dependence is
  smooth.

\item[(c)]  The connection coefficients and the components of the Schouten
  tensor of the background solution are smooth functions ($C^\infty$)
  of $(\tau,\underline{x})$.

\item[(d)] The dependence of the frame coefficients of the background
  solution is smooth ($C^\infty$) on $\tau$ for $\tau\in[\tau_\star,\tfrac{3}{2}]$ with
  $\tau_\star\neq 0$. 
\end{itemize}

It follows from the above observations that the system considered in
the present article satisfies the conditions of Kato's theorems ---see
Appendix \ref{Appendix:KatoThm}. This theory implies existence,
uniqueness and stability ---i.e. points (i) and (ii) in the theorem. Notice, however, that strictly speaking,
this theorem only applies to settings in which the spatial sections
are diffeomorphic to $\mathbb{R}^3$. To address this one makes use of
the following strategy: standard results on causality theory  imply that
\[ 
D^+(\mathcal{R}_i) \cap I^+(\mathcal{S} \setminus
\mathcal{R}_i)= \emptyset, 
\]
where $D^+(\mathcal{R}_i)$ denotes the causal future of $\mathcal{R}_i$ ---see
e.g. \cite{CFEBook}, Theorem 14.1. Accordingly, the value of
$\breve{\mathbf{u}}$ on $\mathcal{D}_i\equiv D^+(\mathcal{R}_i)$ is determined only by
the data on $\mathcal{R}_i$. Then the solution on
$\mathcal{D}_i$ is independent of the particular
extension $\mathcal{E} \breve{\mathbf{u}}_{i\star}$ being used. Hence, one
can speak of a solution $\breve{\mathbf{u}}_i$ on a domain $\mathcal{D}_i
\subset [\tau_\star, \tau_i] \times \mathcal{R}_i$. Since the manifold
is smooth and as a consequence of uniqueness, it follows that given
two solutions $\breve{\mathbf{u}}_i$ and $\breve{\mathbf{u}}_j$ defined,
respectively, on intersecting domains $\mathcal{D}_i$ and
$\mathcal{D}_j$ they must coincide on $\mathcal{D}_i \bigcap
\mathcal{D}_j$. Proceeding in the same manner over the whole finite
cover of $\mathcal{S}$ and since the compactness of $\mathcal{S}$
ensures the existence of a minimum non-zero existence time for the
whole of the domains $\mathcal{D}_i$, then there is a unique solution
$\breve{\mathbf{u}}$ on $[\tau_\star, \tfrac{3}{2}] \times \mathcal{S}$ with
$\frac{3}{2}={\rm min}{}_{i=1, \dots, M} \{\tau_i
\}$ which is constructed by patching together the localised solutions
$\breve{\mathbf{u}}_1, \dots, \breve{\mathbf{u}}_M$ defined, respectively on the
domains $ \mathcal{D}_i, \dots, \mathcal{D}_M$. The existence interval
$[\tau_\star,\tfrac{3}{2})$ follows from the fact that the background
solution $\mathring{\mathbf{u}}$ has this existence interval. 

\end{proof}

\begin{remark} 
{\em 
The existence and Cauchy stability of the solution to the initial
value problem for the original conformal evolution problem
\begin{eqnarray*} 
&& \mathbf{A}^0(\tau, \underline{x},
\hat{\mathbf{u}})\partial_\tau \hat{\mathbf{u}}+ \mathbf{A}^\alpha(\tau,
\underline{x}, \hat{\mathbf{u}})\partial_\alpha \hat{\mathbf{u}}=\mathbf{B}(\tau,
\underline{x}, \hat{\mathbf{u}}), \\ 
&& \hat{\mathbf{u}}|_\star=
\mathring{\mathbf{u}}_\star + \breve{\mathbf{u}}_\star \in H^m (\mathcal{S},
\mathbb{R}^N) \quad {\rm for} \quad m \geq 4
\end{eqnarray*} 
follows from the fact that $\hat{\mathbf{u}}$ satisfies the
same properties as $\breve{\mathbf{u}}$ in Proposition
\ref{Proposition:ExistenceConformalEvolution} and then it exists in
the same solution manifold and with the same regularity properties,
existence and uniqueness. }
 \end{remark}

\subsection{Propagation of the constraints}
In this section we discuss the so-called \emph{propagation of the
  constraints}. This argument is essential to establish the connection
between solutions to the conformal evolution systems and actual
solutions to the Einstein field equations. More precisely, one has the following:

\begin{proposition}[\textbf{\em propagation of the constraints}]
 Let $\hat{\mathbf{u}}_\star = \mathring{\mathbf{u}}_\star + \breve{\mathbf{u}}_\star$ denote
 initial data for the conformal evolution equations on a $3$-manifold
 $\mathcal{S}_\star\approx \mathcal{S}$ such that 
\[
\hat{\Sigma}{}_\bma{}^\bmc {}_\bmb |_{\mathcal{S}_\star}=0, \quad
\hat{\Xi}{}^c {}_{\bmd\bma\bmb}  |_{\mathcal{S}_\star}=0, \quad
\hat{\Delta}{}_{\bma\bmb\bmc}  |_{\mathcal{S}_\star}=0, \quad  \hat{
  \Lambda}_{\bma\bmb\bmc} |_{\mathcal{S}_\star}=0, 
\]
and
\[
 \delta{}_\bma |_{\mathcal{S}_\star}=0, \quad \gamma{}_{\bma\bmb}
 |_{\mathcal{S}_\star}=0, \quad \varsigma{}_{\bma\bmb} |_{\mathcal{S}_\star}=0, 
\]
then the solution $\breve{\mathbf{u}}$ to the conformal evolution equations
given by Proposition \ref{Proposition:ExistenceConformalEvolution}
implies a $C^{m-2}$ solution $\hat{\mathbf{u}}= \mathring{\mathbf{u}} +
\breve{\mathbf{u}}$ to the extended conformal field equations on $\big{[}
\tau_\star, 1 \big{)} \times \mathcal{S}$.
\end{proposition}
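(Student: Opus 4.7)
The approach is the standard propagation of constraints argument, now based on the subsidiary evolution system collected in \eqref{SubsidaryEqnFirst}--\eqref{SubsidaryEqnLast}. I first observe that Proposition~\ref{Proposition:ExistenceConformalEvolution}, together with the Remark following it, supplies a $C^{m-2}$ solution $\hat{\mathbf{u}}=\mathring{\mathbf{u}}+\breve{\mathbf{u}}$ of the conformal evolution system \eqref{XCFEEvolution1}--\eqref{XCFEEvolution6} on $[\tau_\star,\tfrac{3}{2})\times\mathcal{S}$. By the very derivation of these evolution equations from \eqref{ecfe6} and from $\Lambda_{(\bmi|\bmzero|\bmj)}=0$, $\Lambda^{*}{}_{(\bmi|\bmzero|\bmj)}=0$, this already forces the components $\hat{\Sigma}{}_\bmzero{}^\bmc{}_\bmb$, $\hat{\Xi}{}^\bmc{}_{\bmd\bmzero\bmb}$, $\hat{\Delta}{}_{\bmzero\bmb\bmc}$ and $\Omega_{\bmb\bmc}$ of the zero-quantities to vanish identically along the solution, and the conformal Gaussian gauge conditions $f_\bmzero=0$, $\hat{\Gamma}{}_\bmzero{}^\bmb{}_\bmc=0$, $\hat{L}{}_{\bmzero\bma}=0$ force $\delta_\bmzero=0$, $\gamma_{\bmzero\bma}=0$ and $\varsigma_{\bmzero\bma}=0$ as indicated in Subsection \ref{Subsection:SubsidiarySystem}.

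Next, I would collect the remaining independent components of the zero-quantities into a single vector $\mathbf{Z}\equiv(\hat{\Sigma}_\bma{}^\bmc{}_\bmb,\hat{\Xi}^\bma{}_{\bmb\bmc\bmd},\hat{\Delta}_{\bma\bmb\bmc},\Omega_{\bma\bmb},\delta_\bma,\gamma_{\bma\bmb},\varsigma_{\bma\bmb})$ and observe that the derivations of \eqref{ses3}, \eqref{ses6}, the $\hat{\Delta}$-equation, the $\Omega$-equation and \eqref{SubsidiaryGauge1}--\eqref{SubsidiaryGauge3} show that, once $\hat{\mathbf{u}}$ is regarded as known, $\mathbf{Z}$ satisfies a linear symmetric hyperbolic system of the schematic form
\[
\bar{\mathbf{A}}^0(\tau,\underline{x})\,\partial_\tau\mathbf{Z}+\bar{\mathbf{A}}^\alpha(\tau,\underline{x})\,\partial_\alpha\mathbf{Z}=\mathbf{M}(\tau,\underline{x})\,\mathbf{Z},
\]
with the crucial feature of being homogeneous in $\mathbf{Z}$. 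The coefficients depend smoothly on $\hat{\mathbf{u}}$, the connection components, the conformal factor $\Theta$ and its derivatives; in particular, the only potentially singular entry (the factor $\lambda\Theta^{-2}$ appearing in \eqref{SubsidiaryGauge2}) is smooth and bounded on any compact subinterval of $[\tau_\star,1)$, since there $\Theta=\tfrac{1}{2}(1-(\tau-\tau_\star)^2)>0$ along the congruence.

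The third step is to verify that $\mathbf{Z}|_{\mathcal{S}_\star}=0$: the hypotheses of the proposition are exactly the conditions that all components of $\mathbf{Z}$ vanish on $\mathcal{S}_\star$, with $\Omega_{\bma\bmb}|_{\mathcal{S}_\star}=0$ following automatically from $\Lambda_{\bma\bmb\bmc}|_{\mathcal{S}_\star}=0$ by its definition as the projection $\tau^\bma\Lambda_{\bma\bmb\bmc}$. Applying Kato's uniqueness theorem for linear symmetric hyperbolic systems on the compact spatial sections $\mathcal{S}$ in the same fashion as in the proof of Proposition \ref{Proposition:ExistenceConformalEvolution} (cover by coordinate patches, extension, patching via domains of dependence), the vanishing solution is the unique $C^{m-2}$ solution of the subsidiary system on $[\tau_\star,1)\times\mathcal{S}$. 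Hence all zero-quantities vanish on this slab, which is precisely the statement that $\hat{\mathbf{u}}$ solves the full extended conformal Einstein field equations \eqref{ecfe5}--\eqref{XCFESupplementary}.

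The main obstacle, and the reason for the restriction to the interval $[\tau_\star,1)$ rather than the larger existence interval $[\tau_\star,\tfrac{3}{2})$ provided by Proposition \ref{Proposition:ExistenceConformalEvolution}, is exactly the singular $\lambda\Theta^{-2}$ term in \eqref{SubsidiaryGauge2}. Beyond $\tau=1$ one meets the conformal boundary $\{\Theta=0\}$ and the subsidiary system loses its symmetric hyperbolic character; hence uniqueness arguments cannot be applied across this set. As emphasised in the Remark following \eqref{SubsidiaryGauge3}, this restriction is harmless because one only needs the propagation of the constraints in the region where the physical metric $\tilde{\bmg}=\Theta^{-2}\bmg$ is defined, i.e.\ away from the conformal boundary.
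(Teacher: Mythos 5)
Your proposal is correct and follows essentially the same route as the paper: assemble the zero-quantities not fixed by the evolution equations or the gauge into a single vector, note that the subsidiary system \eqref{SubsidaryEqnFirst}--\eqref{SubsidaryEqnLast} is symmetric hyperbolic and homogeneous in that vector, and invoke uniqueness for symmetric hyperbolic systems (Kato) to conclude that the trivial solution is the only one with vanishing initial data. Your additional remarks on the $\lambda\Theta^{-2}$ term and the restriction to $[\tau_\star,1)$ simply make explicit what the paper relegates to the Remark following \eqref{SubsidiaryGauge3}.
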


\begin{proof}
The proof follows from the properties of the subsidiary evolution
system. First, it is observed that by assumption
\[
 \hat{\Sigma}{}_\bmzero{}^\bmc{}_\bmb=0, \qquad  \hat{\Xi}{}^\bmc{}_{\bmd\bmzero\bmb}=0,
 \qquad   \hat{\Delta}{}_{\bmzero\bmb\bmc}=0, 
\]
hold ---cfr. the equations in \eqref{ecfe6}. Moreover, the associated evolution equations are expressed in
terms of a conformal Gaussian gauge system and the independent
components of the rescaled Weyl tensor satisfy either the evolution
system \eqref{EvolutionElectricWeyl} and
\eqref{EvolutionMagneticWeyl}. Now, following the discussion of
Section \ref{Subsection:SubsidiarySystem}, the independent components of the zero-quantities
\[
 \hat{\Sigma}{}_\bma{}^\bmc{}_\bmb, \quad  \hat{\Xi}{}^\bmc{}_{\bmd\bma\bmb},
 \quad   \hat{\Delta}{}_{\bma\bmb\bmc}, \quad
 \hat{\Lambda}{}_{\bma\bmb\bmc}, \quad  \delta_{\bma}, \quad
 {\gamma}{}_{\bma\bmb}, \quad   {\varsigma}{}_{\bma\bmb},
\]
which are not determined by either the evolution equations or gauge
conditions satisfy a symmetric hyperbolic system which is homogeneous
in the zero-quantities. More precisely, defining
$\hat{\mathbf{X}}\equiv(\hat{\Sigma}{}_\bma{}^\bmc{}_\bmb,\hat{\Xi}{}^\bmc{}_{\bmd\bma\bmb},
\hat{\Delta}{}_{\bma\bmb\bmc}, \hat{\Lambda}{}_{\bma\bmb\bmc},
\delta_{\bma}, {\gamma}{}_{\bma\bmb}, {\varsigma}{}_{\bma\bmb})$,
these equations can be recasted as a symmetric hyperbolic system of
the form
\[
\partial_\tau \hat{\mathbf{X}}= \mathbf{H}(\hat{\mathbf{X}}),
\]
where $\mathbf{H}$ is a homogeneous function of its arguments
---i.e. $\mathbf{H}(\bmzero)=\bmzero$. 
It follows then that a solution to the initial value problem
\begin{eqnarray*}
&& \partial_\tau \hat{\mathbf{X}}= \mathbf{H}(\hat{\mathbf{X}}), \\
&& \hat{\mathbf{X}}_\star =0.
\end{eqnarray*}
is given (trivially) by $\hat{\mathbf{X}}=0$. Moreover, from
Kato's theorem it follows that this is the unique solution. Thus, the
zero-quantities must vanish on $\big{[}\tau_\star, 1\big{)}\times
\mathcal{S}$. That is, the solution $\breve{\mathbf{u}}$ to the conformal
evolution equations implies a solution to the extended conformal
Einstein field equations over the latter domain. 
\end{proof}

From the above statement, making use of the relation between the
extended conformal Einstein field equations and the actual Einstein
field equations ---see Proposition 8.3 in 208 --- in \cite{CFEBook} it follows
the following:

\begin{corollary}
\label{Corollary:SolutionsEFE}
The metric 
\[
\tilde{\bmg} = \Theta^{-2} \bmg
\]
obtained from the solution to the conformal evolution equations given
in Proposition \ref{Proposition:ExistenceConformalEvolution} implies a
solution to the vacuum Einstein field equations with $\lambda=3$.
\end{corollary}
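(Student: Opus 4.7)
The plan is to combine the propagation-of-constraints result just established with the structural correspondence between the extended conformal Einstein field equations and the vacuum Einstein field equations recalled as Proposition 8.3 of \cite{CFEBook} (and quoted at the end of Section \ref{Section:XCFE}). First I would invoke the propagation-of-constraints proposition to conclude that on the relevant existence domain the solution $\hat{\mathbf{u}} = \mathring{\mathbf{u}} + \breve{\mathbf{u}}$ produced by Proposition \ref{Proposition:ExistenceConformalEvolution} in fact yields vanishing zero-quantities $\hat{\Sigma}_\bma{}^\bmc{}_\bmb$, $\hat{\Xi}^\bmc{}_{\bmd\bma\bmb}$, $\hat{\Delta}_{\bma\bmb\bmc}$, $\hat{\Lambda}_{\bma\bmb\bmc}$ together with vanishing supplementary quantities $\delta_\bma$, $\gamma_{\bma\bmb}$, $\varsigma_{\bma\bmb}$; equivalently, $\hat{\mathbf{u}}$ solves the full system \eqref{ecfe5} and the supplementary conditions \eqref{XCFESupplementary}. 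The required vanishing of these quantities on $\mathcal{S}_\star$ is guaranteed by the construction of the initial data from a solution of the Einstein constraints described in Section \ref{Section:InitialData}, which is tailored precisely so that the conformal constraint equations \eqref{co1}--\eqref{co10} and the supplementary conditions hold on the initial hypersurface.

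With the full set of extended conformal equations in hand, the remaining task is to verify the non-degeneracy hypotheses of the correspondence, namely $\Theta \neq 0$ and $\det(\eta^{\bma\bmb}\hat{\bme}_\bma \otimes \hat{\bme}_\bmb) \neq 0$ on the relevant open subset. For the conformal factor I would simply observe that the universal expression $\Theta(\tau) = \tfrac{1}{2}\bigl(1 - (\tau-\tau_\star)^2\bigr)$ is strictly positive on $\tau \in [\tau_\star, \tau_\star + 1)$, which contains the interval $[\tau_\star, 1)$ on which the propagation-of-constraints argument was phrased. For the frame I would appeal to part (ii) of Proposition \ref{Proposition:ExistenceConformalEvolution}: the background frame $\mathring{\bme}$ is smooth and non-degenerate on the compact slab $[\tau_\star, \tfrac{3}{2}] \times \mathcal{S}$ (Section \ref{Section:BackgroundSolution}), and Cauchy stability ensures that for $\varepsilon$ sufficiently small in \eqref{SizeData} the perturbation $\breve{\bme}$ is uniformly small, so that $\hat{\bme} = \mathring{\bme} + \breve{\bme}$ remains non-degenerate throughout.

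Once both hypotheses are in place, Proposition 8.3 of \cite{CFEBook} applied to $\hat{\mathbf{u}}$ yields directly that the rescaled metric
\[
\tilde{\bmg} = \Theta^{-2}\eta_{\bma\bmb}\hat{\bmomega}^\bma \otimes \hat{\bmomega}^\bmb
\]
satisfies $\tilde{R}_{ab} = 3\,\tilde{g}_{ab}$ on the open subset where $\Theta>0$, which is the desired conclusion. The main subtlety I would need to handle carefully is the bookkeeping of the various existence domains: the conformal evolution system is known to exist on $[\tau_\star, \tfrac{3}{2})$, the propagation-of-constraints argument is phrased on $[\tau_\star, 1)$, and positivity of $\Theta$ fails at $\tau = \tau_\star + 1$. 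The cleanest resolution is to state the corollary on $[\tau_\star, 1) \times \mathcal{S}$ with $\tau_\star \in (0,1)$, where all three conditions hold simultaneously; pushing the physical solution further and exploiting the conformal extension past $\Theta=0$ is precisely the payoff of the conformal method and is addressed in the subsequent geodesic-completeness analysis of Section \ref{Section:GeodesicCompleteness}.
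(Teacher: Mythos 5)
Your proposal is correct and follows essentially the same route as the paper, which simply combines the propagation-of-constraints proposition with the correspondence result (Proposition 8.3 of \cite{CFEBook}, quoted as Proposition 1 in Section \ref{Section:XCFE}). You are in fact more careful than the paper itself, which leaves the verification of the non-degeneracy hypotheses ($\Theta\neq 0$ and invertibility of the frame) and the bookkeeping of the existence domains entirely implicit.
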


\section{Future geodesic completeness}
\label{Section:GeodesicCompleteness}

In this section we discuss the future geodesic completeness of the
spacetimes obtained in the previous section. Our analysis
distinguishes two cases: null geodesics and timelike geodesics. 


\subsection{Null geodesics} 
As a consequence of the compactness of the unphysical manifold
\[
\mathcal{M}= \bigg{\{} (\tau, \underline{x}) \in \mathbb{R} \times
\mathcal{S} \; | \; \tau_{\bullet} \leq \tau \leq1 \bigg{\}}, 
\]
null geodesics in the unphysical manifold starting at the initial
hypersurface $\mathcal{S}_\star$, reach the conformal boundary in a finite
amount of affine parameter. Furthermore, null geodesics with respect
to the unphysical metric $\bmg$ coincide, up to a reparametrisation,
with null geodesics respect to the physical metric $\tilde{\bmg}$ on
$\tilde{\mathcal{M}}$. More precisely, let $\gamma$ be a null geodesic
in $(\mathcal{M},\bmg)$ with affine parameter $v$ such that $v=0$ on
$\partial \tilde{\mathcal{M}}$. The equations for  $\gamma$ are
\[
\frac{ \mathrm{d}^2 x^\mu}{\mathrm{d} v^2} + \Gamma{}^\mu{}_{\nu\lambda} \frac{ \mathrm{d} x^\nu}{\mathrm{d} v} \frac{\mathrm{d}
  x^\lambda}{\mathrm{d} v}=0. 
\]
Let $\tilde{\gamma}$ denote the corresponding geodesics in $\tilde{\mathcal{M}}$. Using a different parameter $\tilde{v}=\tilde{v}(v)$ and the relation between the Christoffel symbols $\Gamma{}^\mu{}_{\nu\lambda}$ and $\tilde{\Gamma}{}^\mu{}_{\nu\lambda}$ 
it follows that
\[ 
\frac{\mathrm{d}^2 x^\mu}{\mathrm{d} \tilde{v}^2} + \tilde{\Gamma}{}^\mu{}_{\nu\lambda} \frac{\mathrm{d}
  x^\nu}{\mathrm{d} \tilde{v}} \frac{\mathrm{d} x^\lambda}{d \tilde{v}} = - \frac{1}{
  \tilde{v}'} \bigg{(}\frac{ \tilde{v}''}{ \tilde{v}'}+ 2
\frac{\Theta'}{\Theta} \bigg{)} \frac{\mathrm{d} x^\mu}{\mathrm{d} \tilde{v}}. 
\]
By requiring that $\tilde{v}$ to be an affine parameter the right hand
side must vanish. This implies $\tilde{v}'= {\rm const}/\Theta^2$, and
absorbing the constant into $\tilde{v}$ we obtain
\[
\frac{ \mathrm{d} \tilde{v}}{\mathrm{d} v}= \frac{1}{\Theta^2}. 
\]
Furthermore, at $\mathscr{I}^+$, $\Theta=0$ and $\mathbf{d} \Theta
\neq 0$, and we may choose $v$ so that near $\partial \tilde{M}$, $v
\sim - \Theta$. Thus $\tilde{v} \sim - 1/v$ becomes unbounded
---i.e. the physical affine parameter for the physical geodesic must
blow up as $\Theta \rightarrow 0$. Thus, $\tilde{\gamma}$ never
reaches $\partial \tilde{\mathcal{M}}$ and the null geodesic must be
complete   ---see also the discussion in \cite{Ste91}, Chapter 3.

\subsection{Timelike geodesics} 
The argument used for null geodesics cannot readily be applied to the
discussion of timelike geodesics as these are not conformally
invariant. Instead, we make use of timelike conformal geodesics.

\medskip
Every timelike metric geodesic on the physical spacetime
$(\tilde{\mathcal{M}}, \tilde{\bmg})$ can be recast, after a
reparametrisation, as a conformal geodesics $(x, \tilde{\bmbeta})$
---see e.g. \cite{CFEBook}, Lemma 5.2 in page 131; also
\cite{FriSch87}. Under the rescaling $\bmg= \Theta^2 \tilde{\bmg}$,
the conformal geodesic $(x, \tilde{\bmbeta})$ transforms into a
geodesic $(x, \beta)$ in the unphysical spacetime $(\mathcal{M},
\bmg)$. Now, it is known that any $\bmg$-conformal geodesic that
leaves $\mathscr{I}^+$ orthogonally into the past, is up to a
reparametrisation, a timelike future complete geodesic for the
physical metric $\tilde{g}$ ---see
e.g. \cite{FriSch87,Fri17}. Moreover, a conformal geodesic through a
point of $\mathscr{I}^+$ which is not orthogonal to the conformal
boundary cannot represent a geodesic in the physical spacetime. 

Now, from the $\tilde{\bmg}$-future geodesic completeness of the
background solution (see Appendix \ref{Appendix:GeodesicCompleteness})
it follows that every conformal geodesic in the background spacetime
starting orthogonal to the initial hypersurface $\mathcal{S}_\star$
must reach the conformal boundary $\mathscr{I}^+$. Hence, every
timelike $\tilde{\bmg}$-geodesic is, up to a reparametrisation, a
timelike conformal curve reaching $\mathscr{I}^+$
orthogonally. Moreover, let us consider a pair
$(x(\tau),\tilde\bmbeta(\tau))$ with parameter
$\tau\in\mathbb{R}$. Furthermore, let us suppose that this geodesic
starts at $\tau=\tau_\star$, i.e. the initial hypersurface
$\mathcal{S}$, and it reaches the conformal boundary $\mathscr{I}^+$
at $\tau=1$. Now, consider a small perturbation of the quantities $(x,
\tilde\bmbeta)$ so that
\begin{eqnarray*} 
&&\hat{x}= x + \breve{x}, \\
&& \hat{\bmbeta}= \tilde\bmbeta + \breve\bmbeta,
\end{eqnarray*}
where $ \breve{x}$ and $\breve\bmbeta$ are small perturbations. In
this case, the perturbed conformal geodesic equations read 
\begin{eqnarray*}
&& \tilde{\nabla}_{\bmx'} (\bmx'+ \breve{\bmx}') = -2 \langle (\tilde\bmbeta + \breve\bmbeta),(\bmx'+ \breve{\bmx}')\rangle
   (\bmx'+ \breve{\bmx}') + \tilde{\bmg}((\bmx'+ \breve{\bmx}'),(\bmx'+ \breve{\bmx}')) (\tilde\bmbeta + \breve\bmbeta)^\sharp, \\
&& \tilde{\nabla}_{\bmx'} (\tilde\bmbeta + \breve\bmbeta) = \langle (\tilde\bmbeta + \breve\bmbeta),(\bmx' + \breve\bmx')\rangle
   (\tilde\bmbeta + \breve\bmbeta) - \frac{1}{2}\bmg^\sharp ((\tilde\bmbeta + \breve\bmbeta),(\tilde\bmbeta + \breve\bmbeta))
   (\bmx + \breve\bmx)^{\prime\flat} + \tilde{\bmL}((\bmx'+ \breve\bmx'),\cdot),
\end{eqnarray*}
where the metric, covariant derivative and Schouten tensor are those
obtained from the solution to the Einstein field equations given in
Corollary \ref{Corollary:SolutionsEFE}.  These equations can be read as a system of ordinary differential
equations for the fields $ \breve{x}$ and $\breve\bmbeta$. Because of
the smoothness of the perturbed spacetime it follows that one can make
use of the stability theory for ordinary differential equations ---see
e.g. \cite{Har87}, Theorem 2.1 in page 94 and Corollary 4.1 in page
101. In particular, these conformal geodesics will have the same
existence interval as those in the background spacetime. Accordingly,
it follows that $(\tilde{\mathcal{M}}, \tilde{\bmg})$ is future $\tilde{\bmg}$-geodesically complete.

\begin{remark}
{\em An alternative way of concluding the future geodesic completeness
of the solutions to the Einstein field equations provided by Corollary
\ref{Corollary:SolutionsEFE} is to make use of the theory in
\cite{ChoCot02} ---see also Appendix
\ref{Appendix:GeodesicCompleteness}. By choosing the $\varepsilon>0$
in condition \eqref{SizeData} of Proposition
\ref{Proposition:ExistenceConformalEvolution} sufficiently small, it
can be shown that the physical metric $\tilde\bmg$ satisfies the
bounds required to shown geodesic completeness.  
}
\end{remark}

\section{The main result}
\label{Section:MainTheorem}

We summarise the discussion of the preceding sections with a more
detailed formulation of the main result of this article:

\begin{theorem}
\label{Theorem:Main}
Let $\hat{\mathbf{u}}_\star =\mathring{\mathbf{u}}_\star +\breve{\mathbf{u}}_\star$
denote smooth initial data for the conformal evolution equations satisfying
the conformal constraint equations on a hypersurface
$\mathcal{S}_\star$. Then, there exists $\varepsilon>0$ such that if
\[
||\breve{\mathbf{u}}_\star||_{\mathcal{S}_\star,m} < \varepsilon, \qquad
m\geq 4
\]
then there exists a unique $C^{m-2}$ solution $\tilde{\bmg}$ to the vacuum Einstein field equation with
positive Cosmological constant over $[\tau_\star,\infty)\times
\mathcal{S}_\star$ for $\tau_\star>0$ which is future geodesically
complete and whose restriction to $\mathcal{S}_\star$ implies the
initial data $\hat{\mathbf{u}}_\star$. Moreover, the solution
$\hat{\mathbf{u}}$ remains suitably close (in the Sobolev norm
$\parallel \cdot \parallel_{\mathcal{S},m}$) to the background solution $\mathring{\bmu}$.
\end{theorem}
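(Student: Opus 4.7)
The plan is to assemble the results established in the preceding sections into a single coherent statement. First, given initial data $\hat{\mathbf{u}}_\star = \mathring{\mathbf{u}}_\star + \breve{\mathbf{u}}_\star$ satisfying the conformal constraint equations with $\|\breve{\mathbf{u}}_\star\|_{\mathcal{S}_\star, m} < \varepsilon$ for $\varepsilon$ suitably small and $m\geq 4$, I would apply Proposition \ref{Proposition:ExistenceConformalEvolution} to obtain a unique $C^{m-2}$ solution $\breve{\mathbf{u}}$ of the conformal evolution system \eqref{he5} on $[\tau_\star, \tfrac{3}{2}) \times \mathcal{S}_\star$, together with the Cauchy stability assertion that $\breve{\mathbf{u}}$ remains uniformly small in $\|\cdot\|_{\mathcal{S}, m}$. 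This provides a candidate solution $\hat{\mathbf{u}} = \mathring{\mathbf{u}} + \breve{\mathbf{u}}$ which stays close to the background for the whole existence interval.

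Second, since the initial data satisfy the conformal constraint equations on $\mathcal{S}_\star$, all the zero-quantities $\hat{\Sigma}_\bma{}^\bmc{}_\bmb$, $\hat{\Xi}^\bmc{}_{\bmd\bma\bmb}$, $\hat{\Delta}_{\bma\bmb\bmc}$, $\hat{\Lambda}_{\bma\bmb\bmc}$, $\delta_\bma$, $\gamma_{\bma\bmb}$, $\varsigma_{\bma\bmb}$ vanish initially. Invoking the propagation-of-the-constraints proposition and using the homogeneity of the subsidiary symmetric hyperbolic system \eqref{SubsidaryEqnFirst}--\eqref{SubsidaryEqnLast} in these zero-quantities together with Kato's uniqueness, these zero-quantities must vanish throughout the region where the subsidiary system is well defined, namely $[\tau_\star, 1) \times \mathcal{S}_\star$, where $\Theta>0$. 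Hence $\hat{\mathbf{u}}$ is a genuine solution of the extended conformal Einstein field equations on this slab, and by Corollary \ref{Corollary:SolutionsEFE} the rescaled metric $\tilde{\bmg} = \Theta^{-2} \bmg$ solves the vacuum Einstein equations with $\lambda=3$ on the same domain.

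Third, I would convert the conformal existence interval $[\tau_\star, 1)$ into the physical existence interval $[\tau_\star,\infty)$ claimed in the statement: although the conformal time interval is finite, the divergence of $\Theta^{-2}$ as $\tau\to 1$ combined with the structure of the background solution \eqref{BackgroundPhysicalMetric} ensures that in terms of the physical proper time the solution extends to infinity. The identification of the Cauchy data with $\hat{\mathbf{u}}_\star$ follows from the algebraic procedure relating conformal to physical data recalled in Section \ref{Section:InitialData}. Future geodesic completeness is then inherited from Section \ref{Section:GeodesicCompleteness}: null geodesics of $\tilde{\bmg}$ coincide up to reparametrisation with null geodesics of $\bmg$ and their physical affine parameter $\tilde{v}\sim -1/v$ diverges at $\mathscr{I}^+$; timelike geodesics are recast as conformal geodesics whose perturbations from the background are controlled by standard ODE stability, using that background conformal geodesics reach $\mathscr{I}^+$ orthogonally in finite conformal parameter.

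The most delicate link in this assembly is the propagation of the constraints in the presence of the potentially singular $\lambda\Theta^{-2}$ terms appearing in the subsidiary evolution equation \eqref{SubsidiaryGauge2} for $\gamma_{\bma\bmb}$; however this causes no difficulty because the propagation is only ever required on the set $\{\Theta>0\}$, which coincides precisely with the domain of existence of the physical metric and with the region where the subsidiary symmetric hyperbolic system has smooth coefficients. A secondary technical point is the patching argument needed because Kato's theorem is stated on $\mathbb{R}^3$ while $\mathcal{S}$ is compact; this is handled exactly as in the proof of Proposition \ref{Proposition:ExistenceConformalEvolution} via a finite atlas, localised extensions, and uniqueness on overlaps, yielding a uniform lower bound on the conformal existence time.
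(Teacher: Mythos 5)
Your proposal is correct and follows essentially the same route as the paper, which presents Theorem \ref{Theorem:Main} precisely as a summary assembling Proposition \ref{Proposition:ExistenceConformalEvolution} (existence, uniqueness and Cauchy stability of the conformal evolution system), the propagation-of-the-constraints proposition together with Corollary \ref{Corollary:SolutionsEFE} (recovery of a vacuum Einstein solution where $\Theta>0$), and the geodesic-completeness arguments of Section \ref{Section:GeodesicCompleteness}. Your additional remarks on the harmlessness of the $\lambda\Theta^{-2}$ terms away from the conformal boundary and on the compact-manifold patching of Kato's theorem match the corresponding remarks in the paper.
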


\begin{remark}
{\em It follows from Proposition \ref{Proposition:ExistenceID} that
  there exists an open set of initial data for the Einstein field
  equations satisfying the hypothesis of the above theorem.}
\end{remark}

\appendix

\section{On Kato's existence and stability result for symmetric
  hyperbolic systems}
\label{Appendix:KatoThm}

In this appendix we make some remarks concerning the hypothesis in
Kato's existence, uniqueness and stability result for symmetric
hyperbolic equations in \cite{Kat75}. The results in this reference
and, in particular the main Theorem II, are very general and presented
in an abstract manner. This abstract presentation hinders the direct
applicability of the theory. The purpose of this Appendix is to
provide a guide to the use of this theorem and to verify that the main
evolution system in this article satisfies the hypothesis of the
result. 

\medskip
Kato's theory is concerned with symmetric hyperbolic systems in which
the unknown $\mathbf{u}$ is regarded as a $\mathcal{P}$-valued
function over $\mathbb{R}^m$ where $\mathcal{P}$ is a Hilbert
space. The Hilbert space can be real or complex and, in fact, infinite
dimensional. In the present article we are interested in the case
where $\mathcal{P}$ is finite dimensional ---say, of dimension
$N$. In this case the symmetric hyperbolic system becomes a
\emph{standard} partial differential equation. For concreteness we set
here $\mathcal{P}=\mathbb{R}^N$ and $m=3$. The following discussion of Kato's
theorem will be made with this particular choice. in mind.

\medskip
Kato's theorem is concerned with ($N$-dimensional) symmetric hyperbolic quasilinear
systems of the form
\begin{equation}
\mathbf{A}^0(t,\underline{x},\mathbf{u}) \partial_t \mathbf{u} + \mathbf{A}^\alpha (t,\underline{x},\mathbf{u}) \partial_\alpha\mathbf{u} = \mathbf{F}(t,\underline{x},\mathbf{u}). 
\label{SHSQuasilinearSystem}
\end{equation}
for $0\leq t\leq T$, $\underline{x}\in\mathbb{R}^3$, $\alpha=1,\, 2,\,3$, and initial conditions
\begin{equation}
\mathbf{u}(0,x) = \mathbf{u}_\star(\underline{x}).
\label{QLSHSIC}
\end{equation}
In Kato's theory it is convenient to regard the coefficients
$\mathbf{A}^0(t,\underline{x},\mathbf{u})$ and $\mathbf{A}^\alpha
(t,\underline{x},\mathbf{u})$ as non-linear
operators depending on $t$ sending  $\mathbb{R}^N$-valued functions (i.e. the vector $\mathbf{u}$) over $\mathbb{R}^3$
into $(N\times N)$-matrix valued functions on
$\mathbb{R}^3$ ---in Kato's terminology these are the elements of
$\mathcal{B}(\mathcal{P})$, the space of bounded linear operators over
$\mathcal{P}$. Similarly, $\mathbf{F}(t,\underline{x},\mathbf{u})$ is
regarded as an non-linear operator depending on $t$ sending
$\mathbb{R}^N$-valued functions on $\mathbb{R}^3$ into
$\mathbb{R}^N$-valued functions on $\mathbb{R}^3$.

\medskip
Consider now $H^s(\mathbb{R}^3,\mathbb{R}^N)$, the space of
$(\mathbb{R}^N)$-vector valued functions over $\mathbb{R}^3$ such that
their entries have finite Sobolev norm of order $s$.  Let $\mathcal{D}$ be a bounded open subset of
$H^s(\mathbb{R}^3,\mathbb{R}^N)$. Writing
\[
\mathbf{A}^\mu(t,\underline{x},\mathbf{u}) =\big(
a^\mu_{ij}(t,\underline{x},\mathbf{u})\big), \qquad
\mathbf{F}(t,\underline{x},\mathbf{u}) = \big( f_i
(t,\underline{x},\mathbf{u})\big), \qquad i,\, j =1,\ldots N, \qquad \mu=0,\dots,3,
\]
one has that for fixed $t$ and $\mathbf{u}\in \mathcal{D}$ 
\begin{eqnarray*}
&& a^\mu_{ij}(t,\underline{x},\mathbf{u}): \mathbb{R}^m
   \rightarrow \mathbb{R}, \\
&& f_i (t,\underline{x},\mathbf{u}): \mathbb{R}^m
   \rightarrow \mathbb{R}.
\end{eqnarray*}

Key in Kato's analysis are the \emph{uniformly local Sobolev spaces}
$H^s_{ul}$. Let $C^\infty_0(\mathbb{R}^3,\mathbb{R})$ denote the sets
of smooth functions of compact support from $\mathbb{R}^3$ to
$\mathbb{R}$. Given any non-zero $\phi\in
C^\infty_0(\mathbb{R}^3,\mathbb{R})$ not identically zero, then
$\mathbf{u}\in H^s_{ul}$  if and only if 
\[
\sup_{\underline{x}\in \mathbb{R}^3} \parallel \phi_x
\mathbf{u}\parallel_s <\infty, \qquad \phi_x(\underline{y})\equiv \phi(\underline{y}-\underline{x}). 
\]

\begin{remark}
{\em In other words, the vector-valued function $\mathbf{u}$ is in
$H^s_{ul}$ if its Sobolev norm of order $s$ over any compact set over
$\mathbb{R}^3$ is finite and remains finite as one considers larger
and larger compact sets on $\mathbb{R}^3$.}
\end{remark}

\begin{remark}
{\em The spaces $H^s_{ul}$ satisfy nice embedding properties analogous
to those of $H^s$ ---see Lemma 2.7 in \cite{Kat75}.}
\end{remark}

In the following it will be assumed that, for fixed $t$ and $\mathbf{u}\in\mathcal{D}$, the
coefficients $a^\mu_{ij}(t,\underline{x},\mathbf{u}(\underline{x}))$
are functions from $\mathcal{D}$ to
$H^s_{ul}(\mathbb{R}^3,\mathbb{R})$. For $f_i
(t,\underline{x},\mathbf{u}(\underline{x}))$ one has the more relaxed
condition of being a function from $\mathcal{D}$ to
$H^s(\mathbb{R}^3,\mathbb{R})$. In Kato's more abstract terminology
this is equivalent to requiring that  $\mathbf{A}^\mu$ is a function
from $\mathcal{D}$ to
$H^s_{ul}(\mathbb{R}^3,\mathcal{B}(\mathcal{P}))$ and $\mathbf{F}$
from $\mathcal{D}$ to $H^s(\mathbb{R}^3,\mathcal{P})$. 

\medskip
One has the following reformulation of Theorem II in \cite{Kat75}:

\begin{theorem}
\label{Theorem:KatoII}
Let $s$ be a positive integer such that $s>3/2 +1=5/2$. Let
$\mathbf{A}^\mu(t,\underline{x},\mathbf{v}(\underline{x}))$,
$\mathbf{F}(t,\underline{x},\mathbf{v}(\underline{x}))$ and $\mathbf{v}\in\mathcal{D}$ as above
with $0\leq t\leq T$. Assume that the following conditions
hold:
\begin{itemize}
\item[(i)] The components
  $a^\mu_{ij}(t,\underline{x},\mathbf{v}(\underline{x}))$
  (respectively, $f_i
(t,\underline{x},\mathbf{v}(\underline{x}))$) are bounded in the
$H^s_{ul}$-norm (respectively $H^s$-norm) for $\mathbf{v}\in
\mathcal{D}$, uniformly in $t$.
 
\item[(ii)] For each $t$, the map $\mathbf{v}(\underline{x})\mapsto
  \mathbf{A}^\alpha(t,\underline{x},\mathbf{v}(\underline{x}))$ is
  uniformly Lipschitz continuous on $\mathcal{D}$ from the $H^0$-norm
  to the $H^0_{ul}$-norm, uniformly in $t$. Similarly, the map
  $\mathbf{v}(\underline{x})\mapsto
  \mathbf{F}(t,\underline{x},\mathbf{v}(\underline{x}))$ is Lipschitz
  continuous from the $H^0$-norm to the $H^0$-norm, again uniformly in
  $t$. 

\item[(iii)] The map $\mathbf{v}(\underline{x})\mapsto
  \mathbf{A}^0(t,\underline{x},\mathbf{v}(\underline{x}))$ is
  Lipschitz continuous on $\mathcal{D}$ from the $H^{s-1}$-norm to the
  $H^{s-1}_{ul}$-norm, uniformly in $t$. 

\item[(iv)] The maps
  $t\mapsto
  \mathbf{A}^\alpha(t,\underline{x},\mathbf{v}(\underline{x}))$ are
  continuous in the $H^0_{ul}$-norm for each
  $\mathbf{v}\in\mathcal{D}$. Similarly, the map $t\mapsto
  \mathbf{F}(t,\underline{x},\mathbf{v}(\underline{x}))$ is continuous
  in the $H^0$-norm for each $\mathbf{v}\in\mathcal{D}$. 

\item[(v)] The map $t\mapsto
  \mathbf{A}^0(t,\underline{x},\mathbf{v}(\underline{x}))$ is
  Lipschitz-continuous on $[0,T]$ in the $H^{s-1}_{ul}$-norm,
  uniformly for $\mathbf{v}\in\mathcal{D}$. 

\item[(vi)] For each $\mathbf{v}\in \mathcal{D}$ the matrix-valued
    functions
    $\mathbf{A}^\mu(t,\underline{x},\mathbf{v}(\underline{x}))$ are
    symmetric for each $(t,\underline{x})\in [0,T]\times
    \mathbb{R}^m$. 

\item[(vii)] The matrix
  $\mathbf{A}^0(t,\underline{x},\mathbf{v}(\underline{x}))$ is
  positive definite with eigenvalues larger that, say, $1$ for each
  $(t,\underline{x})$ and each $\mathbf{v}\in\mathcal{D}$. 

\item[(viii)] $\mathbf{u}_\star\in \mathcal{D}$.
\end{itemize}
Then there is a unique solution $\mathbf{u}$ to
\eqref{SHSQuasilinearSystem}-\eqref{QLSHSIC} defined on $[0,T']$ where
$0<T'\leq T$ such that
\[
\mathbf{u} \in C[0,T';\mathcal{D}]\cup C^1[0,T'; H^{s-1}(\mathbb{R}^3,\mathbb{R}^N)],
\]
where $T'$ can be chosen common to all initial conditions
$\mathbf{u}_\star$ in a suitably small condition of a given point in
$\mathcal{D}$. 
\end{theorem}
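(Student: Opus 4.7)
The plan is to prove the theorem via a Picard-type iteration scheme combined with energy estimates in the uniformly local Sobolev spaces $H^s_{ul}$, following the standard approach to quasilinear symmetric hyperbolic systems pioneered in \cite{Kat75}. The argument naturally splits into three parts: a linear theory for frozen coefficients, a nonlinear iteration using this linear theory, and a uniqueness/continuous-dependence step.

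First, for a fixed \emph{reference field} $\mathbf{v}\in C([0,T];\mathcal{D})$, I would solve the linear symmetric hyperbolic problem
\[
\mathbf{A}^0(t,\underline{x},\mathbf{v}) \partial_t \mathbf{u} + \mathbf{A}^\alpha(t,\underline{x},\mathbf{v})\partial_\alpha \mathbf{u} = \mathbf{F}(t,\underline{x},\mathbf{v}), \qquad \mathbf{u}(0,\underline{x}) = \mathbf{u}_\star(\underline{x}).
\]
Using hypotheses (vi) (symmetry) and (vii) (positive definiteness of $\mathbf{A}^0$), existence of solutions follows either from Friedrichs mollification or from Hille–Yosida semigroup theory on the $\mathbf{A}^0$-weighted $L^2$ inner product, where (iv), (v) supply the required continuity in $t$. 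The fundamental $L^2$ energy inequality is obtained by pairing the equation with $\mathbf{u}$ and integrating by parts, in which form the divergence $\partial_\alpha \mathbf{A}^\alpha(t,\underline{x},\mathbf{v})$ contributes only lower-order terms. Higher-order $H^s$-estimates follow by differentiating the equation $k\leq s$ times, commuting derivatives past the coefficients, and controlling the resulting commutators $[\partial^k,\mathbf{A}^\mu]$ by Moser-type inequalities; here hypotheses (i) and (iii) on the $H^s_{ul}$-regularity of the coefficients provide precisely the quantitative bounds required to close the estimate.

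Next I would set $\mathbf{u}^{(0)}\equiv \mathbf{u}_\star$ and define iteratively $\mathbf{u}^{(n+1)}$ as the solution to the linear problem of the previous step with $\mathbf{v}=\mathbf{u}^{(n)}$. By hypothesis (viii), $\mathbf{u}_\star\in\mathcal{D}$, and by taking $T'>0$ sufficiently small the $H^s$-energy estimate shows that the iterates remain in a fixed bounded neighbourhood of $\mathbf{u}_\star$ inside $\mathcal{D}$, uniformly in $n$. The Lipschitz hypotheses (ii) and (iii) then permit one to prove that the sequence is Cauchy in the weaker norm $C([0,T'];H^0)$: the difference $\mathbf{w}^{(n)}\equiv \mathbf{u}^{(n+1)}-\mathbf{u}^{(n)}$ satisfies a linear symmetric hyperbolic system whose coefficients are those frozen at $\mathbf{u}^{(n)}$ and whose right-hand side is bounded by $\|\mathbf{u}^{(n)}-\mathbf{u}^{(n-1)}\|_0$. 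A Gronwall argument then yields a contraction on $[0,T']$. Interpolation between the uniform $H^s$-bound and the $L^2$-convergence gives strong convergence in every $H^{s'}$ with $s'<s$, while weak compactness upgrades the limit to an element of $C([0,T'];\mathcal{D})\cap C^1([0,T'];H^{s-1})$. Uniqueness and continuous dependence on the initial data follow by applying the same pairwise energy identity to the difference of two solutions, using once more (vi)–(vii) together with the Lipschitz hypotheses; this immediately yields that $T'$ can be chosen uniformly on a neighbourhood of $\mathbf{u}_\star$ in $\mathcal{D}$.

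The main technical obstacle lies in the higher-order energy estimate in the linear step: since the coefficients $\mathbf{A}^\mu(t,\underline{x},\mathbf{v})$ are controlled only in the \emph{uniformly local} Sobolev space $H^s_{ul}$ rather than in $H^s$, one cannot invoke global Sobolev algebra or product estimates on the unbounded domain $\mathbb{R}^3$ directly. The resolution, which is the heart of Kato's framework, is to exploit the embedding $H^s_{ul}\hookrightarrow L^\infty$ valid for $s>3/2$, and to establish Moser/Kato–Ponce-type commutator estimates depending only on the $L^\infty$-norms of low derivatives of the coefficients together with their $H^s_{ul}$-norms. This decoupling between the local regularity of the coefficients and the global Sobolev regularity of the solution is what ultimately allows the theorem to be formulated on the unbounded spatial domain $\mathbb{R}^3$, and in turn is what permits its transfer, via partitions of unity, to the compact manifold $\mathcal{S}$ in the main body of the article.
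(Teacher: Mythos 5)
The paper never proves Theorem \ref{Theorem:KatoII}: it is a restatement of Theorem II of \cite{Kat75}, and the appendix's stated purpose is merely to translate Kato's abstract hypotheses into the concrete setting $\mathcal{P}=\mathbb{R}^N$, $m=3$. Your reconstruction must therefore be compared with Kato's original argument, and there you take a genuinely different, more concrete route. Kato works in the abstract framework of linear evolution equations in a pair of Banach spaces $X=H^0(\mathbb{R}^3,\mathbb{R}^N)$ and $Y=H^s(\mathbb{R}^3,\mathbb{R}^N)$ intertwined by the isomorphism $S=(1-\Delta)^{s/2}$; the frozen-coefficient linear problem is solved by his theory of stable families of generators of $C_0$-semigroups, and the quasilinear problem is then closed by a fixed-point argument on the map $\mathbf{v}\mapsto\mathbf{u}$ --- conditions (i)--(vii) of the statement are precisely what make the family of operators stable and $S$-compatible in that machinery. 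You instead run the classical energy-method proof (Friedrichs mollification, Moser-type commutator estimates, Picard iteration on the linearised problem, contraction in the low norm $C([0,T'];H^0)$), which is more elementary and self-contained but must handle the $H^s_{ul}$ coefficients by hand; your observation that $H^s_{ul}\hookrightarrow L^\infty$ for $s>3/2$ is the correct mechanism and is indeed the point of Kato's uniformly local spaces. Both routes are valid; Kato's abstract formulation buys the generality (infinite-dimensional $\mathcal{P}$, coefficient spaces of $H^s_{ul}$ type) that explains why the quoted hypotheses look the way they do, while yours is the proof a reader can check line by line. One step you should tighten: ``weak compactness upgrades the limit to an element of $C([0,T'];\mathcal{D})$'' is too quick --- the uniform $H^s$ bound together with convergence in $H^0$ yields only $L^\infty([0,T'];H^s)$ with \emph{weak} continuity in time; strong continuity in the top norm requires an additional argument (e.g.\ continuity of the $\mathbf{A}^0$-weighted energy norm combined with time reversal, or approximation of the initial data), a step that Kato's abstract linear theory delivers automatically but that a bare iteration scheme does not. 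Finally, note that the ``$\cup$'' in the displayed regularity conclusion should be ``$\cap$''; this is a typo inherited from the paper's statement, and your proof correctly produces the intersection.
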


In practice, the conditions of the above theorem are hard to
verify. Kato provides sufficient conditions ensuring that conditions
in the above theorem are satisfied (Theorem IV in \cite{Kat75}:

\begin{theorem}
\label{Theorem:KatoSufficient}
Suppose that $s>3/2+1=5/2$. Let $\Omega$ be the subset of
$\mathbb{R}^3\times \mathbb{R}^N$ consisting of pairs
$(\underline{x},\underline{v})$ such that 
\[
| v -v_\star(x)|<\omega, \qquad \underline{x}\in \mathbb{R}^3
 \]
 where $\omega>0$ and 
$v_\star\in H^s(\mathbb{R}^3,\mathbb{R}^N)\subset C^1(\mathbb{R}^3,\mathbb{R}^N)$
are fixed. Let, as before, 
\begin{eqnarray*}
&& \mathbf{A}^\mu: [0,T]\times\Omega \longrightarrow \mathcal{B}(\mathbb{R}^N), \\
&& \mathbf{F}: [0,T]\times\Omega \longrightarrow \mathbb{R}^N,
\end{eqnarray*}
where $\mathcal{B}(\mathbb{R}^N)$ denotes the set of ($N\times
N$)-matrix valued functions over $\mathbb{R}^3$ with the properties
\begin{itemize}
\item[(a)] $\mathbf{A}^\alpha \in C[0,T;
  C^s_b(\Omega,\mathcal{B}(\mathbb{R}^N))]$, 
 \item[(b)] $\mathbf{A}^0\in   \mbox{\em Lip}[0,T;C^{s-1}_b(\Omega,\mathcal{B}(\mathbb{R}^N))]$,
\item[(c)] $\mathbf{F}\in C[0,T; C^{s+1}_b(\Omega,\mathbb{R}^N)]$,
\item[(d)] $\mathbf{F}_\star\in L^\infty[0,T; H^s(\mathbb{R}^3,\mathbb{R}^N)]\cap C[0,T; H^0(\mathbb{R}^3,\mathbb{R}^N)]$, 
\end{itemize}  
where $\mathbf{F}_\star(t,\underline{x})\equiv
\mathbf{F}(t,\underline{x},v_\star(\underline{x}))$. Then conditions
\emph{(i)}-\emph{(v)} in Theorem \ref{Theorem:KatoII} are satisfied
by $\mathbf{A}^\mu$, $\mathbf{F}$ provided that $\mathcal{D}$ is
chosen as a ball in $H^s(\mathbb{R}^3,\mathbb{R}^N)$ with $v_\star$ as
center and a sufficiently small radius $R_\star$. In addition,
\emph{(ix)} is satisfied if \emph{(a)} is assumed to hold with $s$
replaced by $s+1$. 
 \end{theorem}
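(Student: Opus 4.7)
The goal is to verify conditions (i)--(v) (and the strengthened (ix)) of Theorem \ref{Theorem:KatoII} starting from the concrete differentiability hypotheses (a)--(d) on $\mathbf{A}^\mu$ and $\mathbf{F}$. The plan has three stages: first, set up the domain $\mathcal{D}$ and show admissibility of compositions; second, establish the key Moser-type composition estimates adapted to the uniformly local Sobolev spaces $H^s_{ul}$; third, check each of conditions (i)--(v) and (ix) in turn using the previous two stages.

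\emph{Stage 1 (setup).} Take $\mathcal{D}$ to be the open ball in $H^s(\mathbb{R}^3,\mathbb{R}^N)$ of radius $R_\star$ centered at $v_\star$. Since $s>5/2>3/2$ the Sobolev embedding $H^s\hookrightarrow C^0_b\cap L^\infty$ gives a constant $C_s$ with $\|\mathbf{v}-v_\star\|_{L^\infty}\le C_s\|\mathbf{v}-v_\star\|_{H^s}$ for $\mathbf{v}\in\mathcal{D}$. Choosing $R_\star<\omega/C_s$ ensures $(\underline{x},\mathbf{v}(\underline{x}))\in\Omega$ for every $\underline{x}\in\mathbb{R}^3$, so the compositions $\mathbf{A}^\mu(t,\underline{x},\mathbf{v}(\underline{x}))$ and $\mathbf{F}(t,\underline{x},\mathbf{v}(\underline{x}))$ are well defined. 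Note also that $\mathcal{D}$ is bounded in $H^s$, hence by Sobolev embedding uniformly bounded in $C^1$.

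\emph{Stage 2 (composition estimates).} The central technical inputs are two Moser-type inequalities, which I would invoke from standard references (e.g.\ Taylor's \emph{Partial Differential Equations III}, or the appendix of Kato \cite{Kat75}):
\begin{itemize}
\item[(M1)] If $G\in C^r_b(\Omega,\mathbb{R})$ and $\mathbf{v}\in H^r(\mathbb{R}^3,\mathbb{R}^N)\cap L^\infty$ with $(\underline{x},\mathbf{v}(\underline{x}))\in\Omega$, and if $\chi\in C^\infty_0$, then $\chi\cdot G(\cdot,\mathbf{v}(\cdot))\in H^r$ with norm bounded by $C(\|\mathbf{v}\|_{L^\infty})\|G\|_{C^r_b}(1+\|\mathbf{v}\|_{H^r})$, uniformly in translates of $\chi$. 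In particular $G(\cdot,\mathbf{v}(\cdot))\in H^r_{ul}$.
\item[(M2)] The same estimate applies to differences: for $\mathbf{v},\mathbf{w}\in\mathcal{D}$,
\[
\|G(\cdot,\mathbf{v})-G(\cdot,\mathbf{w})\|_{H^r_{ul}}\le C\bigl(\|G\|_{C^{r+1}_b},\|\mathbf{v}\|_{H^r},\|\mathbf{w}\|_{H^r}\bigr)\,\|\mathbf{v}-\mathbf{w}\|_{H^r}.
\]
\end{itemize}
Both (M1) and (M2) are proved by localizing with a cutoff $\phi_x$, applying the chain rule to expand derivatives of the composition into a sum of products of $G$-derivatives and derivatives of $\mathbf{v}$, and estimating the resulting products with the Gagliardo--Nirenberg inequality. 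The embedding $L^2\hookrightarrow L^2_{ul}=H^0_{ul}$ (with $\|\cdot\|_{H^0_{ul}}\le C\|\cdot\|_{L^2}$) will also be used.

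\emph{Stage 3 (verification of (i)--(v) and (ix)).} Each condition reduces to applying (M1)/(M2) with an appropriate choice of $r$:
\begin{itemize}
\item Condition (i): apply (M1) with $r=s$ and $G=a^\mu_{ij}$, using (a) for $\mathbf{A}^\alpha$ and (b) for $\mathbf{A}^0$, to get a uniform $H^s_{ul}$-bound on $\mathcal{D}$. For $\mathbf{F}$, split $\mathbf{F}(t,\cdot,\mathbf{v})=\mathbf{F}_\star(t,\cdot)+[\mathbf{F}(t,\cdot,\mathbf{v})-\mathbf{F}_\star(t,\cdot)]$; the first term is bounded in $H^s$ by (d), and the second is bounded in $H^s$ by applying (M2) with $r=s$ using (c) together with $\mathbf{v}-v_\star\in H^s$.
\item Condition (ii): apply (M2) with $r=0$; since $C^1_b$-control on $\mathbf{A}^\alpha$ and $\mathbf{F}$ follows from (a) and (c), a mean-value argument gives Lipschitz continuity in the required norms.
\item Condition (iii): apply (M2) with $r=s-1$, using (b) (which supplies $C^{s-1}_b$-regularity of $\mathbf{A}^0$).
\item Conditions (iv), (v): use continuity (resp.\ Lipschitz) of $t\mapsto\mathbf{A}^\mu(t,\cdot,\cdot)$ in $C^s_b$ or $C^{s-1}_b$ from (a)/(b)/(c), together with the estimate that $\|\mathbf{A}^\mu(t,\cdot,\mathbf{v})-\mathbf{A}^\mu(t',\cdot,\mathbf{v})\|_{H^r_{ul}}$ is bounded by $\|\mathbf{A}^\mu(t,\cdot,\cdot)-\mathbf{A}^\mu(t',\cdot,\cdot)\|_{C^r_b}(1+\|\mathbf{v}\|_{H^r})^{k(r)}$. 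For $\mathbf{F}$, combine the $t$-continuity in $C^{s+1}_b$ from (c) (for the $\mathbf{v}$-dependent part) with the $H^0$-continuity of $\mathbf{F}_\star$ supplied by (d).
\item Condition (ix) is exactly (iv) strengthened: replacing $s$ by $s+1$ in (a) upgrades the $C^s_b$-continuity in $t$ of $\mathbf{A}^\alpha$ to $C^{s+1}_b$-continuity, which via (M1) at level $s+1$ gives the additional regularity needed.
\end{itemize}

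The main obstacle is the careful use of the uniformly local Sobolev space $H^s_{ul}$: unlike $H^s$, it does not require decay at infinity, and one must verify that the cutoff-based estimates in (M1)/(M2) are uniform in the center of the cutoff. This is handled by noting that the $C^s_b$-norm is translation invariant while $\|\phi_x\cdot\mathbf{v}\|_{H^s}\le C\|\mathbf{v}\|_{H^s}$ uniformly in $x$, so the resulting bounds depend only on the $H^s$- and $C^s_b$-data and not on the translate. Once this is in hand, the verification of (i)--(v) and (ix) is a systematic application of (M1) and (M2) to the various pieces of $\mathbf{A}^\mu$ and $\mathbf{F}$.
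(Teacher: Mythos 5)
You should first be aware that the paper contains no proof of this statement at all: it is a verbatim reformulation of Theorem IV of \cite{Kat75}, invoked purely by citation, so the only meaningful benchmark is Kato's original argument. Measured against that, your overall strategy is the right one and is essentially Kato's: localisation by translated cutoffs, Moser-type composition and multiplication estimates adapted to the uniformly local spaces $H^s_{ul}$, and a term-by-term verification. Your Stage 1 (choosing $R_\star<\omega/C_s$ so that $\mathcal{D}$ maps into $\Omega$) is exactly the standard step, and your treatment of conditions (ii), (iv), (v), as well as of condition (i) for $\mathbf{F}$ via the split $\mathbf{F}=\mathbf{F}_\star+(\mathbf{F}-\mathbf{F}_\star)$ --- where (c) at level $C^{s+1}_b$ is precisely what your (M2) at $r=s$ consumes, and (d) supplies the $H^s$ bound on $\mathbf{F}_\star$ --- is coherent and correctly budgeted.

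There is, however, a genuine gap: two of your steps are defeated by your own lemmas. By your statements, (M1) at level $r$ requires $G\in C^r_b$ and (M2) at level $r$ requires $G\in C^{r+1}_b$, whereas hypothesis (b) supplies only $\mathbf{A}^0\in \mbox{Lip}[0,T;C^{s-1}_b(\Omega,\mathcal{B}(\mathbb{R}^N))]$. Hence your verification of condition (i) for $\mathbf{A}^0$ (an $H^s_{ul}$ bound, needing $C^s_b$ via (M1) with $r=s$) and of condition (iii) (Lipschitz continuity from the $H^{s-1}$-norm to the $H^{s-1}_{ul}$-norm, needing $C^s_b$ via (M2) with $r=s-1$) are each short by one derivative, and you apply them as if (b) sufficed. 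The loss is not cosmetic: in the chain-rule expansion of $D^{s-1}_x\bigl(\mathbf{A}^0(\underline{x},\mathbf{v}(\underline{x}))-\mathbf{A}^0(\underline{x},\mathbf{w}(\underline{x}))\bigr)$ the pure-$\underline{x}$ term $D^{s-1}_x\mathbf{A}^0(\cdot,\mathbf{v})-D^{s-1}_x\mathbf{A}^0(\cdot,\mathbf{w})$ can only be controlled through $\|D_vD^{s-1}_x\mathbf{A}^0\|_\infty\,\|\mathbf{v}-\mathbf{w}\|_{H^0}$, a mixed derivative of total order $s$ that $C^{s-1}_b$ does not bound, and the boundedness of $\mathcal{D}$ in $H^s$ does not repair this (the interpolation you would need still lands on order-$s$ regularity of $\mathbf{A}^0$). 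So you must either prove a genuinely sharper composition estimate or work with $\mathbf{A}^0\in\mbox{Lip}[0,T;C^s_b]$, which is the level at which your bookkeeping actually closes. Relatedly, your identification of ``(ix)'' with a strengthened (iv) is a guess the statement cannot support --- Theorem \ref{Theorem:KatoII} as transcribed lists only (i)--(viii) --- and the arithmetic of your own lemmas points the other way: (M2) at the top level $r=s$ is exactly what consumes $C^{s+1}_b$, i.e.\ ``(a) with $s$ replaced by $s+1$'', so (ix) is most naturally a Lipschitz-in-$\mathbf{v}$ condition at level $s$, not an upgraded $t$-continuity.
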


\begin{remark}
{\em The sets $C^r_b(\Omega,\mathcal{B}(\mathbb{R}^N))$ and
  $C^r_b(\Omega,\mathbb{R}^N)$ denote the spaces of functions having
  derivatives up to the $r$-th order which are continuous and bounded
  in the supremum norm. 
}
\end{remark}

\begin{remark}
{\em If the $\mathbf{A}^\mu$ are polynomials in $\underline{p}$ it
actually suffices that the coefficients only be in $C[0,T;H^s_{ul}]$
and also in $C^1[0,T; H^{s-1}_{ul}]$ for $\mathbf{A}^0$. } 
\end{remark}

\section{Geodesic completeness of the background solution}
\label{Appendix:GeodesicCompleteness}

The geodesic completeness of the metric
\eqref{BackgroundPhysicalMetric} can be shown using the theory
developed in \cite{ChoCot02} ---in particular, Corollary 3.3 in this
reference applies to the present situation. 

\medskip
More precisely, the theory in \cite{ChoCot02} applies to spacetimes
$(\mathcal{M},\bmg)$ such that $\mathcal{M}=[t_\bullet,\infty)\times
\mathcal{S}$ where $t_\bullet>0$ and $\mathcal{S}$ is a smooth
 3-dimensional manifold. The metric $\bmg$ has
the $3+1$ split
\[
\bmg = -\alpha^2 \bmomega^0\otimes\bmomega^0 + h_{\bmi\bmj} \bmomega^\bmi\otimes\bmomega^\bmj,
\]
with 
\[
\bmomega^0 =\mathbf{d}t, \qquad \bmomega^\bmi = \mathbf{d}x^\bmi
+\beta^\bmi \mathbf{d}t.
\]
There exist numbers $0<\alpha_-,\, \alpha_+$ such that
\[
0< \alpha_- \leq \alpha \leq \alpha_+. 
\]
The metric $\bmh \equiv h_{\bmi\bmj}\mathbf{d}x^\bmi \otimes
\mathbf{d}x^\bmj$ is a \emph{geodesically complete Riemannian metric}
on $\mathcal{S}_t \equiv \{t\} \times \mathcal{S}$ such that there
exists a constant $C_1>0$ such that
\[
C_1 h_{\bmi\bmj}(t_\bullet) v^\bmi v^\bmj \leq h_{\bmi\bmj}(t) v^\bmi v^\bmj
\]
for all vectors on $T\mathcal{S}$ and $t\in[t_\bullet,\infty)$. Furthermore, there exists another
constant $C_2$ such that 
\[
\beta_i \beta^i \leq C_2, \qquad t\in[t_\bullet,\infty). 
\]
In the following let $K_{ij}$ denote the extrinsic curvature of the
hypersurfaces $\mathcal{S}_t$, $K_{\{ij\}}$ is tracefree part and $K$
its trace. 

\medskip
With the above conditions, the metric $\bmg$ is future geodesically
complete if the following two conditions hold:
\begin{itemize}
\item[(i)] $D_i \alpha D^i\alpha $ is bounded by a function of
  $t$ which is integrable on $[t_\bullet,\infty)$;
\item[(ii)]  $K<0$ and $K_{ij}K^{ij}$ is integrable on
  $[t_\bullet,\infty)$. 
\end{itemize}

\medskip
The metric \eqref{BackgroundPhysicalMetric} can be readily seen to
satisfy the above conditions. In particular, as $\alpha=1$, the norm
of the spatial gradient of the lapse vanishes and, accordingly, it is
integrable ---this verifies condition (i) above. Moreover, the extrinsic curvature of the hypersurfaces of
constant $t$ is given by
\[
K_{ij} = -\sinh t \cosh t \mathring{\gamma}_{ij},
\]
so that it is pure trace. Moreover, one has that 
\[
K = -3 \coth t <0, \qquad t\in[t_\bullet,\infty).
\]
As $K_{\{ij\}} =0$ in this case one has that (ii) is also
satisfied. It follows then that the background metric
$\mathring{\tilde{\bmg}}$ is future geodesically complete.




\end{document}